\DeclareMathOperator{\Cost}{Cost}
\DeclareMathOperator{\Gain}{Gain}
\DeclareMathOperator{\CostOrGain}{f}
\DeclareMathOperator{\CostMin}{\Cost_{Min}}
\DeclareMathOperator{\GainMax}{\Gain_{Max}}
\DeclareMathOperator{\MaxCost}{MaxCost}
\DeclareMathOperator*{\argmax}{argmax}
\DeclareMathOperator{\Plays}{Plays}
\DeclareMathOperator{\Hist}{Hist}
\DeclareMathOperator{\Succ}{Succ}
\DeclareMathOperator{\First}{First}
\DeclareMathOperator{\Visit}{Visit}
\DeclareMathOperator{\SW}{SW}
\DeclareMathOperator{\Val}{Val}
\DeclareMathOperator{\ConstNE}{\Val}
\def\rest#1#2{#1_{\restriction#2}}
\def\outcome#1#2{\langle #1 \rangle_{#2}}
\def\arena{\mathcal{A}}
\def\extended{X}
\DeclareMathOperator{\extGame}{\mathcal{X}}
\def\extV{V^\extended}
\def\extE{E^\extended}
\def\extVi#1{V^\extended_{#1}}
\def\extFi#1{F^\extended_{#1}}
\def\extG{X}
\begin{document}
%
%\title{Contribution Title\thanks{Supported by organization x.}}
\title{On Relevant Equilibria in Reachability Games\thanks{Research partially supported by the PDR project “Subgame perfection in graph games” (F.R.S.-FNRS) and by  COST Action 16228 “GAMENET” (European Cooperation in Science and Technology).}}
%
%\titlerunning{Abbreviated paper title}
% If the paper title is too long for the running head, you can set
% an abbreviated paper title here
%
\author{Thomas Brihaye\inst{1} \and Véronique Bruyère \inst{1} \and Aline Goeminne \inst{1,2} \and Nathan Thomasset \inst{1,3}}
\authorrunning{T. Brihaye et al.}
% First names are abbreviated in the running head.
% If there are more than two authors, 'et al.' is used.
%
%\institute{Département de Mathématique, Université de Mons (UMONS), Belgium \and Département d'Informatique, Université de Mons (UMONS), Belgium
%\and Département d'Informatique, Université libre de Bruxelles (ULB), Belgium \and {ENS Paris-Saclay, Université Paris-Saclay, France}

\institute{ Université de Mons (UMONS), Belgium\\
 \and Université libre de Bruxelles (ULB), Belgium\\
\and ENS Paris-Saclay, Université Paris-Saclay, France}

%\email{lncs@springer.com}\\
%\url{http://www.springer.com/gp/computer-science/lncs} \and
%ABC Institute, Rupert-Karls-University Heidelberg, Heidelberg, Germany\\
%\email{\{abc,lncs\}@uni-heidelberg.de}}

%
\maketitle              % typeset the header of the contribution
\begin{abstract}

We study multiplayer reachability games played on a finite directed graph equipped with target sets, one for each player. In those reachability games, it is known that there always exists a Nash equilibrium (NE) and a subgame perfect equilibrium (SPE). But sometimes several equilibria may coexist such that in one equilibrium no player reaches his target set whereas in another one several players reach it. It is thus very natural to identify ``relevant'' equilibria. In this paper, we consider different notions of relevant equilibria including Pareto optimal equilibria and equilibria with  high social welfare. We provide complexity results for various related decision problems.

\keywords{ multiplayer non-zero-sum games played on graphs \and
reachability objectives \and
relevant equilibria \and
social welfare \and
Pareto optimality
}
\end{abstract}
%
%
%

%!TEX root=sv-main.tex

\section{Introduction}

\emph{Two-player zero-sum games played on graphs} are commonly used to model \emph{reactive systems} where a system interacts with its environment \cite{PnueliR89}. In such setting the system wants to achieve a goal - to respect a certain property - and the environment acts in an antagonistic way. The underlying game is defined as follows: the two players are the system and the environment, the vertices of the graph are all the possible configurations in which the system can be and an infinite path in this graph depicts a possible sequence of interactions between the system and its environment. In such a game, each player chooses a \emph{strategy}: it is the way he plays given some information about the game and past actions of the other player. Following a strategy for each player results in a \emph{play} in the game. Finding how the system can ensure that a given property is satisfied amounts to find, if it exists, a \emph{winning strategy} for the system in this game. For some situations, this kind of model is too restrictive and a setting with more than two agents such that each of them has his own not necessarily antagonistic objective is more realistic. These games are called \emph{multiplayer non zero-sum games}. In this setting, the solution concept of winning strategy is not suitable anymore and different notions of \emph{equilibria} can be studied.

In this paper, we focus on \emph{Nash equilibrium} (NE)~\cite{nash50}: given a strategy for each player, no player has an incentive to deviate unilaterally from his strategy. We also consider the notion of \emph{subgame perfect equilibrium} (SPE) well suited for games played on graphs \cite{osbornebook}. We study these two notions of equilibria on \emph{reachability games}. In reachability games, we equip each player with a subset of vertices of the graph game that he wants to reach. We are interested in both the \emph{qualitative} and \emph{quantitative} settings. In the qualitative setting, each player only aims at reaching his target set, unlike the quantitative setting where each player wants to reach his target set as soon as possible.

It is well known that both NEs and SPEs exist in both qualitative and quantitative reachability games. But, equilibria such that no player reaches his target set and equilibria such that some players reach it may coexist. This observation has already been made in~\cite{Ummels08,Ummels06}.   In such a situation, one could prefer the second situation to the first one. In this paper, we study different versions of \emph{relevant equilibria}.

\subsubsection{Contributions}

For quantitative reachability games, we focus on the following three kinds of relevant equilibria: \emph{constrained equilibria}, \emph{equilibria optimizing social welfare} and \emph{Pareto optimal equilibria}. For constrained equilibria, we aim at minimizing the cost of each player \emph{i.e.,} the number of steps it takes to reach his target set (Problem 1). For equilibria optimizing social welfare, a player does not only want to minimize his own cost, he is also committed to maximizing the \emph{social welfare} (Problem 2). For Pareto optimal equilibria, we want to decide if there exists an equilibrium such that the tuple of the costs obtained by players following this equilibrium is \emph{Pareto optimal} in the set of all the possible costs that players can obtain in the game (Problem 3). We consider the decision variant of Problems 1 and 2; and the qualitative adaptations of the three problems. 

Our main contributions are the following.\emph{(i)} We study the complexity of the three decision problems. Our results gathered with previous works are summarized in Table~\ref{tab:sum}.\emph{(ii)} We characterize a sufficient finite-memory to solve the three decision problems. Our results and others from previous works are given in Table~\ref{tab:sum}.\emph{(iii)} We identify a subclass of reachability games in which there always exists an SPE where each player reaches his target set.\emph{(iv)} Given a play, we provide a \emph{characterization} which guarantees that this play is the outcome of an NE. This characterization is based on the values in the associated two-player zero-sum games called \emph{coalitional games}.

\begin{center}
\begin{table}
	\label{tab:sum}
	\caption{Complexity classes and memory results}
	\centering
	\begin{tabular}{ccc}
		\scalebox{0.8}{\begin{tabular}{|c||c|c||c|c|}
		    \hline
		    \multirow{2}{*}{Complexity} & \multicolumn{2}{c||}{Qual. Reach.}   & \multicolumn{2}{|c|}{Quant. Reach.} \\ \hhline{~||--||--}
		    & NE                       & SPE      & NE             & SPE               \\ \hline\hline
		    Prob. 1                   & NP-c \cite{condurache_et_al:LIPIcs:2016:6256}                    & PSPACE-c\cite{DBLP:journals/corr/abs-1809-03888} & NP-c           & PSPACE-c\cite{DBLP:journals/corr/abs-1905.00784}         \\ \hhline{-||--||--}
		    Prob. 2                   & NP-c                     & PSPACE-c & NP-c           & PSPACE-c          \\ \hhline{-||--||--}
		    Prob. 3                   & NP-h/$\Sigma^P_2$ & PSPACE-c &   NP-h/$\Sigma^P_2$              &        PSPACE-c           \\  \hline
		  \end{tabular}}
		
		& &
		
		\scalebox{0.8}{\begin{tabular}{|c||c|c||c|c|}
		\hline
		\multirow{2}{*}{Memory} & \multicolumn{2}{c||}{Qual. Reach.}   & \multicolumn{2}{|c|}{Quant. Reach.} \\ \hhline{~||--||--} 
		                            & NE                       & SPE      & NE             & SPE               \\ \hline\hline
		Prob. 1                   & Poly.\cite{condurache_et_al:LIPIcs:2016:6256}                   & Expo.\cite{DBLP:journals/corr/abs-1809-03888} & Poly.       & Expo.          \\  \hhline{-||--||--}
		Prob. 2                   & Poly.                    & Expo. & Poly.         & Expo.        \\  \hhline{-||--||--}
		Prob. 3                   &Poly. & Expo.  &   Poly.             &     Expo.              \\ \hline
		\end{tabular}}
		\end{tabular}
\end{table}
\end{center}

\subsubsection{Related work}

There are many results on NEs and SPEs played on graphs, we refer the reader to~\cite{Bruyere17} for a survey and an extended bibliography. We here focus on the results directly related to our contributions. 

Regarding  Problem 1, for NEs,  it is shown NP-complete only in the qualitative setting~\cite{condurache_et_al:LIPIcs:2016:6256}; for SPEs it is shown PSPACE-complete in both the qualitative and quantitative settings in \cite{DBLP:journals/corr/abs-1809-03888,DBLP:journals/corr/abs-1905.00784,concur}. Notice that in~\cite{Ummels08}, variants of Problem 1 for games with Streett, parity or co-Büchi winning conditions are shown NP-complete and decidable in polynomial time for Büchi. 

Regarding Problem 2, in the setting of games played on matrices, deciding the existence of an NE such that the expected social welfare is at most $k$ is NP-hard \cite{cs-GT-0205074}. Moreover, in~\cite{BouyerMS14} it is shown that deciding the existence of an NE which maximizes the social welfare is undecidable in concurrent games in which a cost profile is associated only with terminal nodes.

Regarding Problem 3, in the setting of  zero-sum two-player multidimensional mean-payoff games, the \emph{Pareto-curve} (the set of maximal thresholds that a player can force) is studied in~\cite{DBLP:conf/cav/BrenguierR15} by giving some properties on the geometry of this set. The autors provide a $\Sigma^P_2$ algorithm to decide if this set intersects a convex set defined by linear inequations.

Regarding the memory, in~\cite{DBLP:conf/lfcs/BrihayePS13} it is shown that there always exists an NE with memory at most $|V|+|\Pi|$ in quantitative reachability games, without any constraint on the cost of the NE. It is shown in~\cite{Ummels06} that, in multiplayer games with $\omega$-regular objectives, there exists an SPE with a given payoff if and only if there exists an SPE with the same payoff but with finite memory. Moreover, in~\cite{DBLP:journals/corr/abs-1809-03888} it is claimed that it is sufficient to consider  strategies with an exponential memory to solve Problem 1 for SPE in qualitative reachability games.

Finally, we can find several kinds of outcome characterizations for Nash equilibria and variants, \emph{e.g.,} in multiplayer games equipped with prefix-linear cost functions and such that the vertices in coalitional games have a value (summarized in \cite{Bruyere17}), in multiplayer games with prefix-independent Borel objectives \cite{Ummels08}, in multiplayer games with classical $\omega$-regular objectives (as reachability) by checking if there exists a play which satisfies an LTL formula \cite{condurache_et_al:LIPIcs:2016:6256}, in concurrent games \cite{haddad}, etc. Such characterizations are less widespread for subgame perfect equilibria, but one can recover one for quantitative reachability games thanks to a value-iteration procedure \cite{DBLP:journals/corr/abs-1905.00784}.

\subsubsection{Structure of the paper} Due to the lack of space, we decide to only detail results for quantitative reachability games while results for qualitative reachability games are only summarized in Table~\ref{tab:sum}. In Section~\ref{section:preliminaries}, we introduce the needed background and define the different studied problems. In Section~\ref{sec:existence}, we identify families of reachability games for which
there always exists a relevant equilibrium, for different notions of relevant equilibrium. In Section~\ref{sec:decision}, we provide the main ideas necessary to obtain our complexity results (see Table~\ref{tab:sum}). The detailed proofs for the quantitative reachability setting, together with additional results on qualitative reachability games are provided in the appendices.

\section{Preliminaries and studied problems}
\label{section:preliminaries}

\subsubsection{Arena, game and strategies} 
An \emph{arena} is a tuple $\arena = (\Pi,V,E,(V_i)_{i\in \Pi})$ such that: \emph{(i)} $\Pi$ is a finite set of players; \emph{(ii)} $V$ is a finite set of vertices; \emph{(iii)}  $E \subseteq V \times V$ is a set of edges such that for all $v \in V$ there exists $v'\in V$ such that $(v,v')\in E$ and \emph{(iv)} $(V_i)_{i\in \Pi}$ is a partition of $V$ between the players.

%\begin{definition}
%An \emph{arena} is a tupple $\arena = (\Pi,V,E,(V_i)_{i\in \Pi})$ such that:
%\begin{itemize}
%	\item $\Pi$ is a finite set of players; 
%	\item  $V$ is a finite set of vertices;
%	\item  $E \subseteq V \times V$ is a set of edges such that for all $v \in V$ there exists $v'\in V$ such that $(v,v')\in E$;
%	\item $(V_i)_{i\in \Pi}$ is a partition of $V$ between the players.
%\end{definition}

A \emph{play} in $\arena$ is an infinite sequence of vertices $\rho = \rho_0 \rho_1 \ldots$ such that for all $k \in \mathbb{N}$, $(\rho_k, \rho_{k+1}) \in E$.  A \emph{history} is a finite sequence $h = h_0h_1 \ldots h_k$ with $k \in\mathbb{N}$ defined similarly. The \emph{length} $|h|$ of $h$ is the number $k$ of its edges. We denote the set of plays by $\Plays$ and the set of histories by $\Hist$. Moreover, the set $\Hist_i$ is the set of histories such that their last vertex $v$ is a vertex of player $i$, i.e. $v \in V_i$. 

Given a play $\rho \in \Plays$ and $k \in \mathbb{N}$, the \emph{prefix} $\rho_0 \rho_1 \ldots \rho_k$ of $\rho$ is denoted by $\rho_{\leq k}$ and its \emph{suffix} $\rho_k \rho_{k+1} \ldots$ by $\rho_{\geq k}$. A play $\rho$ is called a \emph{lasso} if it is of the form $\rho = h\ell^\omega$ with $h\ell \in \Hist$. Notice that $\ell$ is not necessarily a simple cycle. The \emph{length} of a lasso $h\ell^\omega$ is the length of $h\ell$. 

A \emph{game} $\mathcal{G} = (\arena,(\Cost_i)_{i\in\Pi})$ is an arena equipped with a cost function profile $ (\Cost_i)_{i\in\Pi}$ such that for all $i \in \Pi$, $\Cost_i : \Plays \rightarrow \mathbb{N} \cup \{+\infty\}$ is a \emph{cost function}  which assigns a cost to each play $\rho$ for player $i$. We also say that the play $\rho$ has \emph{cost profile}  $(\Cost_i(\rho))_{i\in\Pi}$. Given two cost profiles $c,c' \in (\mathbb{N}\cup \{+\infty\})^{|\Pi|}$, we say that $c \leq c'$ if and only if for all $i \in \Pi$, $c_i \leq c'_i$.

An initial vertex $v_0\in V$ is often fixed, and we call $(\mathcal{G}, v_0)$ an \emph{initialized game}. A play (resp. a history) of $(\mathcal{G},v_0)$ is then a play (resp. a history) of $\mathcal{G}$ starting in $v_0$. The set of such plays (resp. histories) is denoted by $\Plays(v_0)$ (resp. $\Hist(v_0)$). The notation $\Hist_i(v_0)$ is used when these histories end in a vertex $v \in V_i$.

Given a game $\mathcal G$, a \emph{strategy} for player $i$ is a function $\sigma_i: \Hist_i \rightarrow V$. It assigns to each history $hv$, with $v \in V_i$, a vertex $v'$ such that $(v,v') \in E$. In an initialized game $(\mathcal{G},v_0)$, $\sigma_i$ needs only to be defined for histories starting in $v_0$. We denote by $\Sigma_i$ the set of strategies for Player $i$. A play $\rho=\rho_0\rho_1\ldots$ is \emph{consistent} with  $\sigma_i$ if for all $\rho_k \in V_i$, $\sigma_i(\rho_0 \ldots \rho_k) = \rho_{k+1}$. A strategy $\sigma_i$ is \emph{positional} if it only depends on the last vertex of the history, \emph{i.e.}, $\sigma_i(hv) = \sigma_i(v)$ for all $hv \in \Hist_i$.  It is \emph{finite-memory} if it can be encoded by a finite-state machine.

A \emph{strategy profile} is a tuple $\sigma = (\sigma_i)_{i\in \Pi}$ of strategies, one for each player. %It is called positional (resp. finite-memory) if for all $i \in \Pi$, $\sigma_i$ is positional (resp. finite-memory).  
Given an initialized game $(\mathcal{G}, v_0)$ and a strategy profile $\sigma$, there exists an unique play from $v_0$ consistent with each strategy $\sigma_i$. We call this play the \emph{outcome} of $\sigma$ and denote it by $\outcome{\sigma}{v_0}$. We say that $\sigma$ has cost profile $(\Cost_i(\outcome{\sigma}{v_0}))_{i\in\Pi}$.

%In this article we are interested in games such that each player has a target set of vertices that he wants to reach. We study two different settings: \emph{qualitative reachability games} where players only aim to reach their target set but not take into account the number of steps it takes and \emph{quantitative reachability games} where the cost to pay is equal to the number of edges to reach the target set, and each player aims at minimizing his cost. In order to study quantitative reachability games, we sometimes have to deal with \emph{two-player zero-sum reachability game} where one player wants to reach a target set and the other one wants to avoid it.

%---REACHABILITY GAMES
\subsubsection{Quantitative reachability games}

In this article, we are interested in \emph{reachability games}: each player has a target set of vertices that he wants to reach. 
   
\begin{definition} \label{def:quantitativeGame}
		A \emph{quantitative reachability game} $\mathcal{G} = (\mathcal{A}, (\Cost_i)_{i\in \Pi}, (F_i)_{i\in \Pi})$ is a game enhanced with a target set $F_i \subseteq V$ for each player $i \in \Pi$ and for all $i \in \Pi$ the cost function $ \Cost_i$ is defined as follows: for all $\rho= \rho_0\rho_1\ldots \in \Plays$: $\Cost_i(\rho) = k$ if  $k\in \mathbb{N}$ is the least index such that $\rho_k \in F_i$ and $\Cost_i(\rho) = +\infty$ if such index does not exist.
\end{definition}

 In \emph{quantitative reachability games}, players have to pay a cost equal to the number of edges until visiting their own target set or $+\infty$ if it is not visited. Thus each player aims at \emph{minimizing} his cost.

%---SOLUTION CONCEPTS
\subsubsection{Solution concepts}
\label{section:solutionConcepts}

In the multiplayer game setting, the solution concepts usually studied are \emph{equilibria}. We recall the concepts of Nash equilibrium and subgame perfect equilibrium.

Let $\sigma = (\sigma_i)_{i\in \Pi}$ be a strategy profile in an initialized game $(\mathcal{G},v_0)$. When we highlight the role of player~$i$, we denote $\sigma$ by $(\sigma_i, \sigma_{-i})$ where $\sigma_{-i}$ is the profile $(\sigma_j)_{j\in \Pi \setminus \{i\}}$. A strategy $\sigma'_i \neq \sigma_i$ is a \emph{deviating} strategy of Player~$i$, and it is a \emph{profitable deviation} for him if $\Cost_i(\outcome{\sigma}{v_0}) > \Cost_i(\outcome{\sigma'_i, \sigma_{-i}}{v_0})$.

The notion of Nash equilibrium is classical: a strategy profile $\sigma$ in an initialized game $(\mathcal{G},v_0)$ is a \emph{Nash equilibrium} (NE) if no player has an incentive to deviate unilaterally from his strategy, i.e. no player has a profitable deviation.

\begin{definition}[Nash equilibrium]
	Let $(\mathcal{G},v_0)$ be an initialized quantitative reachability game. The strategy profile $\sigma$ is an NE if for each $i \in \Pi$ and each deviating strategy $\sigma'_i$ of Player $i$, we have $\Cost_i(\outcome{\sigma}{v_0}) \leq \Cost_i(\outcome{\sigma'_i, \sigma_{-i}}{v_0})$. 
\end{definition}

When considering games played on graphs, a useful refinement of NE is the concept of \emph{subgame perfect equilibrium} (SPE) which is a strategy profile that is an NE in each subgame. Formally, given a game ${\mathcal G} = (\arena,(\Cost_i)_{i\in\Pi})$, an initial vertex $v_0$, and a history $hv \in \Hist(v_0)$, the initialized game $(\rest{\mathcal{G}}{h},v)$ such that $\rest{\mathcal{G}}{h} = (\arena, (\Cost_{i\restriction h})_{i\in\Pi})$ where $\Cost_{i\restriction h}(\rho) = \Cost_i(h\rho)$ for all $i \in \Pi$ and $\rho \in V^{\omega}$ is called a \emph{subgame} of $(\mathcal{G},v_0)$. Notice that $(\mathcal{G},v_0)$ is a subgame of itself. Moreover if $\sigma_i$ is a strategy for player~$i$ in $(\mathcal{G},v_0)$, then $\sigma_{i\restriction h}$ denotes the strategy in $(\rest{\mathcal{G}}{h},v)$ such that for all histories $h'\in \Hist_i(v)$, $\sigma_{i\restriction h}(h') = \sigma_i(hh')$. Similarly, from a strategy profile $\sigma$ in $(\mathcal{G},v_0)$, we derive the strategy profile $\rest{\sigma}{h}$ in $(\rest{\mathcal{G}}{h},v)$. %The play $\outcome{\rest{\sigma}{h}}{v}$ is called the \emph{subgame outcome} of $\sigma$ in the subgame $(\rest{\mathcal{G}}{h},v)$. \fbox{utilisé ?}

\begin{definition}[Subgame perfect equilibrium]
	Let $(\mathcal{G},v_0)$ be an initialized game. A strategy profile $\sigma$ is an SPE in $(\mathcal{G},v_0)$ if for all $hv \in \Hist(v_0)$, $\rest{\sigma}{h}$ is an NE in $(\rest{\mathcal{G}}{h},v)$.
\end{definition}

Clearly, any SPE is an NE and it is stated in Theorem 2.1 in \cite{DBLP:journals/corr/abs-1205-6346} that there always exists an SPE (and thus an NE) in quantitative reachability games.

% \begin{theorem}[\cite{DBLP:journals/corr/abs-1205-6346}] \label{thm:SPEexist}
% In every initialized quantitative reachability game, there always exists an SPE, and thus also an NE.
% \end{theorem}

%---STUDIED PROBLEMS
\subsubsection{Studied problems}

We conclude this section with the problems studied in this article. Let us first recall the concepts of social welfare and Pareto optimality. Let $(\mathcal{G},v_0)$ be an initialized  quantitative reachability game with $\mathcal{G} = (\mathcal{A}, (\Cost_i)_{i\in \Pi}, (F_i)_{i\in \Pi})$.  Given $\rho = \rho_0\rho_1\ldots \in \Plays(v_0)$,  we denote by $\Visit(\rho)$ the set of players  who visit their target set along $\rho$, \emph{i.e.,} $\Visit(\rho) = \{ i \in \Pi \mid \text{there exists } n \in \mathbb{N} \text{ st. } \rho_n\in F_i \}$.\footnote{We can easily adapt this definition to histories.} The \emph{social welfare} of $\rho$, denoted by $\SW(\rho)$, is the pair $(|\Visit(\rho)|,\sum_{i \in \Visit(\rho)} \Cost_i(\rho))$. Note that it takes into account both the number of players who visit their target set and their accumulated cost to reach those sets. Finally, let $P = \{ (\Cost_i(\rho))_{i \in \Pi} \mid \rho \in \Plays(v_0) \}   \subseteq (\mathbb{N} \cup \{+\infty\})^{|\Pi|}$. A cost profile $p \in P$ is \emph{Pareto optimal in $\Plays(v_0)$} if it is minimal in $P$ with respect to the componentwise ordering $\leq$ on $P$ \footnote{For convenience, we prefer to say that $p$ is Pareto optimal in $\Plays(v_0)$ rather than in $P$.}.

Let us now state the studied decision problems. The first two problems are classical: they ask whether there exists a solution (NE or SPE) $\sigma$ satisfying certain requirements that impose  bounds on either  $(\Cost_i(\outcome{\sigma}{v_0}))_{i \in \Pi}$ or on $\SW(\outcome{\sigma}{v_0})$. %In the statement of these problems, by solution $\sigma$ we mean a strategy profile $\sigma$ that is an NE or an SPE. 
%\begin{problem}[\Vero{??? existence}] 
%\label{prob:existenceAll}
%Does there always exist a solution $\sigma$ in $(\mathcal{G},v_0)$ such that its outcome $\outcome{\sigma}{v_0}$ visits all target sets $F_i$ reachable from $v_0$? 
%\end{problem}
%
%\begin{problem}[Pareto-optimal existence]
%\label{prob:existencePareto}
%Does there always exist a solution $\sigma$ in $(\mathcal{G},v_0)$ such that $(\CostOrGain_i(\outcome{\sigma}{v_0}))_{i\in\Pi}$ is pareto-optimal in $\Plays(v_0)$? 
%\end{problem}
%
%We will study these two existence problems in Section~\ref{sec:existence}. Let us already illustrate them with two examples.

\begin{problem}[Threshold decision problem]
\label{prob:decisionThreshold}
Given an initialized quantitative reachability game $(\mathcal{G},v_0)$, given a threshold $y \in (\mathbb{N} \cup \{+\infty\})^{|\Pi|}$, decide whether there exists a solution $\sigma$ such that $(\Cost_i(\outcome{\sigma}{v_0}))_{i \in \Pi} \leq y$.
\end{problem}
The most natural requirements are to impose upper bounds on the costs that the players have to pay and no lower bounds. One might also be interested in imposing an interval $[x_i,y_i]$ in which must lie the cost paid by Player~$i$. 

In the second problem, constraints are imposed on the social welfare, with the aim to maximize it. We use the lexicographic ordering on $\mathbb{N}^{2}$ such that $(k,c) \succeq (k',c')$ if and only if \emph{(i)} $k \geq k'$ or \emph{(ii)} $k = k'$ and $c \leq c'$. 

\begin{problem}[Social welfare decision problem]
\label{prob:decisionWelfare}
Given an initialized quantitative reachability game $(\mathcal{G},v_0)$, given two thresholds $k \in \{0, \ldots, |\Pi| \}$ and $c \in \mathbb{N}$, decide whether there exists a solution $\sigma$ such that $\SW(\outcome{\sigma}{v_0}) \succeq (k,c)$.
\end{problem}

Notice that with the lexicographic ordering, we want to first maximize the number of players who visit their target set, and then to minimize the accumulated cost to reach those sets. Let us now state the last studied problem.

\begin{problem}[Pareto optimal decision problem]
\label{prob:decisionPareto}
Given an initialized  quantitative reachability game $(\mathcal{G},v_0)$ decide whether there exists a solution $\sigma$ in $(\mathcal{G},v_0)$ such that $(\Cost_i(\outcome{\sigma}{v_0}))_{i\in\Pi}$ is Pareto optimal in $\Plays(v_0)$.
\end{problem}

\begin{remark}
Problems~\ref{prob:decisionThreshold} and~\ref{prob:decisionWelfare} impose constraints with \emph{large} inequalities. We could also impose strict inequalities or even a mix of strict and large inequalities. The results of this article can be easily adapted to those variants.
\end{remark}

We conclude this section with an illustrative example.

\begin{example} \label{ex:1}
Consider the quantitative reachability game $(\mathcal{G},v_0)$ of Figure~\ref{fig:2players}. We have two players such that the vertices of Player~$1$ (resp. Player~$2$) are rounded (resp. rectangular) vertices. For the moment, the reader should not consider the value indicated on the right of the vertices' labeling. Moreover $F_1 = \{v_3,v_4\}$ and $F_2 = \{v_1,v_4\}$.  In this figure, an edge $(v,v')$ labeled by $x$ should be understood as a path from $v$ to $v'$ with length $x$. Observe that $F_1$ and $F_2$ are both reachable from the initial vertex $v_0$. Moreover the two Pareto optimal cost profiles are $(3,3)$ and $(2,6)$: take a play with prefix $v_0v_2v_4$ in the first case, and a play with prefix $v_0v_2v_3v_0v_1$ in the second case. 

\begin{figure}
 \centering
 \scalebox{0.8}{\begin{tikzpicture}[-latex, auto, node distance = 1.5 cm and 1.5 cm, on grid, semithick, round/.style = {draw,rounded corners=6pt}, sq/.style = {draw,rectangle}, di/.style = {draw,diamond}]
  \node[sq] (0) at (0,0){$v_0$: $\mathbf{3}$};
  %\node (0val) at (0,-0.5){$\mathbf{3}$};
  \node[round] (1) at (-2,0) {$v_1$: $\mathbf{+\infty}$};
  %\node (1val) at (-2,-0.5){$\mathbf{+\infty}$};
  \node[round] (2) at (2,0) {$v_2$: $\mathbf{1}$};
  %\node (0val) at (2,-0.5){$\mathbf{1}$};
  \node[round] (3) at (2,1) {$v_3$: $\mathbf{0}$};
  %\node (0val) at (2.5,1){$\mathbf{0}$};
  \node[round] (4) at (4,0) {$v_4$: $\mathbf{0}$};
 % \node (0val) at (4.5,0){$\mathbf{0}$};

  \draw[->] (0) -- node[below] {$3$} (1);
  \draw[->] (0) -- (2);
  \draw[->] (2) -- (3);
  \draw[->] (2) -- node[below] {$2$} (4);
  \draw[->] (3) to [bend right] (0);
  \draw[->] (4) to [bend left] (0);
  \draw[->] (1) to [bend left] (0);
 \end{tikzpicture}}
 \caption{A two-player quantitative reachability game such that $F_1=\{v_3,v_4\}$ and $F_2 =\{v_1,v_4\}$}
 \label{fig:2players}
\end{figure}
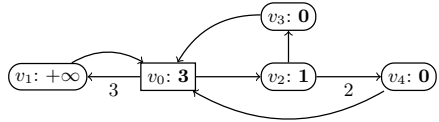

For this example, we claim that there is no NE (and thus no SPE) such that its cost profile is Pareto optimal (see Problem~\ref{prob:decisionPareto}). Assume the contrary and suppose that there exists an NE $\sigma$ such that its outcome $\rho$ has cost profile $(3,3)$, meaning that $\rho$ begins with $v_0v_2v_4$. Then Player~$1$ has a profitable deviation such that after history $v_0v_2$ he goes to $v_3$ instead of $v_4$ in a way to pay a cost of $2$ instead of $3$, which is a contradiction. Similarly assume that there exists an NE $\sigma$ such that its outcome $\rho$ has cost profile $(2,6)$, meaning that $\rho$ begins with $v_0v_2v_3v_0v_1$. Then Player~$2$ has a profitable deviation such that after history $v_0$ he goes to $v_1$ instead of $v_2$, again a contradiction. So there is no NE $\sigma$ in $(\mathcal{G},v_0)$ such that $(\Cost_i(\outcome{\sigma}{v_0}))_{i\in\Pi}$ is Pareto optimal in $\Plays(v_0)$.

The previous discussion shows that there is no NE $\sigma$ such that $(0,0) = x \leq (\Cost_i(\outcome{\sigma}{v_0}))_{i \in \Pi} \leq y = (3,3)$ (see Problem~\ref{prob:decisionThreshold}). This is no longer true with $y = (6,3)$. Indeed, one can construct an NE $\tau$ whose outcome has prefix $v_0v_1v_0v_2v_3$ and cost profile $(6,3)$. This also shows that there exists an NE $\sigma$ (the same $\tau$ as before) that satisfies $\SW(\outcome{\sigma}{v_0}) \succeq (k,c) = (2,9)$ (with $\tau$ both players visit their target set and their accumulated cost to reach it equals $9$).
\qed\end{example}

%!TEX root=sv-main.tex

\section{Existence problems} \label{sec:existence}

In this section, we show that for particular families of reachability games and requirements, there is no need to solve the related decision problems because they always have a positive answer in this case. 

We begin with the family constituted by all reachability games with a \emph{strongly connected} arena. The next theorem then states that there always exists a solution that visits all non empty target sets.

\begin{theorem} \label{thm:visitAll}
Let $(\mathcal{G},v_0)$ be an initialized quantitative reachability game such that its arena $\arena$ is strongly connected. There exists an SPE $\sigma$ (and thus an NE) such that its outcome $\outcome{\sigma}{v_0}$ visits all target sets $F_i$, $i \in \Pi$, that are non empty.
\end{theorem}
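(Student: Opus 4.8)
The plan is to first reduce to the set $\Pi' = \{ i \in \Pi \mid F_i \neq \emptyset \}$, since players with empty target impose no reachability constraint and can be ignored, and then to exhibit a single play $\bar\rho$ from $v_0$ that simultaneously (a) visits $F_i$ for every $i \in \Pi'$ and (b) is the outcome of an SPE. For (b) I would lean on the value-based outcome characterization announced in contribution (iv): writing $\Val_i(v)$ for the value of the coalitional (two-player zero-sum) game in which Player~$i$ minimizes and $\Pi \setminus \{i\}$ maximizes his cost to reach $F_i$, and $\sigma_i^*$ for an associated positional optimal strategy of Player~$i$, a play $\rho$ is the outcome of an NE iff for every $i$ and every position $k$ with $\rho_k \in V_i$ and $k < \Cost_i(\rho)$ one has $\Cost_i(\rho) - k \leq \Val_i(\rho_k)$. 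The content of the theorem is that in a strongly connected arena one can meet this constraint while still touching every target.

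\textbf{Construction of $\bar\rho$.} I would build $\bar\rho$ step by step with the following rule at the current vertex $w \in V_j$: if Player~$j$ has not yet seen $F_j$ and $\Val_j(w) < +\infty$, play the \emph{forced} move $\sigma_j^*(w)$; otherwise play a \emph{free} move, chosen to head, along a shortest path in the strongly connected arena, towards some not-yet-visited target. Two observations drive the verification. First, the value condition holds automatically: at any $j$-owned position either $\Val_j = +\infty$, so the constraint is vacuous, or $\Val_j(w) < +\infty$, in which case the rule drives Player~$j$ along $\sigma_j^*$; since $\sigma_j^*$ guarantees reaching $F_j$ within $\Val_j(w)$ steps against any behaviour of the others, the remaining cost along $\bar\rho$ from $w$ is at most $\Val_j(w)$. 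Second, every non-empty target gets visited: each maximal ``forced'' phase for some player $m$ lasts at most $\Val_m$ steps and ends by placing $m$ in $F_m$, so there are at most $|\Pi'|$ such phases, and in between the free moves make genuine progress towards the remaining targets, strong connectivity guaranteeing that such a progressing path always exists.

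\textbf{From $\bar\rho$ to equilibria.} Once $\bar\rho$ satisfies the value condition, the NE is obtained in the usual way: follow $\bar\rho$, and as soon as some Player~$i$ deviates let the others switch to a punishing profile $\sigma_{-i}$ realizing $\Val_i$, which makes the deviation non-profitable. To upgrade this to an SPE one cannot keep the grim, fully adversarial punishment, since it need not be credible in its own subgames; instead I would rerun the very same construction inside the value-iteration characterization of subgame perfect equilibrium outcomes available for quantitative reachability games, that is, replace $\Val_i$ by the corresponding SPE-values and $\sigma_i^*$ by an SPE-compatible optimal response, and sustain the threats by subgame perfect equilibria of the subgames, which exist by the general existence result recalled above. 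The per-vertex check is identical in spirit: a player is either in an infinite-value region, where no deviation (not even a subgame-perfect one) can profit him, or he is being routed optimally towards his own target.

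\textbf{Main obstacle.} The delicate point is the simultaneity demanded by the construction: visiting every target and forbidding every profitable unilateral deviation pull in opposite directions, since detouring to satisfy one player may delay another who could then deviate. The mechanism above resolves this by the key remark that a deviation can only benefit its author inside a finite-value region, and that in such a region the rule is already forced to route the owner optimally to his target, while outside it, where the value is $+\infty$, deviations are harmless. The part I expect to require the most care is precisely the passage from NE to SPE: because credible punishments are weaker than the fully adversarial ones, the inequality must be run with the SPE-values rather than the coalitional values $\Val_i$, and one must check that the analogous ``drive each player optimally once it matters'' step still both satisfies the stronger SPE condition and reaches every target.
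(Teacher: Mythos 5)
Your construction of $\bar\rho$ and the NE half of the argument are essentially sound: since the coalitional values satisfy $\Val_i(v) = 1 + \max_{(v,v')\in E}\Val_i(v')$ at every vertex $v \notin V_i \cup F_i$, any move taken at a vertex \emph{not} owned by Player~$i$ automatically decreases $\Val_i$, so free moves and forced moves of other players can never threaten a pending constraint of Player~$i$; your forced/free dichotomy therefore produces a $\Val$-consistent play visiting all non-empty targets, and Theorem~\ref{thm:critOutcomeNE} converts it into an NE. (Do fix the free rule so that it commits to one unvisited target at a time, e.g.\ the least-indexed one; as written, ``towards some not-yet-visited target'' allows perpetual switching, and the termination argument needs the distance to a \emph{fixed} target to decrease once forced phases are exhausted.)

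The genuine gap is the passage to SPE, which is what the theorem actually asserts. You propose to ``rerun the same construction'' with the SPE labeling $\lambda^*$ in place of $\Val$, but the one structural property your construction rests on has no analogue for $\lambda^*$. The labeling $\lambda^*$ lives on the extended game and is defined by a global fixpoint; at a vertex owned by Player~$j \neq i$ it is \emph{not} of the form $1+\max$ over successors of Player~$i$'s bound, so a free move (or a forced move serving another player) taken there can destroy the satisfiability of a constraint $\Cost_i(\rho_{\geq k}) \leq \lambda^*(\rho_k)$ incurred earlier at an $i$-owned vertex. Moreover there is no positional ``SPE-compatible optimal response'' realizing $\lambda^*$ against arbitrary behaviour of the coalition: $\lambda^*(v,I)$ is witnessed only by plays that are themselves globally $\lambda^*$-consistent, i.e.\ consistent for \emph{all} players simultaneously, which is precisely the object you are trying to build, so the argument becomes circular. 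Hence your dichotomy ``infinite value (vacuous) / optimally routed'' no longer covers all cases, and the SPE half of the proof is missing, not merely delicate. Note that the paper takes an entirely different, characterization-free route: it first proves (Lemma~\ref{lem:visitOne}) that from any vertex from which some target is reachable there is an SPE visiting at least one target, by grafting one SPE onto another along a single edge and verifying the NE condition in every subgame by hand; it then takes an SPE visiting a maximum number of targets, empties the already-visited targets to form a modified game, applies the lemma at the end of a prefix that has collected those visits (strong connectivity makes the missing target reachable from there), and grafts the resulting SPE on, contradicting maximality. If you wish to salvage the characterization-based approach, you would need to establish a genuinely new stability property of $\lambda^*$ under the detours your construction performs, which neither your sketch nor the paper provides.
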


Let us comment this result. For this family of games, the answer to Problem~\ref{prob:decisionThreshold} is always positive for particular thresholds. In case of quantitative reachability, take strict constraints $< y_i = +\infty$ if $F_i \neq \emptyset$ and large constraints $\leq +\infty$ otherwise. The answer to Problem~\ref{prob:decisionWelfare} is also always positive for threshold $k = |\{ i \mid F_i \neq \emptyset \}|$ and $c = +\infty$. 

In the statement of Theorem~\ref{thm:visitAll}, as the arena is strongly connected, $F_i$ is non empty if and only if $F_i$ is reachable from $v_0$. Also notice that the hypothesis that the arena is strongly connected is necessary. Indeed, it is easy to build an example with two players (Player $1$ and Player $2$) such that from $v_0$ it is not possible to reach both $F_1$ and $F_2$.

We now turn to the second result of this section. The next theorem states that even with  only two players there exists an initialized quantitative reachability game that has no NE with a cost profile which is Pareto optimal. To prove this result, we  only have to come back to the quantitative reachability game  of Figure~\ref{fig:2players}. We explained in Example~\ref{ex:1} that there is no NE in this game such that its cost profile is Pareto optimal.

\begin{theorem} \label{thm:existencePareto}
There exists an initialized quantitative reachability game with $|\Pi| = 2$ that has no NE with a cost profile which is Pareto optimal in $\Plays(v_0)$.
\end{theorem}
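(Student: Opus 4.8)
The plan is to prove the statement by exhibiting an explicit counterexample, namely the two-player quantitative reachability game $(\mathcal{G},v_0)$ already depicted in Figure~\ref{fig:2players} with $F_1 = \{v_3,v_4\}$ and $F_2 = \{v_1,v_4\}$, and by arguing exactly as in Example~\ref{ex:1}. The approach is constructive: rather than reasoning about arbitrary two-player games, I would fix this single arena, determine its Pareto optimal cost profiles in $\Plays(v_0)$, and then refute the existence of an NE attaining any of them. Since an NE has a Pareto optimal cost profile precisely when its cost profile equals one of these minimal profiles, showing that none of them is realized by an NE suffices.

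First I would compute the achievable cost profiles from $v_0$ and isolate those minimal for $\leq$. The key observations are that Player~$1$ can reach $F_1$ in as few as $2$ steps (via $v_3$) or in $3$ steps (via $v_4$), while Player~$2$ can reach $F_2$ in as few as $3$ steps (via $v_1$ or $v_4$); but routing through $v_3$ forces a return to $v_0$ before $F_2$ can be met, inflating Player~$2$'s cost to $6$. A short case analysis over the finitely many ways of combining these routes shows that the only two Pareto optimal profiles are $(3,3)$, realized by any play with prefix $v_0v_2v_4$, and $(2,6)$, realized by any play with prefix $v_0v_2v_3v_0v_1$; every other profile, including those with a $+\infty$ coordinate, is dominated by one of these two.

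Second I would show that neither Pareto optimal profile is the cost profile of an NE. If $\sigma$ is an NE with $(\Cost_i(\outcome{\sigma}{v_0}))_{i\in\Pi} = (3,3)$, then $\outcome{\sigma}{v_0}$ begins with $v_0v_2v_4$; but after the history $v_0v_2$, Player~$1$ (who owns $v_2$) can deviate to $v_3$ and pay $2$ instead of $3$, a profitable deviation, contradicting that $\sigma$ is an NE. Symmetrically, if $\sigma$ is an NE with cost profile $(2,6)$, then $\outcome{\sigma}{v_0}$ begins with $v_0v_2v_3v_0v_1$; but after the history $v_0$, Player~$2$ (who owns $v_0$) can deviate to $v_1$ and pay $3$ instead of $6$. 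Since $(3,3)$ and $(2,6)$ are the only Pareto optimal profiles and neither is the cost profile of an NE, no NE in $(\mathcal{G},v_0)$ has a Pareto optimal cost profile, which establishes the theorem.

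I expect the only delicate step to be the exhaustive verification that $(3,3)$ and $(2,6)$ are \emph{exactly} the Pareto optimal profiles: one must confirm that no combination of the two reachability routes yields a third minimal, incomparable profile and that every profile with a $+\infty$ coordinate is dominated. The two deviation arguments are then immediate once vertex ownership is recorded, because in each case the deviating player controls precisely the vertex at which the outcome branches away from the cheaper route.
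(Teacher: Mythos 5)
Your proposal is correct and is essentially the paper's own proof: the paper also establishes the theorem by pointing to the game of Figure~\ref{fig:2players}, identifying $(3,3)$ and $(2,6)$ as the only Pareto optimal cost profiles, and refuting an NE for each via the same two deviations (Player~$1$ at $v_2$ switching to $v_3$, and Player~$2$ at $v_0$ switching to $v_1$). Your extra care in verifying exhaustively that these two profiles are exactly the Pareto optimal ones is a sound addition but not a different method.
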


Notice that in the qualitative setting, in two-player games, there always exists an NE (resp. SPE) such that the gain profile\footnote{In the qualitative setting, each player obtain a gain that he wants to maximize: either 1 (if he visits his target set) or 0 (otherwise), all definitions are adapted accordingly.} is Pareto optimal in $\Plays(v_0)$ however this existence result cannot be extended to three players.

%footnote{{\color{red} In the quantitative setting, each player obtain a gain that he wants to minimize: either 1 (if he visits his target set) or 0 (otherwise).}}

%!TEX root=sv-main.tex
\section{Solving decision problems} \label{sec:decision}

%\Al{Expliquer ce qu'on va faire dans cette section et pourquoi.}
%\Al{
%\begin{itemize}
%	\item Dans nos algorithmes on a parfois besoin de tester qu'une partie est l'outcome d'un équilibre. Pour ce faire on a donc besoin de charactériser les outcomes de tels équilibres. La caractérisation de ces équilibres se fait par le biais d'une fonction de labélisation $\lambda$. $\rightarrow $ expliquer ce qu'est la $\lambda$- consistance. 
%	\item Pour appliquer ces charactérisations, ces critères algorithmiquement on a besoin que les objets sur lesquels on les applique soit représentable de manière finie $\rightarrow $ on transforme les parties en lassos qui conservent certaines propriétés
%	\item cela nous permet de donner des algorithmes pour résoudre les problèmes 1,2,3. et ainsi obtenir les résultats de complexités + mémoire.
%\end{itemize}}

In this section, we provide the complexity results for the different problems without any assumption on the arena of the game. Even if we provide complexity lower bounds, the main part of our contribution is to give the upper bounds. Roughly speaking the decision algorithms work as follows: they guess a path and check that it is the outcome of an equilibrium satisfying the relevant property (such as Pareto optimality). In order to verify that a path is an equilibrium outcome, we rely on the outcome characterization of equilibria, presented in Section~\ref{subsection:charac}. These characterizations rely themselves on the notion of $\lambda$-consistent play, introduced in Section~\ref{section:consistentPlay}. As the guessed path should be finitely representable, we show that we can only consider $\lambda$-consistent lassoes, in Section~\ref{section:lassoes}. Finally, we expose the philosophy of the algorithms providing the upper bounds on the complexity of the three problems in Section~\ref{section:algo}.

\subsection{$\lambda$-consistent play}
\label{section:consistentPlay}

We here define the \emph{labeling function}, $\lambda : V \rightarrow \mathbb{N}\cup \{+\infty \}$ used to obtain the outcome characterization of equilibria. Given a vertex $v \in V$ along a play $\rho$, intuitively, the value $\lambda(v)$ represents the maximal number of steps within which the player who owns this vertex should reach his target set along $\rho$ starting from $v$. A play which satisfies the constraints given by $\lambda$ is called a $\lambda$-consistent play. 

\begin{definition}[$\lambda$-consistent play] Let $(\mathcal{G},v_0)$ be a quantitative reachability game and $\lambda : V \rightarrow \mathbb{N}\cup \{+\infty \}$ be a labeling function. Let $\rho \in \Plays$ be a play,  we say that $\rho = \rho_0 \rho_1\ldots$ is $\lambda $-consistent if for all $i \in \Pi$ and all $k \in \mathbb{N}$ such that $i \not \in \Visit(\rho_0\ldots \rho_k)$ and $\rho_k \in V_i$:  $ \Cost_i(\rho_{\geq k}) \leq \lambda(\rho_k). \label{eq:constraintLambdaConst}$
\end{definition}
The link between $\lambda$-consistency and equilibrium is made in Section~\ref{subsection:charac}.
\begin{example}\label{ex:lambda}
	Let us come back to Example~\ref{ex:1} and assume that the values indicated on the right of the vertices' labeling represent the valuation of a labeling function $\lambda$. Let us first consider the play $\rho=(v_0v_2v_4)^\omega$ with cost profile $(3,3)$. We have that $\Cost_2(\rho) = 3 \leq \lambda(v_0) = 3$ but $ \Cost_1(\rho_{\geq 1}) = \Cost_1(v_2v_4(v_0v_2v_4)^\omega) = 2 > \lambda(v_2) =1$. This means that $(v_0v_2v_4)^\omega$ is not $\lambda$-consistent. Secondly, one can easily see that the play $v_0v_1(v_0v_2v_3)^\omega$ is $\lambda$-consistent. 
\end{example}

%\subsection{Coalitional games}

%---EXTENDED GAME

%\subsection{Extended game}
%------ SUBSECTION:Characterization

\subsection{Characterizations}
\label{subsection:charac}
%---SUBSUBSECTION: characterzation NE

\subsubsection{Outcome characterization of Nash equilibria}

To define the labeling function $\lambda$ which allows us to obtain this characterization, we need to study the rational behavior of one player playing against the \emph{coalition} of the other players. In order to do so, with a quantitative reachability game $\mathcal{G} = (\arena, (\Cost_i)_{i\in \Pi}, (F_i)_{i\in \Pi})$, we can associate $|\Pi|$ \emph{two-player zero-sum quantitative games} \cite{DBLP:conf/lfcs/BrihayePS13}. For each $i \in \Pi$, we depict by $\mathcal{G}_i$ the \emph{(quantitative) coalitional game} associated with Player $i$. In such a game Player $i$ (which becomes Player $Min$) wants to reach the target set $F=F_i$ within a minimum number of steps, and the coalition of all players except Player $i$ (which forms one player called Player $Max$, aka $-i$) aims to avoid it or, if it is not possible, maximize the number  of steps until reaching $F$.

Given a coalitional game $\mathcal{G}_i$ and a vertex $v \in V$, the \emph{value} of $\mathcal{G}_i$ from $v$, depicted by $\Val_i(v)$, allows us to know what is the lowest (resp. greatest) cost (resp. gain) that Player $Min$ (resp. Player $Max$) can ensure to obtain from $v$.  Moreover, as quantitative coalitional games are determined these values always exist and can be computed in polynomial time~\cite{DBLP:conf/lfcs/BrihayePS13,DBLP:journals/acta/BrihayeGHM17,Khachiyan2008}.

An \emph{optimal strategy} for Player $Min$ (resp. Player $Max$) in a coalitional game $\mathcal{G}_i$ is a strategy which ensures that, from all vertex $v \in V$, Player $Min$ (resp. Player $Max$) will pay (resp. obtain) at most $\Val_i(v)$ by following this strategy whatever the strategy of the other player. For each $i \in \Pi$, we know that  there always exist optimal strategies for both players in $\mathcal{G}_i$. Moreover, we can always found optimal strategies which are positional \cite{DBLP:conf/lfcs/BrihayePS13}. \\

In our characterization, we show that the outcomes of NEs are exactly the plays which are $\Val$-consistent, with the labeling function $\Val$ defined in this way: for all $v \in V$, $ \ConstNE(v) = \Val_i(v) \quad \text{ if } v \in V_i.$

\begin{theorem}[Characterization of NEs]
	\label{thm:critOutcomeNE}
	Let $(\mathcal{G},v_0)$ be a quantitative reachability game and let $\rho \in \Plays(v_0)$ be a play, the next assertions are equivalent:
	
	\begin{enumerate}
		\item there exists an NE $\sigma$ such that $\outcome{\sigma}{v_0}= \rho$; \label{state:1critOutcomeNE}
		\item the play $\rho$ is $\ConstNE$-consistent. \label{item:critOut2}
	\end{enumerate}	
Additionally, if $\rho = h\ell^\omega$ is a lasso, we can replace the first item by: there exists an NE $\sigma$ with memory in $\mathcal{O}(|h\ell|+|\Pi|)$ and such that $\outcome{\sigma}{v_0}= \rho$. 
\end{theorem}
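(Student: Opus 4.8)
The plan is to prove the two implications separately, and then address the lasso refinement at the end.

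\medskip

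\noindent\textbf{From NE to $\ConstNE$-consistency (\ref{state:1critOutcomeNE}$\Rightarrow$\ref{item:critOut2}).}
Suppose $\sigma$ is an NE with $\outcome{\sigma}{v_0} = \rho$, and suppose for contradiction that $\rho$ is \emph{not} $\ConstNE$-consistent. Then there is a player $i$ and an index $k$ with $\rho_k \in V_i$, $i \notin \Visit(\rho_0\ldots\rho_k)$, and $\Cost_i(\rho_{\geq k}) > \ConstNE(\rho_k) = \Val_i(\rho_k)$. The idea is that player $i$ can then profitably deviate from $\sigma$ at $\rho_k$ by switching to an optimal Player~$Min$ strategy in the coalitional game $\mathcal{G}_i$. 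Since $\Val_i(\rho_k)$ is the cost player $i$ can guarantee against \emph{any} strategy of the coalition $-i$, playing optimally from $\rho_k$ yields player~$i$ a cost along the resulting outcome of at most $k + \Val_i(\rho_k) < k + \Cost_i(\rho_{\geq k}) = \Cost_i(\rho)$ (using that player~$i$'s target is not yet visited in $\rho_0\ldots\rho_k$, so his global cost is measured from $v_0$). This is a profitable deviation, contradicting that $\sigma$ is an NE. The only subtlety is to make the deviation precise: $\sigma'_i$ follows $\sigma_i$ up to the prefix $\rho_0\ldots\rho_k$ and then switches to the optimal $Min$-strategy.

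\medskip

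\noindent\textbf{From $\ConstNE$-consistency to existence of an NE (\ref{item:critOut2}$\Rightarrow$\ref{state:1critOutcomeNE}).}
This is the direction I expect to be the main obstacle, and it requires constructing an NE whose outcome is exactly $\rho$. The standard device is a \emph{punishment} (grim-trigger) strategy profile: all players follow $\rho$ as long as nobody deviates; if some player $j$ is the first to deviate at a vertex $v \in V_j$ (i.e. leaves the prescribed path), then all other players switch to an optimal Player~$Max$ strategy of the coalitional game $\mathcal{G}_j$, punishing $j$ from that point on. I would then verify that no player has a profitable deviation. Fix a player $i$ who deviates for the first time at some history ending in $\rho_0\ldots\rho_k$ with $\rho_k \in V_i$. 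If $i$ has already visited $F_i$ along the prefix, he cannot improve, since his cost is already fixed. Otherwise, after the deviation the coalition plays an optimal $Max$-strategy in $\mathcal{G}_i$, so from $\rho_k$ player~$i$ pays at least $\Val_i(\rho_k) = \ConstNE(\rho_k)$. By $\ConstNE$-consistency, $\Cost_i(\rho_{\geq k}) \leq \ConstNE(\rho_k)$, so player~$i$ pays at most $\Cost_i(\rho_{\geq k})$ along $\rho$ and at least that much after deviating; hence the deviation is not profitable. This shows $\sigma$ is an NE with outcome $\rho$.

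\medskip

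\noindent\textbf{The lasso refinement and memory bound.}
Finally, when $\rho = h\ell^\omega$ is a lasso, I would argue that the NE constructed above can be implemented with memory in $\mathcal{O}(|h\ell| + |\Pi|)$. The key observation is that the punishment strategies (optimal $Max$-strategies in each $\mathcal{G}_j$) can be chosen positional, by the quoted result that coalitional games admit positional optimal strategies, so they require no additional memory beyond identifying which player deviated (at most $|\Pi|$ punishment modes). The ``cooperative'' mode needs only to track the current position along the finite lasso $h\ell^\omega$, which has length $|h\ell|$; a counter of that size suffices, with the loop $\ell$ reset cyclically. Combining the $\mathcal{O}(|h\ell|)$ memory for following the lasso with the $\mathcal{O}(|\Pi|)$ memory for selecting the appropriate positional punishment yields the claimed bound. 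I would also note that the punishment is triggered only by a one-shot deviation, so a single bit per potential deviator suffices to record that the profile has entered punishment mode against player~$j$.
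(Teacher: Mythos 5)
Your proposal is correct and follows essentially the same route as the paper: the forward direction deviates to an optimal $Min$ strategy in the coalitional game $\mathcal{G}_i$ at the vertex violating $\ConstNE$-consistency, the converse builds the same grim-trigger profile punishing the first deviator with a positional optimal $Max$ strategy in $\mathcal{G}_i$, and the memory bound comes from the same decomposition (tracking the position along $h\ell$ plus the identity of the deviator, with positional punishments requiring no extra states). The only cosmetic difference is that you make explicit the case where the deviator has already visited his target set, which the paper handles implicitly by asserting $i \notin \Visit(hv)$ at the branching point.
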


The main idea is that if the second assertion is false, then there exists a player $i$ who has an incentive to deviate along $\rho$. Indeed, if there exists $k \in \mathbb{N}$ such that $\Cost_i(\rho_{\geq k}) > \Val_i(\rho_k)$ ($\rho_k \in V_i$) it means that Player $i$ can ensure a better cost for him even if the other players play in coalition and in an antagonistic way. Thus, Player $i$ has a profitable deviation. For the second implication, the Nash equilibrium $\sigma$ is defined as follows: all players follow the outcome $\rho$ but if one player, assume it is Player $i$, deviates from $\rho$ the other players form a coalition $-i$ and punish the deviator by playing the optimal strategy of player $-i$ in the coalitional game $\mathcal{G}_i$. Thus, if $\rho=h\ell^\omega$, a player has to remember: \emph{(i)} $h\ell$ to know both what he has to play %to obtain the right outcome
and if someone has deviated  and \emph{(ii)} who is the deviator. 

\begin{example}
Let us go back to Example~\ref{ex:lambda}, in this example the used labeling function $\lambda$ is in fact the labeling function $\Val$. We proved in Example~\ref{ex:lambda} that the play $(v_0v_2v_4)^\omega$ is not $\Val$-consistent and so not the outcome of an NE by Theorem~\ref{thm:critOutcomeNE}. On the contrary, we have seen that the play $v_0v_1(v_0v_2v_3)^\omega$ is $\Val$-consistent and it means that it is the outcome of an NE (again by Theorem~\ref{thm:critOutcomeNE}). Notice that we have already proved these two facts in Example~\ref{ex:1}. 
\end{example}

%--SUBSUB: characterizaion SPE

\subsubsection{Outcome characterization of subgame perfect equilibria}
In the previous section, we proved that the set of plays which are $\ConstNE$-consistent is equal to the set of outcomes of NEs. We now want to have the same kind of characterization for SPEs. We may not use the notion of $\ConstNE$-consistent plays because there exist plays which are $\ConstNE$-consistent but which are not the outcome of an SPE. But,  we can recover the characterization of SPEs thanks to a different labeling function defined in \cite{DBLP:journals/corr/abs-1905.00784} that we depict by $\lambda^*$. Notice that, $\lambda^*$ is not defined on the vertices of the game $\mathcal{G}$ but on the vertices of the \emph{extended game} $\mathcal{X}$ associated with $\mathcal{G}$. Vertices in such a game are the vertices in $\mathcal{G}$ equipped with a subset of players who have already visited their target set. This game is also a reachability game thus all concepts and definitions introduced in Section~\ref{section:preliminaries} hold. Moreover, there is a one-to-one correspondence between SPEs in $\mathcal{G}$ and its extended game. This is the reason why we solve the different decision problems on the extended games $(\mathcal{X},x_0)$, where $x_0 = (v_0, \Visit(v_0))$, instead of $(\mathcal{G},v_0)$. More details are given in  \cite{DBLP:journals/corr/abs-1905.00784}. However, it is very important to notice that some of our results depend on $|V|$  (resp. $|\Pi|$) that are the number of vertices (resp. players) in $\mathcal{G}$ and not in $\mathcal{X}$.

%Vertices in such a game are the vertices in $\mathcal{G}$ but equiped with a subset of players who have already seen their objectives. This game is a reachabilitiy game too and there is a one to one correspondance between SPEs in $\mathcal{G}$ and its extended game. This is the reason why, we solve the different decision problems on the extended games $(\mathcal{X},x_0)$, where $x_0 = (v_0, \Visit(v_0))$ , instead of $(\mathcal{G},v_0)$. However, it is very important to notice that some of our results depends on $|V|$  (resp. $|\Pi|$) that are the number of vertices (resp. players) in $\mathcal{G}$. and not in $\mathcal{X}$. More details are given in  \cite{DBLP:journals/corr/abs-1905.00784}.

%The definition of the operator $\lambda^*$ leads to the following characterization of outcomes of SPEs.
\begin{theorem}[\cite{DBLP:journals/corr/abs-1905.00784} Characterization of SPEs]
	\label{thm:critOutcomeSPE}
	Let $(\mathcal{G},v_0)$ be a quantitative reachability game and $(\mathcal{X},x_0)$ be its extended game and let $\rho = \rho_0\rho_1 \ldots \in \Plays(x_0)$ be a play in the extended game, the next assertions are equivalent:
	
	\begin{enumerate}
		\item there exists a subgame perfect equilibrium $\sigma$ such that $\outcome{\sigma}{x_0}= \rho$; \label{state:1critOutcomeSPE}
		\item the play $\rho$ is $\lambda^*$-consistent. \label{item:critOut2SPE}
	\end{enumerate}
\end{theorem}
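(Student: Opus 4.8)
The plan is to prove the two implications separately, reusing the architecture of the proof of Theorem~\ref{thm:critOutcomeNE} for Nash equilibria while replacing the zero-sum coalitional value $\Val_i$ by the \emph{credible} value $\lambda^*$. The first task, on which everything rests, is to record the structural properties of $\lambda^*$. Since $\lambda^*$ lives on the extended game $\mathcal{X}$, whose vertices carry the set of players who already reached their targets and this set only grows, the value-iteration operator of~\cite{DBLP:journals/corr/abs-1905.00784} performs only finitely many monotone updates before stabilizing; I would first argue this convergence so that $\lambda^*$ is well defined as its fixed point. I would then extract the two readings of $\lambda^*(x)$ that the two directions require, for $x = (v,I)$ with $v \in V_i$ and $i \notin I$: on the one hand $\lambda^*(x)$ is the largest cost that the coalition $\Pi \setminus \{i\}$ can inflict on the owner $i$ through a continuation that is itself subgame perfect (hence $\lambda^*(x) \le \Val_i(x)$, in accordance with SPE outcomes forming a subset of NE outcomes), and on the other hand it is a value that is actually \emph{realized} by some SPE of the subgame rooted at $x$. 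Obtaining these two readings from the fixed-point definition is where the real work sits.

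For the forward implication, I would take an SPE $\sigma$ with $\outcome{\sigma}{x_0} = \rho$ and fix an index $k$ with $\rho_k = (v,I)$, $v \in V_i$ and $i \notin I$. As $\sigma$ is subgame perfect, its restriction to the prefix ending at $\rho_k$ is an NE of the corresponding subgame, so player $i$ has no profitable deviation there. If $\Cost_i(\rho_{\ge k}) > \lambda^*(\rho_k)$, then player $i$ could deviate and, exploiting the first reading of $\lambda^*$ together with determinacy of the associated credible game, guarantee a cost of at most $\lambda^*(\rho_k) < \Cost_i(\rho_{\ge k})$ against the credible behaviour of the coalition dictated by $\sigma$ --- a profitable deviation, contradiction. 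Hence $\Cost_i(\rho_{\ge k}) \le \lambda^*(\rho_k)$, and since $k$ and $i$ were arbitrary, $\rho$ is $\lambda^*$-consistent.

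For the converse I would build an SPE with outcome $\rho$ by the usual follow-and-punish scheme: as long as nobody deviates, all players follow $\rho$; as soon as some player $i$ deviates, the others switch to a punishment of $i$. The essential departure from the NE construction is that the punishment can no longer be the positional optimal strategy of the coalition in the zero-sum game $\mathcal{G}_i$, since such a threat need not be credible inside a subgame; it must instead be the projection of an SPE of the subgame that forces player $i$ to pay at least $\lambda^*$ from the deviation point. The second reading of $\lambda^*$ supplies exactly such credible witnesses, and $\lambda^*$-consistency of $\rho$ then guarantees, exactly as in the Nash case but now in \emph{every} subgame, that no deviation is profitable: a deviation of player $i$ at $\rho_k$ yields a cost at least $\lambda^*(\rho_k) \ge \Cost_i(\rho_{\ge k})$.

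The step I expect to be the main obstacle is this converse direction, and specifically the gluing of the pointwise credible witnesses realizing $\lambda^*$ into a single, globally consistent strategy profile that stays subgame perfect after \emph{every} history, including nested deviations of several players. Here the passage to the extended game is what makes the argument go through: tracking the set of already-satisfied players both forces the value iteration to converge and turns the bookkeeping of which punishment is currently in force into a finite-memory task. I would organize the construction by induction along the stabilizing iteration and verify carefully that every deviating player faces a credible threat at every node, that these threats are mutually compatible, and that the undisturbed play indeed reproduces $\rho$.
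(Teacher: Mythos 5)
Your overall architecture (a fixpoint labeling, a forward direction by deviation analysis, a converse by follow-and-punish with credible threats) matches the known proof --- note that this paper never proves Theorem~\ref{thm:critOutcomeSPE} itself but imports it from \cite{DBLP:journals/corr/abs-1905.00784}, the closest in-paper material being the proof of Proposition~\ref{prop:SPEmemory} --- but both of your implications contain a genuine gap. In the forward direction your argument is circular: you derive the profitable deviation from the ``first reading'' of $\lambda^*$ (the largest cost a coalition can inflict on the owner through a continuation that is itself subgame perfect) together with ``determinacy of the associated credible game''. That reading, combined with the fixpoint equation, says precisely that SPE outcomes from a vertex coincide with the $\lambda^*$-consistent plays from it, i.e., it \emph{is} the theorem being proved; and no determinacy result can substitute for it, because $\lambda^*(x)\le \Val_i(x)$ means player $i$ in general has \emph{no} strategy of his own guaranteeing cost at most $\lambda^*(x)$ against arbitrary coalition behaviour --- the bound holds only against coalitions constrained by subgame perfection. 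The non-circular argument is an induction on the stages $\lambda^0\ge\lambda^1\ge\cdots$ of the value iteration: assuming every SPE outcome of every subgame is $\lambda^k$-consistent, take an SPE $\sigma$ with outcome $\rho$ and $\rho_k=(v,I)\in V_i^X$ with $i\notin I$; each one-step deviation of player $i$ to a successor $(v',I')$ leads to a subgame where the restriction of $\sigma$ is again an SPE, hence has a $\lambda^k$-consistent outcome $\pi$, and the NE property at $\rho_k$ gives $\Cost_i(\rho_{\ge k})\le 1+\Cost_i(\pi)\le 1+\max\{\Cost_i(\pi')\mid\pi'\in\Lambda^k(v',I')\}$; taking the minimum over successors yields $\lambda^{k+1}$-consistency. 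This induction over $k$ is the missing idea, and nothing in your sketch replaces it.

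In the converse direction you correctly single out the gluing of punishments as the main obstacle, but you do not resolve it, and your punishment objects --- ``the projection of an SPE of the subgame that forces player $i$ to pay at least $\lambda^*$'' --- reintroduce the same circularity, since you must already know such SPEs exist. The actual construction (steps $2\Rightarrow 3\Rightarrow 4$ of Proposition~\ref{prop:SPEmemory}) punishes with \emph{plays}, not equilibria: upon a deviation of player $i$ into $(v',I')$ the profile switches to a $\lambda^*$-consistent play $\rho_{i,v',I'}$ chosen to \emph{maximize} $\Cost_i$ over $\Lambda^*(v',I')$, and subgame perfection of the glued profile is verified through the one-shot deviation principle from two local inequalities: $\Cost_i(\rho)\le\lambda^*(v,I)$ ($\lambda^*$-consistency of the currently prescribed play) and $\lambda^*(v,I)\le 1+\Cost_i(\rho_{i,v',I'})$ (the fixpoint equation) --- exactly the ``good symbolic witness'' condition of Definition~\ref{def:Good}. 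That mechanism is what dissolves your worry about nested deviations (any deviation simply re-enters the scheme with a new prescribed play, so the verification is local and proceeds by induction on histories, not ``along the stabilizing iteration'' as you propose), and it is also what yields the finite-memory bound. Without these two ingredients --- the $\lambda^k$-induction and the witness-plus-one-shot-deviation construction --- your proposal is a faithful table of contents for the proof, not a proof.
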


%---SUBSECTION: LASSOES 

\subsection{Sufficiency of lassoes}
\label{section:lassoes}

In this section, we provide technical results which given a $\lambda$-consistent play produce an associated $\lambda$-consistent lasso. In the sequel, we show that working with these lassoes is sufficient for the algorithms. 

The associated lassoes are built by eliminating some \emph{unnecessary cycles} and then identifying a prefix $h\ell$ such that $\ell$ can be repeated infinitely often. An unnecessary cycle is a cycle inside of which no new player visits his target set. More formally, let $\rho = \rho_0\rho_1 \ldots \rho_k \ldots \rho_{k+\ell} \ldots$ be a play in $\mathcal{G}$, if $\rho_k=\rho_{k+\ell}$  and $\Visit(\rho_0\ldots \rho_k) = \Visit(\rho_0\ldots \rho_{k+\ell})$ then the cycle $\rho_k \ldots \rho_{k+\ell}$ is called an unnecessary cycle.

We call: (P1) the procedure which eliminates an unnecessary cycle, \emph{i.e.,} let $\rho = \rho_0\rho_1 \ldots \rho_k \ldots \rho_{k+\ell} \ldots$ such that $\rho_k \ldots \rho_{k+\ell}$ is an unnecessary cycle, $\rho$ becomes $\rho'= \rho_0\ldots \rho_k \rho_{k+\ell+1} \ldots$ and (P2) the procedure which turns $\rho$ into a lasso $\rho' = h\ell^\omega$ by copying $\rho$ long enough for all players to visit their target set and then to form a cycle after the last player has visited his target set. If no player visits his target set along $\rho$, then (P2) only copies $\rho$ long enough to form a cycle. Notice that, given $\rho\in \Plays$, applying (P1) or (P2) may involve a decreasing of the costs but for both $\Visit(\rho)= \Visit(\rho')$ and for (P2) $\Visit(h)= \Visit (\rho')$. Additionally, applying (P1) until it is no longer possible and then (P2), leads to a lasso with length at most $(|\Pi|+1)\cdot |V|$ and cost less than or equal to $|\Pi|\cdot|V|$ for players who have visited their target set.

Additionally, applying (P1) or (P2) on  $\lambda$-consistent  play preserves this property. It is stated in Lemma~\ref{lem:LassoesConsis} which is in particular true for extended games.

\begin{lemma}
	\label{lem:LassoesConsis}
	Let $(\mathcal{G},v_0)$ be a quantitative reachability game and $\rho \in \Plays$ be a $\lambda$-consistent play for a given labeling function $\lambda$. If $\rho'$ is the play obtained by applying (P1) or (P2) on $\rho$, then $\rho'$ is $\lambda$-consistent.
\end{lemma}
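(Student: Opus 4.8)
The plan is to treat the two procedures separately, and in each case verify the defining inequality of $\lambda$-consistency at every position of the new play $\rho'$ by pulling it back to a corresponding position of $\rho$ and invoking the $\lambda$-consistency of $\rho$ there. The guiding principle is that both procedures only shorten $\rho$ or repeat a tail in which nobody new reaches his target: at a position whose owner has not yet visited his target, the cost-to-reach from that position can therefore only decrease or stay equal, while the relevant ``already-visited'' sets are preserved. Hence the upper bound $\lambda(\cdot)$ inherited from $\rho$ still holds.

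For (P1), write $\rho = \rho_0 \ldots \rho_k \ldots \rho_{k+\ell} \ldots$ with unnecessary cycle $\rho_k \ldots \rho_{k+\ell}$ (so $\rho_k = \rho_{k+\ell}$ and $\Visit(\rho_0 \ldots \rho_k) = \Visit(\rho_0 \ldots \rho_{k+\ell})$) and $\rho' = \rho_0 \ldots \rho_k \rho_{k+\ell+1} \ldots$. I identify position $j \le k$ of $\rho'$ with position $j$ of $\rho$, and position $k+m$ (for $m \ge 1$) of $\rho'$ with position $k+\ell+m$ of $\rho$. For a constraint position $j \le k$ (i.e. $\rho_j \in V_i$ and $i \notin \Visit(\rho_0 \ldots \rho_j)$), the key observation is that player $i$ cannot first reach $F_i$ strictly inside the cycle: if he did, we would have $i \notin \Visit(\rho_0 \ldots \rho_k)$ yet $i \in \Visit(\rho_0 \ldots \rho_{k+\ell})$, contradicting the unnecessary-cycle condition. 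So his first visit in $\rho_{\ge j}$ occurs either before position $k$ (unchanged in $\rho'$) or after position $k+\ell$ (merely shifted down by $\ell$ in $\rho'$), giving $\Cost_i(\rho'_{\ge j}) \le \Cost_i(\rho_{\ge j}) \le \lambda(\rho_j)$; the case $\Cost_i(\rho_{\ge j}) = +\infty$ forces $\lambda(\rho_j) = +\infty$ and is trivial. For a constraint position $k+m$ of $\rho'$, the suffix $\rho'_{\ge k+m}$ coincides with $\rho_{\ge k+\ell+m}$, and the unnecessary-cycle condition yields $\Visit(\rho'_0 \ldots \rho'_{k+m}) = \Visit(\rho_0 \ldots \rho_{k+\ell+m})$, so the constraint there is exactly the constraint of $\rho$ at position $k+\ell+m$, which holds by assumption.

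For (P2), fix a prefix $h = \rho_0 \ldots \rho_a$ containing all visits, i.e. $\Visit(h) = \Visit(\rho') = \Visit(\rho)$, and a cycle taken from the tail $\rho_{\ge a}$, so that $\rho'$ coincides with $\rho$ up to position $a$ and then repeats that cycle. For a constraint position $j \le a$ the prefixes agree, hence $i \notin \Visit(\rho_0 \ldots \rho_j)$; since all visits lie inside $h$, if $i$ ever visits then his first visit is at some $m \le a$, which is unchanged in $\rho'$, so $\Cost_i(\rho'_{\ge j}) = \Cost_i(\rho_{\ge j}) \le \lambda(\rho_j)$ (and if $i$ never visits, $\lambda(\rho_j) = +\infty$). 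For a constraint position $j > a$ I use that $\Visit(\rho'_0 \ldots \rho'_j) = \Visit(\rho')$; thus $i \notin \Visit(\rho') = \Visit(\rho)$, so player $i$ never reaches $F_i$ anywhere in $\rho$. Writing $\rho'_j = \rho_m$ for the corresponding tail index $m$, the $\lambda$-consistency of $\rho$ at $m$ (whose owner $i$ has never visited) gives $+\infty = \Cost_i(\rho_{\ge m}) \le \lambda(\rho_m)$, forcing $\lambda(\rho'_j) = +\infty$, whence $\Cost_i(\rho'_{\ge j}) \le \lambda(\rho'_j)$ trivially.

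The main obstacle is the bookkeeping of the ``already-visited'' sets across the surgery: the $\lambda$-consistency constraint is conditional on the owner of the current vertex not yet having reached his target, so the argument hinges on establishing, for (P1), that $\Visit(\rho'_0 \ldots \rho'_j)$ equals the visit set of the corresponding prefix of $\rho$, and, for (P2), the property that all visits of $\rho$ already lie inside $h$. Both follow directly from the unnecessary-cycle condition and from the choice of $h$, respectively, so once those are in place the conclusion is immediate from the $\lambda$-consistency of $\rho$. Finally I would remark that nothing in the argument uses the specific shape of $\lambda$ or of the arena, so the same reasoning applies verbatim in the extended game, as stated.
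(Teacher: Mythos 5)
Your proof is correct and follows essentially the same route as the paper's: map each constraint position of $\rho'$ back to its corresponding position in $\rho$ (the paper does this via an injection $\varphi$ with $\rho'_n = \rho_{\varphi(n)}$), use that the cost-to-target from that position can only decrease or stay equal, and invoke the $\lambda$-consistency of $\rho$ there. You are in fact more explicit than the paper on the one delicate point, namely that the hypothesis $i \notin \Visit(\rho'_0\ldots\rho'_k)$ transfers to $i \notin \Visit(\rho_0\ldots\rho_{\varphi(k)})$ (via the unnecessary-cycle condition for (P1), and the choice of $h$ for (P2)), which the paper's proof uses implicitly and its (P2) case compresses into a single sentence.
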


%We now state that if there exists a Nash equilibrium there exists an other one such that the same players visit their target set and these players have a  ``small'' cost. Additionnaly, this new NE needs a polynomial memory.

These properties on (P1) and (P2) allow us to claim that it is sufficient to deal with lassoes with polynomial length to solve  Problems~\ref{prob:decisionThreshold} and~\ref{prob:decisionPareto} for NEs and it give us some bounds on the needed memory and the costs for each problem.

\begin{corollary}
	\label{cor:corCritOut2ConstraintProb}
	Let $\sigma$ be an NE (resp. SPE) in a quantitative reachability game $(\mathcal{G},v_0)$ (resp. $(\mathcal{X},x_0)$ its extended game) and $y \in (\mathbb{N}\cup\{+\infty\})^{|\Pi|}$. Let $w_0 = v_0$ (resp. $w_0= x_0$).
	If $ (\Cost_i(\outcome{\sigma}{w_0}))_{i\in\Pi} \leq y$, then there exists $\tau$ an NE (resp. SPE) in $(\mathcal{G},v_0)$ (resp. $(\mathcal{X},x_0)$) such that:
	\begin{itemize}
		\item  $(\Cost_i(\outcome{\tau}{w_0}))_{i\in\Pi} \leq y$;
		\item  $\outcome{\tau}{w_0}$ is a lasso $h\ell^\omega$ such that $|h\ell|\leq (|\Pi|+1)\cdot |V| $;
		\item for each $i \in \Visit(\outcome{\tau}{w_0})$, $\Cost_i(\outcome{\tau}{w_0}) \leq |\Pi|\cdot |V|$;
		 \item $\tau$ has  memory in $\mathcal{O}((|\Pi|+1)\cdot |V|)$  (resp. $\mathcal{O}(2^{|\Pi|} \cdot |\Pi|\cdot |V|^{(|\Pi|+1)\cdot(|\Pi|+|V|)+1})$).
		\end{itemize}
\end{corollary}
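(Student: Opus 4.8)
The plan is to route the statement through the outcome characterizations (Theorems~\ref{thm:critOutcomeNE} and~\ref{thm:critOutcomeSPE}) together with the cycle-removal procedures (P1), (P2) and Lemma~\ref{lem:LassoesConsis}. The scheme is identical for NEs and SPEs; only the labeling function ($\ConstNE$ in $(\mathcal{G},v_0)$, resp. $\lambda^*$ in $(\extGame,x_0)$) and the final memory estimate differ.

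First I would set $\rho = \outcome{\sigma}{w_0}$. By Theorem~\ref{thm:critOutcomeNE} (resp. Theorem~\ref{thm:critOutcomeSPE}), $\rho$ is $\ConstNE$-consistent (resp. $\lambda^*$-consistent). I then apply (P1) until no unnecessary cycle is left and apply (P2) once, obtaining a lasso $\rho' = h\ell^\omega$. By Lemma~\ref{lem:LassoesConsis} every application of (P1) and (P2) preserves the relevant consistency, so $\rho'$ is again $\ConstNE$-consistent (resp. $\lambda^*$-consistent).

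Next I would read off the three non-memory items. The bounds $|h\ell| \leq (|\Pi|+1)\cdot|V|$ and $\Cost_i(\rho') \leq |\Pi|\cdot|V|$ for $i \in \Visit(\rho')$ are precisely the quantitative guarantees of (P1) followed by (P2) recorded just before Lemma~\ref{lem:LassoesConsis}; in the extended game one uses that the visited set is part of the extended vertex, so that within a block where this set is constant no extended vertex may repeat. Each such block has at most $|V|$ vertices and the visited set changes at most $|\Pi|$ times, which is exactly why the length is controlled by $|V|$ and $|\Pi|$ rather than by the number $|V|\cdot 2^{|\Pi|}$ of extended vertices. Since (P1) and (P2) preserve $\Visit$ and can only lower costs, $(\Cost_i(\rho'))_{i\in\Pi} \leq (\Cost_i(\rho))_{i\in\Pi} \leq y$. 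Feeding the $\lambda$-consistent lasso $\rho'$ back into the converse direction of the characterization theorem then produces an NE (resp. SPE) $\tau$ with $\outcome{\tau}{w_0} = \rho'$, which establishes the first three items.

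It remains to bound the memory of $\tau$, and this is where the two cases genuinely diverge. For NEs the additional clause of Theorem~\ref{thm:critOutcomeNE} already yields an NE of memory $\mathcal{O}(|h\ell|+|\Pi|) = \mathcal{O}((|\Pi|+1)\cdot|V|)$, so nothing more is needed. The SPE bound $\mathcal{O}(2^{|\Pi|}\cdot|\Pi|\cdot|V|^{(|\Pi|+1)\cdot(|\Pi|+|V|)+1})$ is the hard part: Theorem~\ref{thm:critOutcomeSPE} only asserts existence, so I would have to instantiate the explicit finite-memory SPE construction of~\cite{DBLP:journals/corr/abs-1905.00784} on the polynomial-length lasso $h\ell^\omega$. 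In that construction the equilibrium follows the main lasso and, after any deviation reaching an extended vertex, switches to a punishment strategy that is itself a finite-memory SPE obtained from a $\lambda^*$-consistent lasso. The memory of $\tau$ is then obtained by bounding, over all $|V|\cdot 2^{|\Pi|}$ extended vertices and all $|\Pi|$ possible deviators, the number and the length of these (possibly nested) punishment strategies and multiplying this by the length of the main lasso; carrying out this bookkeeping on the extended game, so that the various factors combine into the stated exponent, is the delicate step and the main obstacle of the proof.
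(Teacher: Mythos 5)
Your treatment of the NE half and of the first three items is complete and coincides with the paper's own proof: set $\rho = \outcome{\sigma}{w_0}$, use the characterization to get $\ConstNE$- (resp.\ $\lambda^*$-) consistency, apply (P1) exhaustively and then (P2), invoke Lemma~\ref{lem:LassoesConsis} to preserve consistency, read the length/cost bounds off Lemma~\ref{lem:Lassoes} (including the correct observation that in the extended game the bounds stay in terms of $|V|$ and $|\Pi|$ of the original game), and for NEs conclude with the memory clause of Theorem~\ref{thm:critOutcomeNE}.

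The genuine gap is exactly where you place it: the SPE memory bound. Your plan is to ``instantiate the explicit finite-memory SPE construction of~\cite{DBLP:journals/corr/abs-1905.00784} on the polynomial-length lasso,'' but this is not available off the shelf: Theorem~\ref{thm:critOutcomeSPE} yields only the \emph{existence} of an SPE with outcome $h\ell^\omega$, with no control on its memory, and Proposition~33 of that reference does not by itself produce the stated bound. The paper has to close this gap with its own Proposition~\ref{prop:SPEmemory}, an equivalence routed through \emph{good symbolic witnesses}: for every $(i,v,I)$ one picks the punishment lasso $\rho_{i,v,I}$ by taking $\argmax_{\rho' \in \Lambda^*(v,I)} \Cost_i(\rho')$ among $\lambda^*$-consistent plays and then applying (P1)/(P2); the argmax choice is what makes the witness \emph{good}, i.e.\ what guarantees that one-shot deviations are unprofitable when the strategy profile switches lassoes after a deviation. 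The quantitative heart of the bound is Lemma~\ref{lemma:maxCost}, $\MaxCost_i(v) \leq \mathcal{O}(|V|^{(|\Pi|+1)\cdot(|\Pi|+|V|)})$, which bounds the length of each punishment lasso and is the source of the exponent $(|\Pi|+1)\cdot(|\Pi|+|V|)+1$; multiplying by the at most $|\Pi|\cdot|V|\cdot 2^{|\Pi|}$ lassoes stored gives the claimed memory. None of these three ingredients (the goodness condition, the argmax selection, the $\MaxCost$ lemma) appears in your sketch, and without them the ``bookkeeping'' you describe cannot be carried out. A minor inaccuracy in the same passage: the punishment strategies are not ``possibly nested''; the construction is flat, the machine only remembers the last deviator, the vertex he moved to, and the position inside the single lasso currently being followed.
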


\begin{proposition}
	\label{prop:particularPareto}
	Let $(\mathcal{G},v_0)$ (resp. $(\mathcal{X},x_0)$ its extended game) be a quantitative reachability game and let $\sigma$ be an NE (resp. SPE). Let $w_0 = v_0$ (resp. $w_0 = x_0$). If we have that $(\Cost_i(\outcome{\sigma}{w_0}))_{i\in\Pi}$ is Pareto optimal in $\Plays(w_0)$, then:
	
	\begin{itemize}
		\item for all $i \in \Visit(\outcome{\sigma}{w_0})$, $\Cost_i(\outcome{\sigma}{w_0}) \leq |V|\cdot|\Pi|;$
		\item there exists $\tau$ an NE (resp. SPE) such that $\outcome{\tau}{w_0}= h\ell^\omega$, $|h\ell| \leq (|\Pi|+1)\cdot |V|$ and $(\Cost_i(\outcome{\sigma}{w_0}))_{i\in \Pi} = (\Cost_i(\outcome{\tau}{w_0}))_{i\in\Pi}$.
	\end{itemize}
\end{proposition}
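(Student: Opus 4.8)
The plan is to run the $\lambda$-consistent play $\outcome{\sigma}{w_0}$ through the cycle-elimination procedures (P1) and (P2), observe that the resulting short lasso is still an equilibrium outcome, and then use Pareto minimality to upgrade the inequality ``the lasso is no more costly'' into ``the lasso has the same cost profile.'' Concretely, write $\rho = \outcome{\sigma}{w_0}$ and distinguish the two cases. In the NE case ($w_0 = v_0$, game $\mathcal{G}$), Theorem~\ref{thm:critOutcomeNE} tells us that $\rho$ is $\ConstNE$-consistent; in the SPE case ($w_0 = x_0$, extended game $\mathcal{X}$), Theorem~\ref{thm:critOutcomeSPE} tells us that $\rho$ is $\lambda^*$-consistent. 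Let me write $\lambda$ for the relevant labeling function ($\ConstNE$ or $\lambda^*$) so that both cases read uniformly: $\rho$ is $\lambda$-consistent.

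Next I would apply (P1) as long as an unnecessary cycle remains and then apply (P2) once, obtaining a lasso $\rho' = h\ell^\omega$. By the properties of (P1) and (P2) recalled just before Lemma~\ref{lem:LassoesConsis}, this gives exactly the quantitative guarantees I need: $|h\ell| \leq (|\Pi|+1)\cdot|V|$, each player in $\Visit(\rho')$ reaches his target within cost at most $|\Pi|\cdot|V|$, the visit set is preserved, $\Visit(\rho) = \Visit(\rho')$, and since both procedures can only decrease costs, $(\Cost_i(\outcome{}{}))$ satisfies $(\Cost_i(\rho'))_{i\in\Pi} \leq (\Cost_i(\rho))_{i\in\Pi}$ componentwise (with equality on the $+\infty$ coordinates of non-visitors). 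Crucially, Lemma~\ref{lem:LassoesConsis} guarantees that $\rho'$ remains $\lambda$-consistent, so by the same characterization theorem $\rho'$ is itself the outcome of some NE (resp.\ SPE) $\tau$; in the NE case the lasso refinement of Theorem~\ref{thm:critOutcomeNE} even supplies $\tau$ with finite memory.

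Now comes the squeezing step, which is where Pareto optimality does the real work. Because $\rho' \in \Plays(w_0)$, its cost profile lies in $P = \{(\Cost_i(\pi))_{i\in\Pi} \mid \pi \in \Plays(w_0)\}$, and we have just shown $(\Cost_i(\rho'))_{i\in\Pi} \leq (\Cost_i(\rho))_{i\in\Pi}$. Since $(\Cost_i(\rho))_{i\in\Pi}$ is by hypothesis Pareto optimal, i.e.\ minimal in $P$ for the componentwise order $\leq$, minimality forces $(\Cost_i(\rho'))_{i\in\Pi} = (\Cost_i(\rho))_{i\in\Pi}$. This equality, together with the bound $|h\ell| \leq (|\Pi|+1)\cdot|V|$, yields the second bullet with $\tau$ as above. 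For the first bullet, take any $i \in \Visit(\rho) = \Visit(\rho')$: then $\Cost_i(\outcome{\sigma}{w_0}) = \Cost_i(\rho) = \Cost_i(\rho') \leq |\Pi|\cdot|V| = |V|\cdot|\Pi|$, as required.

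The step I expect to be the main obstacle is the squeezing argument rather than the mechanics of the lasso extraction: the procedures (P1) and (P2) only guarantee a \emph{non-increase} of the costs, so on their own they could collapse a Pareto-optimal profile to a strictly smaller one, which would destroy the equality in the second bullet. It is precisely the Pareto minimality of $(\Cost_i(\rho))_{i\in\Pi}$ in the full play set $\Plays(w_0)$ (not merely among equilibrium outcomes) that rules this out and upgrades ``$\leq$'' to ``$=$.'' The two facts that must be invoked carefully to make the squeeze legitimate are that (P1)/(P2) preserve $\Visit$ (so that visitors stay visitors and the $+\infty$-coordinates are unchanged) and that they preserve $\lambda$-consistency (Lemma~\ref{lem:LassoesConsis}, valid for extended games too, so that $\rho'$ is genuinely an equilibrium outcome in both the NE and SPE cases).
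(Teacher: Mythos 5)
Your proof is correct, but it inverts the paper's logical organization. The paper proves the first bullet by contradiction: if some $i \in \Visit(\outcome{\sigma}{w_0})$ had $\Cost_i(\outcome{\sigma}{w_0}) > |V|\cdot|\Pi|$, an unnecessary cycle would occur \emph{before} Player~$i$'s visit, and a single application of (P1) would strictly decrease his cost while not increasing anyone else's (Lemma~\ref{lem:Lassoes}), contradicting Pareto minimality; the second bullet is then dismissed as ``a direct consequence of the first one.'' You instead prove the second bullet's content first --- run (P1) to exhaustion, then (P2), invoke Lemma~\ref{lem:LassoesConsis} and the characterizations (Theorems~\ref{thm:critOutcomeNE} and~\ref{thm:critOutcomeSPE}) to get an equilibrium lasso whose cost profile is componentwise $\leq$ the original, and squeeze to equality by Pareto minimality --- and then read off the first bullet from that equality together with the bound of Lemma~\ref{lem:Lassoes}. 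The two arguments rest on the same single insight (Pareto minimality forbids the cost decrease that (P1) could cause), so neither is deeper than the other; what the paper's contradiction buys is brevity for the first bullet (one cycle removal, no need for consistency preservation or the characterization theorems at that stage), while what your version buys is that it actually supplies the missing details of the second bullet --- that the short lasso is still an equilibrium outcome \emph{and} has the \emph{same} cost profile --- which the paper leaves implicit. One point to flag explicitly in your write-up: in the SPE case the reduction happens in the extended game $\mathcal{X}$, and the bounds $(|\Pi|+1)\cdot|V|$ and $|\Pi|\cdot|V|$ must be in terms of $|V|$ of $\mathcal{G}$, not $|V^X|$; the paper justifies this in a remark following Lemma~\ref{lem:Lassoes}, and your appeal to the ``quantitative guarantees'' should cite that remark rather than the generic statement.
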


% By adapting the concept of \emph{(good) symbolic witness} (a set of lassoes with some good properties) used in~\cite{DBLP:journals/corr/abs-1809-03888}, we can show that if there exists an SPE with a cost profile $c$ then, there exists one with the same cost profile but with a finite-memory.
% 
% \begin{proposition}
% 	\label{prop:SPEmemory}
% 	Let $(\mathcal{G},v_0)$ be a quantitative reachability game and $(\mathcal{X},x_0)$ be its extended game, let $c \in (\mathbb{N}\cup\{+\infty\})^{|\Pi|}$ and let $M = \max_{i\in\Pi}\{c_i \mid c_i < +\infty \}$ if this max exists, $M=0$ otherwise. The following assertions are equivalent:
% 	\begin{enumerate}
% 		\item There exists an SPE with cost profile $c$ in $(\mathcal{X},x_0)$;
% 		\item There exists a finite-memory SPE $\sigma$ with cost profile $c$ in $(\mathcal{X},x_0)$ such that its memory is in $\mathcal{O}(M + 2^{|\Pi|} \cdot |\Pi|\cdot |V|^{(|\Pi|+1)\cdot(|\Pi|+|V|)+1})$.
% 	\end{enumerate}	
% \end{proposition}

\subsection{Algorithms}
\label{section:algo}

In this section, we provide the main ideas behind our algorithms.

To solve \textbf{Problem~\ref{prob:decisionThreshold}}\footnote{As Problem~\ref{prob:decisionThreshold} is already solved in PSPACE for SPEs \cite{DBLP:journals/corr/abs-1905.00784} we here focus only on NEs.} (resp. \textbf{Problem~\ref{prob:decisionPareto}}) for NEs,  we use Corollary~\ref{cor:corCritOut2ConstraintProb} (resp. Proposition~\ref{prop:particularPareto}) which ensures that if there exists an NE which satisfies the conditions\footnote{Satisfying the conditions is either satisfying the constraints (Problem~\ref{prob:decisionThreshold} and Problem~\ref{prob:decisionWelfare}) or having a cost profile which is Pareto optimal (Problem~\ref{prob:decisionPareto}).}, there exists another one with a lasso outcome of polynomial length. The algorithm works as follows:\emph{(i)} it guesses a lasso of polynomial length;\emph{(ii)} it verifies that the cost profile of this lasso satisfies the conditions given by the problem (resp. is Pareto optimal in $\Plays(v_0)$) and \emph{(iii)} it verifies that the lasso is the outcome of an NE (Theorem~\ref{thm:critOutcomeNE}). Notice that this latter step is done in polynomial time as the lasso has a polynomial length and the values of the coalitional games are computed in polynomial time.

To solve \textbf{Problem~\ref{prob:decisionWelfare}} (resp. \textbf{Problem~\ref{prob:decisionPareto}} for SPEs), we use the algorithm designed for Problem~\ref{prob:decisionThreshold}. Each algorithm works as follows:\emph{(i)} it guesses a cost profile $c$;\emph{(ii)} it verifies that $c$ satisfies the conditions given by the problem and \emph{(iii)} it checks, thanks to the algorithm for Problem~\ref{prob:decisionThreshold}, if there exists an equilibrium with cost profile smaller than $c$ (resp. equal to $c$). 

Notice that for Problem~\ref{prob:decisionPareto}, we need to have an oracle allowing us to know if $c$ is Pareto optimal. This leads us to study Problem~\ref{prob:4} which lies in co-NP.
\begin{problem}
	\label{prob:4}
	Given a reachability game $(\mathcal{G},v_0)$ (resp. its extended game $(\mathcal{X},x_0)$) and a lasso $\rho \in \Plays(v_0)$ (resp. $\rho \in \Plays(x_0)$), we want to decide if $(\Cost_i(\rho))_{i\in\Pi}$ is  Pareto optimal  in $\Plays(v_0)$ (resp. $\Plays(x_0)$).
\end{problem}

\subsection{Results}
\label{subsection:results}

Thanks to the previous discussions in Section~\ref{section:algo}, we obtain the following results. Notice that we do not provide the proof for the NP-hardness (resp. PSPACE-hardness) as it is very similar to the one given in~\cite{condurache_et_al:LIPIcs:2016:6256} (resp. \cite{DBLP:journals/corr/abs-1905.00784}).
\begin{theorem}
	\label{thm:Complexity}
	Let $(\mathcal{G},v_0)$ be a quantitative reachability game.
	\begin{itemize}
		\item For NEs: Problem~\ref{prob:decisionThreshold} and Problem~\ref{prob:decisionWelfare} are NP-complete while Problem~\ref{prob:decisionPareto} is NP-hard and belongs to $\Sigma^P_2$.	
		\item For SPEs: Problems~\ref{prob:decisionThreshold},~\ref{prob:decisionWelfare} and~\ref{prob:decisionPareto} are PSPACE-complete.
	\end{itemize}
\end{theorem}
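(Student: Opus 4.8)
The plan is to obtain the lower bounds by adapting known reductions and to derive every upper bound from the guess-and-check algorithms outlined in Section~\ref{section:algo}, whose correctness rests on the outcome characterizations (Theorems~\ref{thm:critOutcomeNE} and~\ref{thm:critOutcomeSPE}) and on the polynomial-lasso reductions (Corollary~\ref{cor:corCritOut2ConstraintProb} and Proposition~\ref{prop:particularPareto}). For the hardness part I would simply invoke the reductions of~\cite{condurache_et_al:LIPIcs:2016:6256} (NP-hardness for NEs) and~\cite{DBLP:journals/corr/abs-1905.00784} (PSPACE-hardness for SPEs), observing that the instances they produce already encode the desired threshold, social-welfare and Pareto constraints, so the same constructions yield hardness uniformly for Problems~\ref{prob:decisionThreshold},~\ref{prob:decisionWelfare} and~\ref{prob:decisionPareto}. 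The substance of the argument is therefore the membership statements.

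For NEs and Problem~\ref{prob:decisionThreshold}, Corollary~\ref{cor:corCritOut2ConstraintProb} guarantees that whenever a witnessing NE exists there is one whose outcome is a lasso $h\ell^\omega$ with $|h\ell|\le(|\Pi|+1)\cdot|V|$. The NP procedure thus guesses such a lasso, evaluates its cost profile in polynomial time and checks $\leq y$, and finally certifies that the lasso is an NE outcome by verifying $\Val$-consistency via Theorem~\ref{thm:critOutcomeNE}; since the values $\Val_i(v)$ are computable in polynomial time this last step is polynomial, giving NP membership. Problem~\ref{prob:decisionWelfare} reduces to this: I would guess a polynomially bounded cost profile $p$, check that the social welfare determined by $p$ satisfies $\succeq(k,c)$, and then call the Problem~\ref{prob:decisionThreshold} routine for the bound $\leq p$. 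The key monotonicity observation is that any NE with cost profile $p'\leq p$ has $\Visit(p')\supseteq\Visit(p)$, hence either visits strictly more targets (so its social welfare dominates by the first lexicographic component) or visits exactly the same set at no greater total cost (so it dominates by the second); conversely the cost profile of any witnessing NE can be taken as $p$. This keeps Problem~\ref{prob:decisionWelfare} in NP.

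For NEs and Problem~\ref{prob:decisionPareto} I would aim at $\Sigma^P_2$. Proposition~\ref{prop:particularPareto} bounds the finite entries of a Pareto-optimal NE cost profile by $|V|\cdot|\Pi|$, so such a profile has a polynomial representation. The procedure existentially guesses a profile $p$ with entries in $\{0,\dots,|V|\cdot|\Pi|\}\cup\{+\infty\}$, verifies that $p$ is Pareto optimal in $\Plays(v_0)$, and verifies via the Problem~\ref{prob:decisionThreshold} routine that there is an NE with cost profile $\leq p$. Since a Pareto-optimal $p$ is minimal in $P$, any NE outcome profile that is $\leq p$ must equal $p$, so this indeed tests the existence of an NE with profile exactly $p$. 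The Pareto-optimality test is Problem~\ref{prob:4}, which lies in co-NP: a refutation is a play with a strictly smaller profile, and by (P1) and (P2) (Lemma~\ref{lem:LassoesConsis}) it can be taken to be a polynomial-length lasso, hence a polynomial certificate checkable in polynomial time. An existential guess combined with a co-NP subtest is an $\exists\forall$ computation, placing the problem in $\Sigma^P_2$.

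The SPE cases are all handled in PSPACE by replaying the same algorithms over the extended game $(\mathcal{X},x_0)$ using the characterization of Theorem~\ref{thm:critOutcomeSPE}. Problem~\ref{prob:decisionThreshold} for SPEs is already PSPACE-complete by~\cite{DBLP:journals/corr/abs-1905.00784}, and I would reuse it as a subroutine exactly as above: the social-welfare reduction adds only an existential guess of $p$ and a polynomial welfare check, while the Pareto reduction adds a co-NP call to Problem~\ref{prob:4} (now posed in $\mathcal{X}$). Since NP, co-NP and the nondeterministic guesses are all absorbed by PSPACE (via $\mathrm{PSPACE}=\mathrm{NPSPACE}$), every SPE problem stays in PSPACE, which together with the cited hardness gives PSPACE-completeness. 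The main obstacle I anticipate is precisely this last point: the extended game has $|V|\cdot2^{|\Pi|}$ vertices, so the cost bounds and lasso reductions must be argued in terms of the original parameters $|V|$ and $|\Pi|$ rather than the size of $\mathcal{X}$. This is guaranteed by Lemma~\ref{lem:LassoesConsis} holding over extended games and by the polynomial representability of each extended vertex $(v,S)$, so that the guessed lassoes remain of polynomial length and the whole search fits in polynomial space.
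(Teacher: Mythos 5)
Your membership arguments are correct and are essentially the paper's own: the NP algorithm for Problem~\ref{prob:decisionThreshold} (guess a lasso of length at most $(|\Pi|+1)\cdot|V|$, check the threshold, verify $\Val$-consistency) is Proposition~\ref{prop:compConstNashEasy}; your ``monotonicity observation'' for Problem~\ref{prob:decisionWelfare} is exactly the paper's Lemma~\ref{lem:NEwelfare}; the $\Sigma^P_2$ bound via a co-NP oracle for Problem~\ref{prob:4}, and the absorption of all guesses into PSPACE over the extended game with bounds phrased in the original $|V|$ and $|\Pi|$, also follow the paper. (For NE-Pareto you guess a cost profile where the paper guesses the lasso directly, but that variant is precisely what the paper does in the SPE-Pareto case, and your argument that Pareto-minimality forces the equilibrium profile to equal the guessed profile is sound.)

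The genuine gap is the hardness part. You claim the reductions of \cite{condurache_et_al:LIPIcs:2016:6256} and \cite{DBLP:journals/corr/abs-1905.00784} ``already encode the desired threshold, social-welfare and Pareto constraints''; this is false for Problem~\ref{prob:decisionPareto} (both for NEs and SPEs) and for Problem~\ref{prob:decisionWelfare} with SPEs. Pareto optimality is a property relative to the set of \emph{all} plays of the constructed game, and the unmodified SAT gadget (Figure~\ref{fig:redFormToGame}) is unsound in the unsatisfiable direction. Concretely, if $\psi$ is unsatisfiable, pick a valuation whose induced play enters the gadget and then moves from $P_1$ to $T_\ell$, chosen with a minimal cost profile among such plays. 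This play is still an NE outcome: every clause player reaches $T_\ell\in F_i$, and Player $n+1$ cannot reach $T_w$ by a unilateral deviation because the clause players keep moving to $T_\ell$. Moreover its cost profile is Pareto optimal, since any play reaching $T_w$ leaves some unsatisfied clause player at cost $+\infty$ and hence cannot dominate it. So the unmodified reduction answers ``yes'' on unsatisfiable instances. This is exactly why the paper changes the construction for these problems: it adds a fresh vertex $\perp$ to $F_{n+1}$, reachable from the initial vertex by a path of sufficiently high weight. In the unsatisfiable case Player $n+1$ can then deviate towards $\perp$, which destroys the $T_\ell$-equilibria, and the surviving equilibrium profiles are dominated, hence not Pareto optimal; in the satisfiable case the high weight ensures $\perp$ is not a profitable option. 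An analogous modification (an edge of weight $2S$ to $\perp$, Figure~\ref{figure:reachPSPACEh}) is needed for the ``$\Leftarrow$'' direction of the QBF reduction for Problem~\ref{prob:decisionWelfare} with SPEs. Only Problem~\ref{prob:decisionWelfare} for NEs genuinely reuses the Problem~\ref{prob:decisionThreshold} instance unchanged, with the new threshold $(|\Pi|,S)$. So the lower bounds require modified constructions and dedicated arguments about the Pareto frontier of the constructed games, not a citation of the threshold reductions.
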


\begin{theorem}
	\label{thm:Memory}
		Let $(\mathcal{G},v_0)$ be a quantitative reachability game.
		\begin{itemize}
			\item For NEs: for each decision problem, if its answer is positive, then there exists a strategy profile $\sigma$ with memory in  $\mathcal{O}((|\Pi|+1)\cdot |V|)$ which satisfies the conditions.
			\item For SPEs: for each decision problem, if the answer is positive, then there exists a strategy profile $ \sigma$ with memory in  $\mathcal{O}(2^{|\Pi|} \cdot |\Pi|\cdot |V|^{(|\Pi|+1)\cdot(|\Pi|+|V|)+1})$  which satisfies the conditions.
			\item For both NEs and SPEs: \emph{(i)} for Problem~\ref{prob:decisionThreshold} and Problem~\ref{prob:decisionPareto}, $\sigma$ is such that: if $i \in \Visit(\outcome{\sigma}{v_0})$, $\Cost_i(\outcome{\sigma}{v_0}) \leq |\Pi|\cdot|V|$ and \emph{(ii)} for Problem~\ref{prob:decisionWelfare}, $\sigma$ is such that: $\sum_{i \in \Visit(\outcome{\sigma}{v_0})} \Cost_i(\outcome{\sigma}{v_0}) \leq |\Pi|^2 \cdot |V|$.
		\end{itemize}
\end{theorem}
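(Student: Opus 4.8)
The plan is to treat the three decision problems separately but to reduce each of them, whenever the answer is positive, to the existence of an equilibrium of the same kind (NE resp. SPE) whose outcome is a lasso $h\ell^\omega$ of length at most $(|\Pi|+1)\cdot|V|$. Once such a short-lasso witness is in hand, the two memory bounds and the cost bounds follow from the results of Section~\ref{section:lassoes} combined with the outcome characterizations of Theorems~\ref{thm:critOutcomeNE} and~\ref{thm:critOutcomeSPE}.

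For Problem~\ref{prob:decisionThreshold} this is immediate: a witnessing equilibrium $\sigma$ with $(\Cost_i(\outcome{\sigma}{v_0}))_{i\in\Pi}\le y$ is exactly the hypothesis of Corollary~\ref{cor:corCritOut2ConstraintProb}, which hands us a profile $\tau$ of the same kind, with outcome a lasso of length $\le(|\Pi|+1)\cdot|V|$, still satisfying the threshold, with each visiting player paying at most $|\Pi|\cdot|V|$ (cost bound~(i)), and with memory $\mathcal{O}((|\Pi|+1)\cdot|V|)$ for NEs resp. the stated exponential bound for SPEs. For Problem~\ref{prob:decisionPareto} I would instead invoke Proposition~\ref{prop:particularPareto}: a Pareto-optimal witness already forces $\Cost_i(\outcome{\sigma}{v_0})\le|V|\cdot|\Pi|$ for every visiting player (cost bound~(i)) and yields an equilibrium $\tau$ with the same cost profile whose outcome is a lasso of length $\le(|\Pi|+1)\cdot|V|$. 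For NEs this lasso is $\Val$-consistent, so the memory addendum of Theorem~\ref{thm:critOutcomeNE} realizes it by an NE of memory $\mathcal{O}(|h\ell|+|\Pi|)=\mathcal{O}((|\Pi|+1)\cdot|V|)$; for SPEs the same bounded-length lasso is realized by an SPE of the stated memory through the extended-game construction underlying Corollary~\ref{cor:corCritOut2ConstraintProb}.

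Problem~\ref{prob:decisionWelfare} I would reduce to Problem~\ref{prob:decisionThreshold}. Starting from a witness $\sigma$ with $\SW(\outcome{\sigma}{v_0})\succeq(k,c)$, apply Corollary~\ref{cor:corCritOut2ConstraintProb} with the threshold $y=(\Cost_i(\outcome{\sigma}{v_0}))_{i\in\Pi}$ to obtain a short-lasso equilibrium $\tau$ with $(\Cost_i(\outcome{\tau}{v_0}))_{i\in\Pi}\le y$. The point that needs care is that this preserves the social-welfare requirement: since costs do not increase, no visiting player is lost, hence $\Visit(\outcome{\sigma}{v_0})\subseteq\Visit(\outcome{\tau}{v_0})$, and when the two sets coincide the accumulated cost can only decrease; as $\succeq$ is the lexicographic order favouring a larger $\Visit$ first and a smaller accumulated cost second, we conclude $\SW(\outcome{\tau}{v_0})\succeq\SW(\outcome{\sigma}{v_0})\succeq(k,c)$. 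The memory bounds are again those of Corollary~\ref{cor:corCritOut2ConstraintProb}, and since at most $|\Pi|$ players visit, each at cost $\le|\Pi|\cdot|V|$, the accumulated cost is at most $|\Pi|^2\cdot|V|$, giving cost bound~(ii).

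The only genuinely delicate step is this social-welfare case: one must verify that passing to a short lasso can never demote the dominant coordinate of $\SW$, i.e. never removes a player from $\Visit$. This is precisely why it is safe to phrase the reduction through the componentwise inequality $\le$ on cost profiles (which keeps finite costs finite) rather than directly on cardinalities, and why the monotonicity of the lexicographic order $\succeq$ must be applied in both coordinates. The large SPE memory bound itself is not re-established here; it is inherited verbatim from Corollary~\ref{cor:corCritOut2ConstraintProb}, whose derivation already accounts for the blow-up incurred by working in the extended game $\mathcal{X}$.
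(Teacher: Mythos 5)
Your proposal is correct and follows essentially the same route as the paper: Problem~\ref{prob:decisionThreshold} is handled directly by Corollary~\ref{cor:corCritOut2ConstraintProb}, Problem~\ref{prob:decisionPareto} by Proposition~\ref{prop:particularPareto} combined with the outcome characterizations (and Proposition~\ref{prop:SPEmemory} for the SPE memory bound), and Problem~\ref{prob:decisionWelfare} by reduction to the threshold problem with the accumulated-cost bound $|\Pi|^2\cdot|V|$ obtained exactly as in the paper. The monotonicity argument you flag as the delicate step (componentwise cost decrease never removes a player from $\Visit$, hence preserves $\SW(\cdot)\succeq(k,c)$ lexicographically) is precisely the content of the paper's Lemma~\ref{lem:NEwelfare}.
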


%\input{sv-resNE}
%\input{sv-resSPE}

%
% ---- Bibliography ----
%
% BibTeX users should specify bibliography style 'splncs04'.
% References will then be sorted and formatted in the correct style.
%
 \bibliographystyle{splncs04}
 \bibliography{sv-biblio}

\newpage
\appendix
%!TEX root=sv-main.tex

\section{Complements to Section~\ref{sec:existence}}
\label{app:existenceProblem}

\subsection{Proof of Theorem~\ref{thm:visitAll}}

To prove Theorem~\ref{thm:visitAll}, we begin with a preliminary lemma and the proof of Theorem~\ref{thm:visitAll} follows.

\begin{lemma} \label{lem:visitOne}
Let $\mathcal{G}$ be a quantitative reachability game. Then for all $v_0 \in V$ for which some target set $F_j$, $j \in \Pi$, is reachable from $v_0$, there exists an SPE in $(\mathcal{G},v_0)$ whose outcome $\rho$ visits at least one target set $F_i$, $i \in \Pi$, that is, $|\Visit(\rho)| \geq 1$.
\end{lemma}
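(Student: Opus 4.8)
The plan is to reduce the statement to a construction that can be verified against the SPE outcome characterization (Theorem~\ref{thm:critOutcomeSPE}). First I would dispose of the trivial case: if $v_0 \in \bigcup_{i\in\Pi} F_i$, then every play from $v_0$ already has $|\Visit(\rho)| \geq 1$, so any SPE (which exists by Theorem 2.1 of \cite{DBLP:journals/corr/abs-1205-6346}) witnesses the claim. Hence I assume $v_0 \notin \bigcup_{i\in\Pi} F_i$, fix a player $j$ with $F_j$ reachable from $v_0$, and pick a \emph{shortest} path $\pi = v_0 v_1 \cdots v_d$ from $v_0$ to $\bigcup_{i\in\Pi} F_i$, so that $v_d \in F_i$ for some $i$ while $v_0, \ldots, v_{d-1}$ avoid every target set.

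Second, I would proceed by induction on the number of not-yet-satisfied players, working in the extended game $(\mathcal{X},x_0)$ so that ``being satisfied'' is recorded in the current vertex. The base case is a single live objective: every other player is indifferent to the future (his cost is already fixed), so all of his strategies are optimal, and the profile in which everyone cooperates to route the unique unsatisfied player to his target along a shortest path is an SPE — no indifferent player can profitably deviate, and the live player is served as fast as cooperation permits in every subgame. Its outcome visits a target.

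For the inductive step I would exploit the coalitional values $\Val_p$ of Section~\ref{subsection:charac} through a dichotomy. If some not-yet-satisfied player $p$ can \emph{force} his own target from a reachable vertex (i.e. $\Val_p < +\infty$ there), then steering the play so that $p$ reaches $F_p$ already visits a target. Otherwise no unsatisfied player can force his target at any relevant vertex, so along any play each unsatisfied $p$ sitting on one of his own vertices has remaining cost $\leq \Val_p = +\infty$: the $\Val$-consistency constraints are then vacuous, and a freely routed path reaching $\bigcup_{i\in\Pi}F_i$ is automatically $\Val$-consistent. In either case some player (say the first one hit along $\pi$) becomes satisfied, strictly decreasing the number of unsatisfied players, and I would glue an SPE supplied by the induction hypothesis at the resulting extended-game vertex, using coalitional optimal strategies as the punishment after any deviation, exactly as in the construction behind Theorem~\ref{thm:critOutcomeNE}.

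The main obstacle is the gluing: I must guarantee that the assembled profile is an SPE and not merely an NE, i.e.\ that \emph{no} player — including one steered along $\pi$ who is tempted to force his own target sooner, and any player off the main path — has a profitable one-shot deviation in any subgame. I expect to control this by verifying that the constructed lasso outcome is $\lambda^*$-consistent and then invoking Theorem~\ref{thm:critOutcomeSPE}; the delicate point is that the labeling $\lambda^*$ must dominate the remaining costs at every visited extended vertex, which is precisely where the choice of a shortest path and of optimal coalitional punishments has to be matched against the value of $\lambda^*$.
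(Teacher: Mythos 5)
There is a genuine gap, and it sits exactly where you flagged it: subgame perfection of the glued profile is never established. Every verification you actually carry out is of NE type — $\Val$-consistency of the routed path, punishment of deviators by coalitional optimal strategies as in the proof of Theorem~\ref{thm:critOutcomeNE} — and these only yield a Nash equilibrium; coalitional punishment is in general not credible in subgames, so the resulting profile need not be an SPE. The one tool that would certify subgame perfection, Theorem~\ref{thm:critOutcomeSPE}, is invoked only as an expectation (``I expect to control this by verifying that the constructed lasso outcome is $\lambda^*$-consistent''), and there is no reason your construction satisfies it: $\lambda^*$ is defined by a fixpoint computation on the extended game, and along your shortest path $\pi$ for one player, another not-yet-satisfied player $p$ may own an intermediate vertex $(v,I)$ where $\lambda^*(v,I)$ is finite and small, forcing $p$ to be served quickly in any SPE outcome through that vertex — a constraint your path ignores. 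The same looseness affects the first branch of your dichotomy: $\Val_p<+\infty$ at some reachable vertex only constrains NE outcomes that actually pass through a $p$-vertex with finite value, so ``steering the play'' there is not justified as an SPE outcome. In short, the proposal is a plan whose decisive step is missing, and the step is not routine.

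For contrast, the paper sidesteps the whole difficulty with a boundary argument. Let $U$ be the vertices from which some target set is reachable and $U'\subseteq U$ those where the lemma's conclusion holds; if $U\setminus U'\neq\emptyset$, there is an edge $(u,u')$ with $u\in U\setminus U'$ and $u'\in U'$. One glues an arbitrary SPE $\sigma^u$ of $(\mathcal{G},u)$ with a target-visiting SPE $\sigma^{u'}$ of $(\mathcal{G},u')$, redirecting only histories that start with $uu'$. Subgame perfection of the glued profile is then checked directly: off the edge it coincides with $\sigma^u$; after $uu'$ costs just shift by $1$, so NE-ness is inherited from $\sigma^{u'}$; and at $u$ itself the owner's cost is $+\infty$ under $\sigma^u$ (since $u\notin U'$) and remains $+\infty$ under any deviation avoiding $u'$, so no deviation at the gluing point can be profitable. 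This yields $u\in U'$, a contradiction. The crucial trick — choosing the gluing edge on the boundary of $U\setminus U'$ so that the only dangerous deviator is already at cost $+\infty$ — is precisely the ingredient your construction lacks, which is why your gluing along a whole path cannot be certified without solving the $\lambda^*$ problem you left open.
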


\begin{proof}
By Theorem 2.1 in \cite{DBLP:journals/corr/abs-1205-6346}, there exists an SPE in $(\mathcal{G},v_0)$ for each initial vertex $v_0 \in V$. Consider the set $U \subseteq V$ of vertices $u$ for which some $F_j$ is reachable from $u$, and the set $U' \subseteq U$ of those vertices $u$ for which there is an SPE in $(\mathcal{G},u)$ that visits at least one target set. We have to prove that $U = U'$. 

Assume the contrary and let $v_0 \in U\setminus U'$. We claim that there exists an edge $(u,u')$ such that $u \in U \setminus U'$ and $u' \in U'$. Indeed as $v_0 \in U$, there exists a history $h = v_0v_1 \ldots v_k$ with $v_k \in F_j$ for some $j$. Hence $v_k \in U'$ since the outcome of all SPEs in $(\mathcal{G},v_k)$ immediately visits $F_j$. As along $h$ we begin with $v_0 \in U \setminus U'$ and we end with $v_k \in U'$, there must exist an edge $(v_\ell,v_{\ell + 1}) = (u,u')$ with $u \in U \setminus U'$ and $u' \in U'$. 

Let $\sigma^u$ (resp. $\sigma^{u'}$) be an SPE in $(\mathcal{G},u)$ (resp. in $(\mathcal{G},u')$). As $u' \in U'$, we can suppose that the outcome of $\sigma^{u'}$ visits some target set $F_j$. From $\sigma^{u}$ and $\sigma^{u'}$, we are going to construct another SPE $\tau$ in $(\mathcal{G},u)$ whose outcome will now visit this set $F_j$. This will lead to a contradiction with $u \in U\setminus U'$. We define such a strategy profile $\tau$ equal to $\sigma^u$ except that it is replaced by $\sigma^{u'}$ for all histories with prefix $uu'$. More precisely,
\begin{itemize}
\item for the particular history $u$, if $u \in V_i$, then $\tau_i(u) = u'$,  
\item for each history $uu'h \in \Hist_i$, $i \in \Pi$, we define $\tau_i(uu'h) = \sigma^{u'}_i(u'h)$,   
\item for each history $uv'h \in \Hist_i$, $i \in \Pi$, with $v' \neq u'$, we define $\tau_i(uv'h) = \sigma^{u}_i(uv'h)$.
\end{itemize}
Clearly the outcome of $\tau$ is equal to $u \outcome{\sigma^{u'}}{u'}$ and thus visits $F_j$. It remains to show that $\tau$ is an SPE, i.e., that $\rest{\tau}{h}$ is an NE in the subgame $(\rest{\mathcal{G}}{h},v)$ for all $hv \in \Hist_i(v)$, $i \in \Pi$.
\begin{itemize}
\item For all histories $hv$ that begin with $uv'$ with $v' \neq u'$, clearly $\rest{\tau}{h}$ is an NE in $(\rest{\mathcal{G}}{h},v)$ because $\rest{\tau}{h} = \rest{\sigma^u}{h}$ and $\sigma^u$ is an SPE. 
\item Take any history $hv$ that begin with $uu'$, and let $h = uh'$. Let $\tau'_i$ be a deviating strategy for player~$i$ in $(\rest{\mathcal{G}}{h},v)$. By definition of $\tau$ we have 
\begin{eqnarray*}
\outcome{\rest{\tau}{h}}{v} &=& u \outcome{\rest{\sigma^{u'}}{h'}}{v} \\
\outcome{(\tau'_i,\rest{\tau}{h,-i})}{v} &=& u \outcome{(\tau'_i,\rest{\sigma^{u'}}{h',-i})}{v}
\end{eqnarray*}
Moreover, as $u$ belongs to no target set, we have $\Cost_i(u \rho) = 1 + \Cost_i(\rho)$ for all plays $\rho \in \Plays(u')$. It follows that if $\tau'_i$ is a profitable deviation for player~$i$ with respect to $\rest{\tau}{h}$, it is also a profitable deviation with respect to $\rest{\sigma^{u'}}{h'}$. The latter case never holds because $\sigma^{u'}$ is an SPE (and in particular $\rest{\sigma^{u'}}{h'}$ is an NE). Therefore  $\rest{\tau}{h}$ is an NE in $(\rest{\mathcal{G}}{h},v)$.
\item It remains to consider the history $u$ and to prove that $\tau$ is an NE in $(\mathcal{G},u)$.  From what has been gathered so far, only player $i$ such that $u \in V_i$ might have a profitable deviation by deviating at the initial vertex $u$ with a strategy $\tau'_i$ such that $\tau'_i(u) = v' \neq u' = \tau_i(u)$. Notice that since $u \in U \setminus U'$, we have $\Cost_i(\outcome{\sigma^u}{u}) = + \infty$ and since $\sigma^u$ is an SPE (and in particular an NE), we have $\Cost_i(\outcome{\tau'_i, \sigma^u_{-i}}{u}) = + \infty$. Moreover as $\tau'_i(u) = v' \neq u'$ and by definition of $\tau$, we have $\Cost_i(\outcome{\tau'_i, \sigma^u_{-i}}{u}) = \Cost_i(\outcome{\tau'_i, \tau_{-i}}{u}) = + \infty$. It follows that $\tau'_i$ is not a profitable deviation for player~$i$ with respect to $\tau$, and then $\tau$ is an NE in $(\mathcal{G},u)$.
\end{itemize}
\qed\end{proof}

\begin{proof}[of Theorem~\ref{thm:visitAll}] 
Let $(\mathcal{G},v_0)$, with $\mathcal{G} = (\mathcal{A}, (\Cost_i)_{i\in \Pi}, (F_i)_{i\in \Pi})$, be an initialized quantitative reachability game such that its arena is strongly connected. Assume by contradiction that there exists no SPE in $(\mathcal{G},v_0)$
whose outcome visits all target sets $F_i$, $i \in \Pi$, that are non empty. By Theorem 2.1 in \cite{DBLP:journals/corr/abs-1205-6346}, there exists an SPE $\sigma$ in $(\mathcal{G},v_0)$, and we take such an SPE $\sigma$ whose outcome $\rho = \outcome{\sigma}{v_0}$ visits a maximum number of target sets, say $F_{i_1}, F_{i_2}, \ldots, F_{i_k}$. Thus by assumption there exists at least one $F_j \neq \emptyset$ with $j \not\in \{i_1, \ldots, i_k\}$ that is not visited by $\rho$. Thanks to Lemma~\ref{lem:visitOne}, we are going to define from $\sigma$ another SPE $\tau$ in $(\mathcal{G},v_0)$ whose outcome visits all $F_{i_1}, \ldots, F_{i_k}$ as well as an additional target set. This will lead to a contradiction.

Consider a prefix $\rho_0\rho_1 \ldots \rho_\ell$ of $\rho$ that visits all $F_{i_1}, \ldots, F_{i_k}$. We denote it by $gu$ with $u = \rho_\ell$. From $\mathcal G$ we define the quantitative reachability game $\mathcal{G}' = (\mathcal{A}, (\Cost'_i)_{i\in \Pi}, (F'_i)_{i\in \Pi})$ with the same arena $\arena$ and such that $F'_i = \emptyset$ if $i \in \{i_1, \ldots, i_k\}$ and $F'_i = F_i$ otherwise ($(\Cost'_i)_{i\in \Pi}$ is defined with respect to $(F'_i)_{i\in \Pi}$ as in Definition~\ref{def:quantitativeGame}). Notice that $F'_j = F_j$ is not empty and it is reachable from $u$ since $\arena$ is strongly connected. Therefore by Lemma~\ref{lem:visitOne}, there exists an SPE $\sigma'$ in $(\mathcal{G}',u)$ that visits at least one target set $F'_{j'}$. From $\sigma$ and $\sigma'$, we define a strategy profile $\tau$ in $(\mathcal{G},v_0)$ as follows: let $h \in \Hist_i(v_0)$,
\begin{itemize}
\item if $h = guh'$ for some $h'$, then $\tau_i(h) = \sigma'_i(u h')$,
\item otherwise $\tau_i(h) = \sigma_i(h)$.
\end{itemize}
Thus, $\tau$ acts as $\sigma$, except that after a history beginning with $gu$, it acts as $\sigma'$. Clearly the outcome of $\tau$ is equal to $g \outcome{\sigma'}{u}$ and thus visits $F'_{j'} = F_{j'}$ in addition to $F_{i_1}, \ldots, F_{i_k}$. It remains to show that $\tau$ is an SPE. 
Consider $hv \in \Hist_i(v_0)$, $i \in \Pi$, and let us show that $\rest{\tau}{h}$ is an NE in $(\rest{\mathcal{G}}{h},v)$.
\begin{itemize}
\item If neither $hv$ is a prefix of $gu$ nor $gu$ is a prefix of $hv$, then $\rest{\tau}{h} = \rest{\sigma}{h}$ by definition of $\tau$, and $\rest{\tau}{h}$ is an NE in $(\rest{\mathcal{G}}{h},v)$ because $\sigma$ is an SPE in $(\mathcal{G},v_0)$.
\item If $gu$ is a prefix of $hv$, let $h'$ such that $gh' = h$. Suppose first that $hv$ visits $F_i$, then player~$i$ has clearly no incentive to deviate in $(\rest{\mathcal{G}}{h},v)$. Suppose now that $hv$ does not visit $F_i$, then $i \not\in \{i_1,\ldots,i_k\}$ and $F'_i = F_i$ by definition of $\mathcal{G}'$. Hence for all plays $\pi$ in $(\rest{\mathcal{G}}{h},v)$ that start in $v$, $h'\pi$ is a play in $(\mathcal{G'},u)$ that start in $u$, and we have $\Cost_i(h\pi) = |gu| + \Cost'_i(h'\pi)$. Hence by definition of $\tau$, a profitable deviation for player~$i$ with respect to $\rest{\tau}{h}$ $(\rest{\mathcal{G}}{h},v)$ would be a profitable deviation with respect to $\rest{\sigma'}{h'}$ in $(\rest{\mathcal{G'}}{h'},v)$. The latter case cannot happen as $\sigma'$ is an SPE in $(\mathcal{G}',u)$ and it follows that $\rest{\tau}{h}$ is an NE in $(\rest{\mathcal{G}}{h},v)$.
\item Consider the last case where $hv$ is a prefix of $gu$ with $hv \neq gu$, and let $hh' = g$. Consider $\tau'_i $ a deviating strategy for player~$i$ with respect to $\rest{\tau}{h}$ in the subgame $(\rest{\mathcal{G}}{h},v)$, and let $\rho' = \outcome{(\tau'_i, \rest{\tau}{h,-i})}{v}$. Without loss of generality, we can suppose that $h'u$ is not a prefix of $\rho'$ since this case was treated at the previous item. 
Notice that if $i \in \{i_1,\ldots,i_k\}$, then $\Cost_i(\outcome{\rest{\tau}{h}}{v}) = \Cost_i(\outcome{\rest{\sigma}{h}}{v})$, otherwise $
\Cost_i(\outcome{\rest{\tau}{h}}{v}) \leq + \infty = \Cost_i(\outcome{\rest{\sigma}{h}}{v})$. In both cases, as $h'u$ is a prefix of both $\outcome{\rest{\tau}{h}}{v}$ and $\outcome{\rest{\sigma}{h}}{v}$, but not a prefix of $\rho'$, if $\tau'_i$ was a profitable deviation for player~$i$ with respect to $\rest{\tau}{h}$, it would also be a profitable deviation with respect to $\rest{\sigma}{h}$ which is impossible since $\sigma$ is an SPE. 
\end{itemize}
\qed\end{proof}

We end with an example which shows that the hypothesis Theorem~\ref{thm:visitAll} that the arena is strongly connected is necessary.

\begin{example} \label{ex:0}
Consider the initialized qualitative reachability game $(\mathcal{G},v_0)$ of Figure~\ref{fig:gameNotStrongly}. There are two players, Player 1 who owns round vertices and Player 2 who owns square vertices, and $F_1 = \{v_1\}$, $F_2 = \{v_2\}$. Clearly there is a unique NE $\sigma = (\sigma_1,\sigma_2)$ in $(\mathcal{G},v_0)$ such that $\sigma_1(v_0) = v_1$ and $\sigma_2(v_1) = v_1$, $\sigma_2(v_2) = v_2$. Its outcome only visits  $F_1$ (and not $F_2$).
	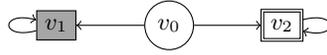
\begin{figure}[ht!]
		\centering
		\begin{tikzpicture}
			\node[draw,circle] (v0) at (0,0){$v_0$};
			\node[draw,fill=gray!70] (v1) at (-1.5,0){$v_1$};
			\node[draw, accepting] (v2) at (1.5,0){$v_2$};
			
			\draw[->] (v0) -- (v1);
			\draw[->] (v0) -- (v2);
			\draw[->] (v1) to [loop left] (v1);
			\draw[->] (v2) to [loop right] (v2);
			
		\end{tikzpicture}
		\caption{A qualitative reachability game with two players such that $F_1 = \{v_1\}$ and $F_2 = \{v_2\}$.}
		\label{fig:gameNotStrongly}
	\end{figure}
\qed\end{example}

%\subsection{Proof of Theorem~\ref{thm:existencePareto}}

%!TEX root=sv-main.tex
\section{Complements to Section~\ref{subsection:charac}}

\subsection{Complements about coalitionnal games}

We provide the formal definitions of (quantitative) coalitional game, value and optimal strategy.

\begin{definition}[(Quantitative) Coalitional game]
	\label{def:coalitionalGame}
	
	 Let $\arena = (\Pi, V, E; (V_i)_{i\in\Pi})$ be an arena and $\mathcal{G}=(\arena, (\Cost_i)_{i\in\Pi}, (F_i)_{i\in\Pi})$ be a quantitative reachability game with $|\Pi| \geq 2$. With each player $i\in \Pi$, we can associate a two-player zero-sum quantitative reachability game depicted by $\mathcal{G}_i = (\arena_i, (\CostMin, \GainMax), F)$ and defined as follows: \emph{i)} $\arena_i = (\{i,-i\}, V, (V_i,V\backslash V_i), E)$ where Player $i$ (resp. $-i$) can be called Player Min (resp. Player Max); \emph{ii)} $\CostMin = \Cost_i$ and $\GainMax = \CostMin$ and \emph{iii)} $F = F_i$.
	% 
	% \begin{itemize}
	% 	\item $\arena_i = (\{i,-i\}, V, (V_i,V\backslash V_i), E)$ where Player $i$ (resp. $-i$) can be called Player Min (resp. Player Max);
	% 	\item $\CostMin = \Cost_i$ and $\GainMax = \CostMin$;
	% 	\item $F = F_i$.
	% \end{itemize}
\end{definition}

\begin{definition}[Value]
	\label{def:value}
	 Let $\mathcal{G}_i$ be a coalitional game and $v\in V$ be a vertex, we define the value of $\mathcal{G}_i$ from $v$ as :
	\begin{equation} \Val_i(v) = \inf_{\sigma_{1} \in \Sigma_{Min}} \sup_{\sigma_2 \in \Sigma_{Max}} \CostMin(\outcome{\sigma_1,\sigma_2}{v}). \label{eq:defValue} \end{equation} \end{definition}

Remark that, as for each $i \in \Pi$ the coalitional game $\mathcal{G}_i$ is determined (\cite{DBLP:conf/lfcs/BrihayePS13}) and $\CostMin = \GainMax$, the equality~\eqref{eq:defValue} could be defined as $Val_i(v)= \sup_{\sigma_2 \in \Sigma_{Max}} \inf_{\sigma_{1} \in \Sigma_{Min}} \GainMax(\outcome{\sigma_1,\sigma_2}{v})$.

\begin{definition}[Optimal strategy]
	\label{def:optimalStrat}
	Let $\mathcal{G}_i$ be a coalitional game, we say that $\sigma_1^* \in \Sigma_{Min}$ is an \emph{optimal strategy} for player $Min$ if, for all $v \in V$,
	 we have that:
	$ \sup_{\sigma_2 \in \Sigma_{Max}} \CostMin( \outcome{\sigma_1^*,\sigma_2}{v}) \leq \Val_i(v) $. Similarly, we say that $\sigma_2^* \in \Sigma_{Max}$ is an optimal strategy for player $Max$ if, for all $v \in V$,
     we have that:	

	$\inf_{\sigma_1 \in \Sigma_{Min}} \GainMax (\outcome{\sigma_1, \sigma_2^*}{v}) \geq \Val_i(v)$.	
	
\end{definition}

\subsection{Proof of Theorem~\ref{thm:critOutcomeNE}}
\label{app:critereOutNE}

\begin{proof}[of Theorem~\ref{thm:critOutcomeNE}]
Let us first recall that, for all $i\in \Pi$, the coalitional game $\mathcal{G}_i$ is determined and there are optimal positional strategies for both players $(\sigma^*_i, \sigma^*_{-i})$. Moreover as $\CostMin = \GainMax$, for all $v \in V$, we have: 
$$ \inf_{\sigma_i \in \Sigma_{Min}} \CostMin(\outcome{\sigma_i, \sigma^*_{-i}}{v}) = \Val_i(v) = \sup_{\sigma_{-i} \in \Sigma_{Max}} \CostMin(\outcome{\sigma^*_i, \sigma_{-i}}{v}).$$
From the optimal strategy $\sigma^*_{-i}$ in $\mathcal{G}_i$ we can extract a strategy $\sigma^*_{j,i}$ in $\mathcal{G}$. Notice also that even if $\sigma^*_i$ is a strategy in $\mathcal{G}_i$, we can use it as a strategy for Player $i$ in $\mathcal{G}$.
Let us prove the equivalence between the two assertions.\\

$\mathbf{1\Rightarrow 2}$:
	Let $\sigma$ be a Nash equilibrium in $(\mathcal{G},v_0)$ such that $\outcome{\sigma}{v_0} = \rho$.  Let us assume by contradiction that there exist $i \in \Pi$ and $k \in \mathbb{N}$ such that $i \not \in \Visit(\rho_0\ldots\rho_k)$ and $\rho_k \in V_i$ such that:
	\begin{equation}
		\Cost_i(\rho_{\geq k}) > \Val_i(\rho_k). \label{eq:critContrad1}
	\end{equation}
	Let $h = \rho_0\ldots \rho_{k-1}$, we can write:
	\begin{equation}
		\Cost_i(\rho_{\geq k}) = \Cost_i(\outcome{\rest{\sigma}{h}}{\rho_k}). \label{eq:crit1}
	\end{equation}
	Additionally, by definition of value in a coalitional game and thanks to the fact that the optimal strategies are positional:
	\begin{align}
		\Val_i(\rho_k) &= \sup_{\tau_{-i} \in \Sigma_{Max}} \CostMin(\outcome{\sigma^*_i,\tau_{-i}}{\rho_k}) \notag\\
					   &\geq \CostMin(\outcome{\sigma^*_i, \sigma_{-i \restriction h}}{\rho_k}) \notag\\
					   &= \Cost_i(\outcome{\sigma^*_i, \sigma_{-i \restriction h}}{\rho_k}) \label{eq:crit2}					
	\end{align} 
	
	where $\sigma^*_i$ is the optimal strategy of Player $i$ in $\mathcal{G}_i$ and $\sigma_i$ is an abuse of notation to depict the strategy of the coalition $-i = \Pi\backslash\{i\}$ which follows strategies $\sigma_j$ for all  $j \neq i$.
	
	By~\eqref{eq:critContrad1},~\eqref{eq:crit1} and~\eqref{eq:crit2}, it follows that:
	
	$$ \Cost_i(\outcome{\sigma^*_i, \sigma_{-i \restriction h}}{\rho_k}) < \Cost_i(\outcome{\rest{\sigma}{h}}{\rho_k}).$$
	
	As $i \not \in \Visit(h)$ by hypothesis, we can conclude that:
	
	$$\Cost_i(h\outcome{\sigma^*_i, \sigma_{-i \restriction h}}{\rho_k}) < \Cost_i(h\outcome{\rest{\sigma}{h}}{\rho_k}) = \Cost_i(\rho). $$
	
	This means that following $\sigma_i$ along $h$ and then $\sigma^*_i$ once he reaches $\rho_k$ is a profitable deviation for Player i. This concludes the proof.\\
	
$\mathbf{2\Rightarrow 1}$: Let $\tau$ be a strategy profile such that $\outcome{\tau}{v_0} = \rho$. From $\tau$ we aims to construct a Nash equilibrium with the same outcome. The main idea is the following one: first, all player play according to $\tau$. But if a player, let us call him Player $i$ deviates from $\tau_i$, the other players form a coalition and each of them plays their strategy obtained thanks to the strategy $\sigma^*_{-i}$ in $\mathcal{G}_i$.

In order to define properly the researched Nash equilibrium, we have to define  a punishment function $P : \Hist(v_0) \rightarrow \Pi \cup \{ \perp \}$ which allows us to know who is the  player who has deviated for the first time from $\tau$. So for all $h \in \Hist(v_0)$, $P(h) = \perp$ if no player has yet deviated and $P(h)= i$ for some $i \in \Pi$ if Player $i$ is the first player who has deviated along $h$. We can define $P$ as follows: for the initial vertex $P(v_0) = \perp$ and then for all history $hv \in Hist(v_0)$ with $v\in V$:

$$ P(hv) = \begin{cases} \perp & \text{if } P(h) = \perp \text{ and } hv \text{ is a prefix of } \rho, \\
                          i & \text{if } P(h) = \perp \text{, } hv \text{ is not a prefix of } \rho \text{ and } h \in \Hist_i,\\
						  P(h) & \text{ otherwise.}\end{cases}.$$
						
We now define $\sigma$. For all $i \in \Pi$ and for all $h \in \Hist_i(v_0)$:

$$ \sigma_i(h) = \begin{cases} \tau_i(h) & \text{if } P(h) = \perp, \\
							   \sigma^*_i(h) & \text{if } P(h)=i, \\
                               \sigma^*_{i,P(h)}(h) & \text{ otherwise} \end{cases}.$$

It is clear that $\outcome{\sigma}{v_0} = \rho$. It remains to prove that $\sigma$ is a Nash equilibrium in $(\mathcal{G},v_0)$. Let us assume that $\sigma$ is not an NE.  It means that there exists a profitable deviation depicted by $\tilde{\sigma}_i$ for some player $i$. We chose $i$ such that $i$ is the first player who has profitable deviation from $\sigma$ along $\rho$. Let $\tilde{\rho} = \outcome{\tilde{\sigma}_i, \sigma_{-i}}{v_0}$ the outcome such that Player $i$ plays his profitable deviation. As $\tilde{\sigma}_i$ is a profitable deviation we have:

\begin{equation} \Cost_i(\tilde{\rho}) < \Cost_i(\rho). \label{eq:crit3}\end{equation}
	
	Moreover as $\rho$ and $\tilde{\rho}$ both begin in $v_0$, they have a common prefix. Let $hv\in \Hist_i$ this longest common prefix. 
	We have that: $\rho = h \outcome{\rest{\sigma}{h}}{v}$ and $\tilde{\rho} = h \outcome{\rest{\tilde{\sigma_{i}}}{h}, \sigma_{-i \restriction h}}{v}$. Notice that $i \not \in \Visit(hv)$. 
	But, by definition of $\sigma$ and as the optimal strategies in $\mathcal{G}_i$ are positional, we can rewrite these two equalities as follows: $ \rho = h \outcome{\tau_{\restriction h}}{v}$ and $\tilde{\rho} = h \outcome{\rest{\tilde{\sigma_{i}}}{h}, (\sigma^*_{j,i})_{j \in \Pi\backslash\{i\}}}{v}.$
	Additionally, thanks to the definition of the value in the coalitional game $\mathcal{G}_i$:
	
	\begin{align}
		\Val_i(v) &= \inf_{\mu_i \in \Sigma_{Min}} \CostMin(\outcome{\mu_i,\sigma^*_{-i}}{v}) \notag \\
                  &\leq \CostMin(\outcome{\tilde{\sigma}_{i\restriction h}, \sigma^*_{-i}}{v}) \notag \\	
                  &=\Cost_i(\outcome{\rest{\tilde{\sigma_{i}}}{h}, (\sigma^*_{j,i})_{j \in \Pi\backslash\{i\}}}{v}).\label{eq:crit4}
	\end{align}
	By hypothesis, as $hv$ is a prefix of $\rho$ and $i \not \in \Visit(hv)$, we have that $\Val_i(v) \geq \Cost_i(\outcome{\tau_{\restriction h}}{v})$. Thus by~\eqref{eq:crit4}, it follows that:
	\begin{equation*}  \Cost_i(\outcome{\rest{\tilde{\sigma_{i}}}{h}, (\sigma^*_{j,i})_{j \in \Pi\backslash\{i\}}}{v}) \geq \Cost_i(\outcome{\tau_{\restriction h}}{v}). \end{equation*}
		
		And thanks to the definition of the cost function associated with quantitative reachability games, we have that:
		\begin{equation*}  \Cost_i(h\outcome{\rest{\tilde{\sigma_{i}}}{h}, (\sigma^*_{j,i})_{j \in \Pi\backslash\{i\}}}{v}) \geq \Cost_i(h\outcome{\tau_{\restriction h}}{v}). \end{equation*}
			
			Thus, we can conclude that $\Cost_i(\tilde{\rho}) \geq \Cost_i(\rho)$ which leads to a contradiction with~\eqref{eq:crit3}. This concludes the proof. \qed
\end{proof}

\subsection{Finite machine which represents strategies in Theorem~\ref{thm:critOutcomeNE}}
\label{app:memoryCritNE}

A finite-state Machine ${\cal M} = (M, m_0, \alpha_u, \alpha_n)$ is such that $M$ is a finite set of states (the memory of the strategy), $m_0 \in M$ is the initial memory state, $\alpha_u\colon M \times V \rightarrow M$ is the update function, and $\alpha_n\colon M \times V_i \rightarrow V$ is the next-action function. The machine $\cal M$ defines a strategy $\sigma_i$ such that $\sigma_i(h v) = \alpha_n(\widehat{\alpha}_u(m_0,h),v)$ for all histories $h v \in \Hist_i$, where $\widehat{\alpha}_u(m,\epsilon) =  m$ and  $\widehat{\alpha}_u(m,hv) = \alpha_u(\widehat{\alpha}_u(m,h),v)$ for all $m \in M$ and $hv \in \Hist$. The \emph{size} of the strategy $\sigma_i$ is the size $|M|$ of its machine $\cal M$. Note that $\sigma_i$ is positional  when $|M| = 1$.\\

 Formally, let $\rho = \rho_0\ldots \rho_{k-1}(\rho_k\ldots\rho_n)^\omega$, we define for all $i \in \Pi$ a finite-state machine $\mathcal{M}_i=(M,m_0,\alpha_u,\alpha_n)$ where:
\begin{itemize}
	\item $M = \{ \rho_0\rho_0, \rho_0\rho_1, \ldots,\rho_{n-1}\rho_n, \rho_n\rho_k\} \cup \Pi$.\\
	The set $\{ \rho_0\rho_0, \rho_0\rho_1, \ldots,\rho_{n-1}\rho_n, \rho_n\rho_k\}$ allows us to be sure that the outcome $\rho$ is well followed by all the players.  Once it is no longer the case, we only have to retain who has deviated, this is the role of $\Pi$. Notice that, even if we can have $\rho_m\rho_{m+1} = \rho_{m'}\rho_{m'+1}$ along $\rho$, the edges $\rho_m\rho_{m+1}$ and $\rho_{m'}\rho_{m'+1}$ are depicted by two different memory state in $M$. Thus $|M| = |h\ell|+2+|\Pi|$.
	\item $m_0= \rho_0\rho_0$ is the memory state which specifies that the plays has not begun yet.
	\item $ \alpha_u : M \times V \rightarrow M$ is defined as follows: for all $m \in M$ and $v \in V$:
	$$ \alpha_u(m,v) = \begin{cases} j & \text{if } m=j\in \Pi \text{ or } (m=v_1v_2, \text{ with } v_1,v_2\in V, v \neq v_2 \text{ and } v_1 \in V_j)\\
	  								\rho_t\rho_{t+1} & \text{if } m = u\rho_t (\text{with } t\in\{0,\ldots,|h\ell|-1\}), u \in V \text{ and } v= \rho_t \\
									\rho_n\rho_k & \text{ otherwise } (m=u\rho_n \text{ and } v= \rho_n).\end{cases}.$$
	\item $ \alpha_n: M \times V_i \rightarrow V$ is defined in this way: for all $m \in M$ and $v \in V_i$:
	$$ \alpha_n(m,v) = \begin{cases} \rho_1 & \text{if } m =\rho_0\rho_0  \text{ and } v = \rho_0 \\ 
									 \rho_{t+2}& \text{if } m = \rho_{t}\rho_{t+1} (\text{with } t \in \{0,\ldots, |h\ell|-2 \}) \text{ and }v= \rho_{t+1} \\
									\rho_k & \text{if } m = \rho_{n-1}\rho_n \text{ and } v= \rho_n \\
									\sigma^*_i(v) & \text{if } m=i \\
									\sigma^*_{i,j}(v) & \text{ otherwise}\end{cases}.$$
\end{itemize}

\subsection{Complements about extended game}

We here provide the formal definition of an extended game.

\begin{definition}[Extended game] \label{def:extGame}
Let $\mathcal{G} = (\mathcal{A}, (\Cost_i)_{i\in \Pi},  (F_i)_{i\in\Pi})$ be a  quantitative reachability game with an arena $\mathcal{A} = (\Pi, V, E, (V_i)_{i\in \Pi})$, and let $v_0$ be an initial vertex. The \emph{extended game} of $\mathcal{G}$ is equal to $\extGame = (\extG, (\Cost^X_i)_{i\in \Pi}, (\extFi{i})_{i\in\Pi})$ with the arena $\extG = (\Pi, \extV, \extE, (\extVi{i})_{i\in\Pi})$, such that:
\begin{itemize}
    \item $\extV = V \times 2^\Pi$
    \item $((v,I),(v',I')) \in \extE$ if and only if $(v,v')\in E$ and $I' = I \cup \{i \in \Pi \mid v' \in F_i \}$
    %\I(v')$;
    \item $(v,I) \in \extVi{i}$ if and only if $v\in V_i$
    \item $(v,I) \in \extFi{i}$ if and only if $i \in I$
    \item for each $\rho \in \Plays_X$, $\Cost^X_i(\rho)$ is equal to the least index $k$ such that $\rho_k \in \extFi{i}$, and to $+\infty$ if no such index exists.
\end{itemize}
The initialized extended game $(\extGame,x_0)$ associated with the initialized game $({\mathcal G}, v_0)$ is such that $x_0 = (v_0,I_0)$ with $I_0 = \{i \in \Pi \mid v_0 \in F_i \}$. 
\end{definition}

%!TEX root=sv-main.tex
\section{Complements to Section~\ref{section:lassoes}}

\begin{lemma}
	\label{lem:Lassoes}
	Let $(\mathcal{G},v_0)$ be a quantitative reachability game and $\rho \in \Plays$ be a play.
	\begin{itemize}
		\item If $\rho'$ is  obtained by applying (P1) on $\rho$, then $(\Cost_i(\rho'))_{i \in \Pi} \leq (\Cost_i(\rho))_{i\in\Pi} $
		\item If $\rho'$ is  obtained by applying (P2) on $\rho$, then $(\Cost_i(\rho'))_{i\in\Pi} = (\Cost_i(\rho))_{i\in\Pi}$.	
		\item Applying (P1) until it is no longer possible and then (P2), leads to a lasso $\rho'$ with length at most $(|\Pi|+1)\cdot |V|$ and $\Cost_i(\rho')\leq |V|\cdot|\Pi|$ for each $i \in \Visit(\rho')$.			
	\end{itemize}
\end{lemma}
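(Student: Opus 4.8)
The plan is to treat the three bullets in order, the first two by a direct analysis of how first-visit positions are affected and the third by combining a monotonicity argument on $\Visit$ with a pigeonhole count. For the first bullet, I would fix the removed unnecessary cycle $\rho_k\ldots\rho_{k+\ell}$, so that $\rho_k=\rho_{k+\ell}$, $\Visit(\rho_0\ldots\rho_k)=\Visit(\rho_0\ldots\rho_{k+\ell})$, and $\rho'=\rho_0\ldots\rho_k\rho_{k+\ell+1}\ldots$, and then fix a player $i$. The crucial observation I would isolate first is that equality of the two $\Visit$ sets forbids any \emph{first} visit of $i$ to $F_i$ strictly inside the cycle: if such a first visit occurred at some index $k<m<k+\ell$, then $i\in\Visit(\rho_0\ldots\rho_{k+\ell})\setminus\Visit(\rho_0\ldots\rho_k)$, a contradiction. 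Only three cases remain. If $i$ first visits $F_i$ within the common prefix $\rho_0\ldots\rho_k$, that visit survives unchanged and $\Cost_i(\rho')=\Cost_i(\rho)$; if $i$ first visits $F_i$ at an index $\geq k+\ell$, that vertex is shifted earlier by exactly $\ell$ positions, so $\Cost_i(\rho')=\Cost_i(\rho)-\ell$; if $i$ never visits, both costs are $+\infty$. In every case $\Cost_i(\rho')\leq\Cost_i(\rho)$, which is the claimed componentwise inequality.

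For the second bullet, recall that (P2) copies a prefix $h$ of $\rho$ long enough for every visiting player to have reached his target and then closes a cycle $\ell$ that is a factor of $\rho$ lying entirely after the last first-visit, so $\rho'=h\ell^\omega$ with $\Visit(h)=\Visit(\rho)$. For $i\in\Visit(\rho)$ the first visit already occurs inside $h$, hence $\Cost_i(\rho')=\Cost_i(\rho)$; for $i\notin\Visit(\rho)$ the target $F_i$ is met nowhere in $\rho$, and therefore neither in $h$ nor in the factor $\ell$, so $\Cost_i(\rho')=+\infty=\Cost_i(\rho)$. This gives the exact equality of cost profiles.

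For the third bullet I would exploit that $n\mapsto\Visit(\rho_0\ldots\rho_n)$ is a nondecreasing chain of subsets of $\Pi$, hence takes at most $|\Pi|+1$ distinct values and partitions any prefix into at most $|\Pi|+1$ maximal blocks of \emph{constant} $\Visit$. After exhaustively applying (P1), no such block can contain a repeated vertex, since that repetition would have equal $\Visit$ sets at its endpoints and thus be an unnecessary cycle; so each block consists of pairwise distinct vertices and has length at most $|V|$. Bounding the prefix up to the index of the last first-visit this way yields length at most $|\Pi|\cdot|V|$, which in turn bounds $\Cost_i(\rho')$ by $|V|\cdot|\Pi|$ for every $i\in\Visit(\rho')$. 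Finally (P2) closes a cycle after this prefix; by pigeonhole a vertex repeats within at most $|V|$ further steps, so $\ell$ has length at most $|V|$ and $|h\ell|\leq(|\Pi|+1)\cdot|V|$.

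The main obstacle I anticipate is making \emph{``applying (P1) until it is no longer possible''} precise for an infinite play, since the stabilized tail always contains unnecessary cycles and (P1) would never terminate there. I would circumvent this by applying the block argument only to the finite prefix up to the last first-visit, where $\Visit$ genuinely changes, and delegating the entire infinite tail to the single cycle produced by (P2). A secondary point demanding care is the bookkeeping at block boundaries when several players first-visit at the same index, which I would handle by counting distinct values of $\Visit$ rather than counting players.
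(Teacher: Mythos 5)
Your proof is correct and follows the approach the paper intends: in fact the paper states Lemma~\ref{lem:Lassoes} without any proof, and your arguments --- the first-visit case analysis for (P1), the common-prefix argument for (P2), and the count of at most $|\Pi|$ repetition-free blocks of constant $\Visit$ (each of length at most $|V|$) followed by one pigeonhole cycle of length at most $|V|$ --- are precisely the justification its definitions of unnecessary cycles and of (P1)/(P2) are built to support. Your handling of the one real subtlety, namely that exhaustively applying (P1) cannot literally terminate on an infinite play, by confining (P1) to the finite prefix up to the last first visit and letting (P2) absorb the tail, soundly repairs an imprecision that the paper's informal phrasing leaves unaddressed.
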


\begin{remark}[about Lemma~\ref{lem:Lassoes}]
	Notice that, given a quantitative reachability game $(\mathcal{G},v_0)$, as its extended game $(\mathcal{X},x_0)$ is in  particular a quantitative reachability game, all statements of Lemma~\ref{lem:Lassoes} hold. 
	
	 But, if we only apply the third assertion on $(\mathcal{X},x_0)$, we obtain bounds on the size of the lasso and on the cost of plays which depends on $|V^X|$. It means that it is exponential on the size of the initial game $\mathcal{G}$. 
	
	In fact, even for the extended game $(\mathcal{X},x_0)$ we can obtain the that: applying (P1) until it is no longer possible and then (P2), leads to a lasso $\rho'$ with size at most $(|\Pi|+1)\cdot |V|$ and $\Cost_i(\rho')\leq |V|\cdot|\Pi|$ for each $i \in \Visit(\rho')$ where $|V|$ is the number of vertices in $\mathcal{G}$.
\end{remark}

\subsection{Proof of Lemma~\ref{lem:LassoesConsis}}

\begin{proof}[of Lemma~\ref{lem:LassoesConsis}]  We begin by proving the assertion for (P1). Let $\rho$ be a $\lambda$-consistent play and we apply (P1) on it to obtain $\rho'$. Then, for all $i \in \Pi$, $\Cost_i(\rho') \leq \Cost_i(\rho)$. And in particular, for all $i\in \Pi$ and for all $n \in \mathbb{N}$, \begin{equation} \Cost_i(\rho'_{\geq n}) \leq \Cost_i(\rho_{\geq \varphi(n)}) \label{eq:corCostUnCycle}\end{equation} where $\varphi$ is the injective function which matches a node in $\rho'$ with its corresponding node in $\rho$, \emph{i.e.,} for all $n \in \mathbb{N}$, $\rho'_n = \rho_{\varphi(n)}$.
		Let $i \in \Pi$ and  $k \in \mathbb{N}$ such that $i \not \in \Visit(\rho'_0\ldots \rho'_k)$ and $\rho'_k \in V_i$, we have to prove that $\Cost_i(\rho'_{\geq k}) \leq \lambda(\rho'_k)$. By construction of $\rho'$  and by~\eqref{eq:corCostUnCycle}, we have that $$\Cost_i(\rho'_{\geq k}) \leq \Cost_i(\rho_{\geq \varphi(k)}).$$

		But, as $\rho$ is $\lambda$-consistent, we also have that:

		$$ \Cost_i(\rho_{\geq \varphi(k)}) \leq \lambda(\rho_{ \varphi(k)})
		= \lambda_i( \rho'_k) $$

		and we can conclude that $\Cost_i(\rho'_{\geq k}) \leq \lambda(\rho'_k)$ which proves that $\rho'$ is $\lambda$-consistent.		This concludes the proof.\\
		
	For (P2), the assertion holds because $\rho'$ is a copy of $\rho$ until each player in $\Visit(\rho)$ has visited his target set and $\Visit(\rho) = \Visit(\rho')$.
		\qed	
\end{proof}

%!TEX root=sv-main.tex

\subsection{Proof of Corollary~\ref{cor:corCritOut2ConstraintProb} }

\subsubsection{Proof of Corollary~\ref{cor:corCritOut2ConstraintProb} for NEs}
\label{app:corCritOut1}

The proof of Corollary~\ref{cor:corCritOut2ConstraintProb} for NEs is due to Corollary~\ref{cor:corCritOut1} that we prove below. 

\begin{corollary}
	\label{cor:corCritOut1}
	Let $\sigma$ be an NE in a quantitative reachability game $(\mathcal{G},v_0)$, then there exists $\tau$ an NE in $(\mathcal{G},v_0)$ such that:
	\begin{itemize} \item  $\outcome{\tau}{v_0}$ is a lasso $h\ell^\omega$ such that $|h\ell|\leq (|\Pi|+1)\cdot |V| $; 
		\item for all $i \in \Visit(\outcome{\tau}{v_0})$, $\Cost_i(\outcome{\tau}{v_0}) \leq  \min \{ \Cost_i(\outcome{\sigma}{v_0}), |\Pi|\cdot |V| \}$ and for all $i \not \in \Visit(\outcome{\tau}{v_0})$, $\Cost_i(\outcome{\tau}{v_0}) = \Cost_i(\outcome{\sigma}{v_0})= +\infty$;
		\item  the memory of $\tau$ is in $\mathcal{O}((|\Pi|+1)\cdot |V|)$. 
	\end{itemize}
\end{corollary}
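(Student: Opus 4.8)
The plan is to read off the outcome of the given NE, cut it down to a short lasso using the procedures (P1) and (P2), and then rebuild an NE from that lasso via the characterization. Concretely, I would first set $\rho = \outcome{\sigma}{v_0}$ and invoke the direction $(\ref{state:1critOutcomeNE} \Rightarrow \ref{item:critOut2})$ of Theorem~\ref{thm:critOutcomeNE} to conclude that $\rho$ is $\Val$-consistent. This is the only place where the hypothesis that $\sigma$ is an NE is used: it is converted immediately into a purely combinatorial property of its outcome.

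Next I would apply (P1) repeatedly until no unnecessary cycle remains, and then apply (P2) once, obtaining a lasso $\rho' = h\ell^\omega$. By the third item of Lemma~\ref{lem:Lassoes} this gives $|h\ell| \leq (|\Pi|+1)\cdot |V|$ and $\Cost_i(\rho') \leq |V|\cdot|\Pi|$ for every $i \in \Visit(\rho')$. I would then gather the two facts already recorded about these procedures: first, (P1) never increases costs and (P2) preserves them (Lemma~\ref{lem:Lassoes}, items one and two), so $\Cost_i(\rho') \leq \Cost_i(\rho)$ for all $i$; second, both procedures preserve the visiting set, i.e.\ $\Visit(\rho') = \Visit(\rho)$. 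Combining these yields exactly the cost bounds of the statement: for $i \in \Visit(\rho')$ we get $\Cost_i(\rho') \leq \min\{\Cost_i(\rho),\, |\Pi|\cdot|V|\}$, while for $i \notin \Visit(\rho') = \Visit(\rho)$ the cost is $+\infty$ on both sides since such a player never reaches $F_i$.

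Finally, by Lemma~\ref{lem:LassoesConsis} the application of (P1) and (P2) preserves $\Val$-consistency, so $\rho'$ is $\Val$-consistent as well. Since $\rho'$ is a lasso, the strengthened form of the direction $(\ref{item:critOut2} \Rightarrow \ref{state:1critOutcomeNE})$ in Theorem~\ref{thm:critOutcomeNE} produces an NE $\tau$ with $\outcome{\tau}{v_0} = \rho'$ and memory in $\mathcal{O}(|h\ell| + |\Pi|)$. Using $|h\ell| \leq (|\Pi|+1)\cdot|V|$ and $|\Pi| \leq (|\Pi|+1)\cdot|V|$, this simplifies to $\mathcal{O}((|\Pi|+1)\cdot|V|)$, and $\tau$ meets all three requirements.

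There is no genuine obstacle here, as the real work has been localized into Theorem~\ref{thm:critOutcomeNE} and Lemmas~\ref{lem:Lassoes} and~\ref{lem:LassoesConsis}; the corollary is essentially a matter of composing them. The only points demanding care are the bookkeeping ones: checking that $\Visit$ is preserved \emph{exactly} (so the non-visiting players keep cost $+\infty$ rather than merely retaining an upper bound), and verifying that the memory estimate absorbs the additive $|\Pi|$ term into $\mathcal{O}((|\Pi|+1)\cdot|V|)$.
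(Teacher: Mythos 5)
Your proposal is correct and follows essentially the same route as the paper's proof: extract the outcome, apply (P1) exhaustively then (P2) to get a short lasso (Lemma~\ref{lem:Lassoes}), transfer $\Val$-consistency from Theorem~\ref{thm:critOutcomeNE} through Lemma~\ref{lem:LassoesConsis}, and reconstruct the NE $\tau$ via the lasso version of Theorem~\ref{thm:critOutcomeNE}. Your write-up is in fact slightly more explicit than the paper's on the points it glosses over, namely that $\Visit$ is preserved exactly (so non-visiting players keep cost $+\infty$) and that the $\mathcal{O}(|h\ell|+|\Pi|)$ memory bound collapses to $\mathcal{O}((|\Pi|+1)\cdot|V|)$.
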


\begin{proof}[of Corollary~\ref{cor:corCritOut1}]
	
	Let $\rho$ be a play in $(\mathcal{G},v_0)$ such that $ \rho = \outcome{\sigma}{v_0}$. We apply procedure (P1) on $\rho$ until there is no longer an unnecessary cycle and then we apply (P2). In this way, we obtain a lasso $\rho' = h \ell^\omega \in \Plays(v_0)$. By Lemma~\ref{lem:Lassoes}, $|h\ell| \leq (|\Pi|+1)\cdot |V|$ and  $\Cost_i(h\ell^\omega) \leq \min \{\Cost_i(\outcome{\sigma}{v_0}), |\Pi|\cdot|V| \}$ if $i \in \Visit(\outcome{\sigma}{v_0})$ and $\Cost_i(h\ell^\omega) = +\infty$ otherwise.

 By hypothesis and thanks to Theorem~\ref{thm:critOutcomeNE}, we know that $\rho$ is $\ConstNE$-consistent. Thus, by Lemma~\ref{lem:LassoesConsis}, $\rho'$ is $\ConstNE$-consistent. And Theorem~\ref{thm:critOutcomeNE} for $\rho'$ concludes the proof.  \qed
\end{proof}

\begin{proof}[of Corollary~\ref{cor:corCritOut2ConstraintProb} for NEs] It is a direct consequence of Corollary~\ref{cor:corCritOut1}. \qed
\end{proof}

\subsubsection{Proof of Corollary~\ref{cor:corCritOut2ConstraintProb} for SPEs}
In this section we assume that all the definitions and notations introduced in~\cite{DBLP:journals/corr/abs-1905.00784} are known.

By adapting the concept of \emph{(good) symbolic witness} (a set of lassoes with some good properties) used in~\cite{DBLP:journals/corr/abs-1809-03888}, we can show that if there exists an SPE with a cost profile $c$ then, there exists one with the same cost profile but with a finite-memory. This leads to Proposition~\ref{prop:SPEmemory} which allows us to prove  Corollary~\ref{cor:corCritOut2ConstraintProb} for SPEs.

Before the statement of the proposition, we formally introduce what is a (good) symbolic witness. This notion was introduced in~\cite{DBLP:journals/corr/abs-1809-03888} for games with prefix-independent gain functions. We adapt it for quantitative reachability games for which the cost function is not prefix-independent.

\begin{definition}[Symbolic witness]
Let $(\mathcal{G},v_0)$ be an initialized quantitative reachability game and $(\mathcal{X}, (v_0,I_0))$ its extended game. Let $\mathcal{I}$ be a subset of $(\Pi \cup \{0\}) \times V \times 2^\Pi$ such that:
\begin{align*} \mathcal{I} = \{(0,v_0,I_0)\} &~\cup~ 
\{(i,v',I') \mid \mbox{there exists }((v,I),(v',I')) \in E^X \\ &\mbox{ with } (v,I),(v',I') \in \Succ^*(v_0,I_0) \mbox{ and } v \in V_i \}. \end{align*}

A \emph{symbolic witness} is a set $\mathcal{P} = \{ \rho_{i,v,I} \mid (i,v,I) \in \mathcal{I}\}$ such that each $\rho_{i,v,I}$ is a lasso in $\mathcal{X}$ with $\First(\rho_{i,v,I}) = (v,I)$.
\end{definition}

\begin{definition}[Good symbolic witness]
\label{def:Good}
A symbolic witness $\mathcal{P}$ is \emph{good} if for all $\rho_{j,u,J}$, $\rho_{i,v',I'} \in \mathcal{P}$, for all suffix $\rho \in \Plays(v,I)$ of $\rho_{j,u,J}$ such that $((v,I),(v',I'))\in E^X$ and $(v,I)\in V_i^X$, if $i \not \in I$, then we have: $$ \Cost_i(\rho) \leq 1 + \Cost_i(\rho_{i,v'I'}).$$
 \end{definition}

\begin{figure}[h!]
	\centering
	\begin{tikzpicture}
		\node[draw,rounded corners=4pt,minimum width=25pt, minimum height=15pt](u) at (0,0){$(u,J)$};
		\node(point1) at (1.5,0){$\ldots$};
		\node[draw,rounded corners=4pt,minimum width=25pt, minimum height=15pt](v) at (3,0){$(v,I)$};
		\node(vi) at (3,-0.7){$\in V_i$};
		\node[draw,rounded corners=4pt,minimum width=25pt, minimum height=15pt](v') at (3.5,1){$(v',I')$};
		\node(point2) at (5,1){$\ldots$};
		\node[draw,rounded corners=4pt,minimum width=25pt, minimum height=15pt](blanc1) at (6.5,1){}; 
		\node(point3) at (8,1){$\ldots$};
		\node[draw,rounded corners=4pt,minimum width=25pt, minimum height=15pt](blanc2) at (9.5,1){}; 
		\node (path1) at (10.5,1){$\rho_{i,v',I'}$};
		\node(point4) at (5,0){$\ldots$};
		\node[draw,rounded corners=4pt,minimum width=25pt, minimum height=15pt](blanc3) at (6.5,0){}; 
		\node (point5) at (8,0){$\ldots$};
		\node[draw,rounded corners=4pt,minimum width=25pt, minimum height=15pt](blanc4) at (9.5,0){};
		\node (path2) at (10.5,0){$\rho_{j,u,J}$};	
		
		\draw [decorate,decoration={brace,mirror,amplitude=10pt},xshift=-4pt,yshift=0pt]
(2.7,-1) -- (10,-1) node [black,below=0.5cm,midway]{$\rho$} ;

		\draw[->] (u) -- (1,0);
		\draw[->] (2,0) -- (v);
		\draw[->] (v) -- (v');
		\draw[->] (v') -- (4.5,1);
		\draw[->] (5.5,1) --(blanc1);
		\draw[->] (blanc1) -- (7.5,1);
		\draw[->] (8.5,1) -- (blanc2);
		\draw[->] (blanc2) to [bend right] (blanc1);
		
		\draw[->] (v) -- (4.5,0);
		\draw[->] (5.5,0) -- (blanc3);
		\draw [->] (blanc3) -- (7.5,0);
		\draw[->] (8.5,0) -- (blanc4);
		\draw[->] (blanc4) to [bend left] (blanc3);
	\end{tikzpicture}
	\caption{The condition of Definition~\ref{def:Good}}
	\label{fig:Good}
\end{figure}
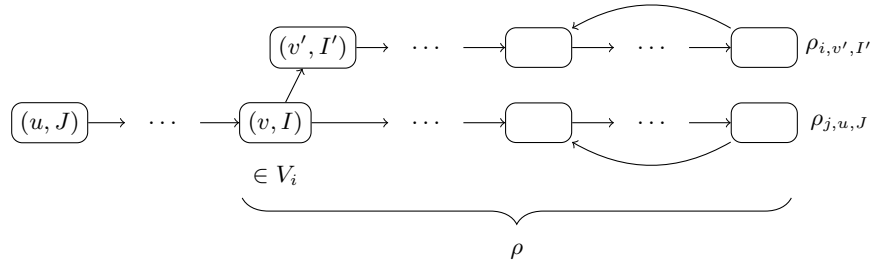

\begin{proposition}
	\label{prop:SPEmemory}
	Let $(\mathcal{G},v_0)$ be a quantitative reachability game and $(\mathcal{X},(v_0,I_0))$ be its extended game, let $c \in (\mathbb{N}\cup\{+\infty\})^{|\Pi|}$ and let $M = \max_{i\in\Pi}\{c_i \mid c_i < +\infty \}$ if this max exists, $M=0$ otherwise. The following assertions are equivalent:
	\begin{enumerate}
		\item There exists an SPE with cost profile $c$ in $(\mathcal{X},(v_0,I_0))$;
		\item $\Lambda^*(v,I) = \{ \rho \in \Plays_{X}(v,I) \mid \rho \text{ is } \lambda^*\text{-consistent}\}$ $\neq \emptyset$ for all $(v,I) \in \Succ^*(v_0,I_0)$ and there exists $\rho \in \Lambda^*(v_0,I_0)$ such that $(\Cost_i(\rho))_{i\in \Pi} = c$;
		\item There exists a good symbolic witness $\mathcal{P}$ that contains a lasso $\rho_{0,v_0,I_0}$ with cost profile $c$ and $|\rho_{0,v_0,I_0}| \leq M + |V|$. Moreover, for each $\rho_{i,v,I} \in \mathcal{P}$, $|\rho_{i,v,I}| \leq \mathcal{O}(|V|^{(|\Pi|+1)\cdot (|\Pi|+|V|)}) + (|\Pi|+1)\cdot |V| $;
		\item There exists a finite-memory SPE $\sigma$ with cost profile $c$ in $(\mathcal{X},(v_0,I_0))$ such that its memory is in $\mathcal{O}(M + 2^{|\Pi|} \cdot |\Pi|\cdot |V|^{(|\Pi|+1)\cdot(|\Pi|+|V|)+1})$.
	\end{enumerate}
\end{proposition}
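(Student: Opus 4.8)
The plan is to close the cycle of implications from the first assertion to the second and back, then from the second to the third to the fourth and finally back to the first, so that all four become equivalent. The equivalence between the first two assertions should follow almost directly from the outcome characterization of SPEs (Theorem~\ref{thm:critOutcomeSPE}): an SPE has a $\lambda^*$-consistent outcome and, conversely, every $\lambda^*$-consistent play is realized as an SPE outcome, so an SPE of cost profile $c$ exists precisely when some $\rho \in \Lambda^*(v_0,I_0)$ has cost profile $c$. The additional requirement that $\Lambda^*(v,I) \neq \emptyset$ for every $(v,I) \in \Succ^*(v_0,I_0)$ is not a genuine extra condition but a consequence: an SPE exists in every subgame (Theorem 2.1 in~\cite{DBLP:journals/corr/abs-1205-6346}), and each such SPE outcome is $\lambda^*$-consistent, so $\Lambda^*(v,I)$ is always nonempty.

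For the implication from the second to the third assertion, I would construct the symbolic witness $\mathcal{P}$ explicitly. For the root, I take the play $\rho \in \Lambda^*(v_0,I_0)$ of cost profile $c$ and turn it into a lasso $\rho_{0,v_0,I_0}$ by applying (P1) and then (P2): Lemma~\ref{lem:LassoesConsis} preserves $\lambda^*$-consistency while Lemma~\ref{lem:Lassoes} preserves the cost profile and bounds the length, which yields $|\rho_{0,v_0,I_0}| \leq M + |V|$ since after at most $M$ steps every player with finite cost has reached his target and only a short cycle remains to be closed. For each triple $(i,v,I) \in \mathcal{I}$, I pick inside the nonempty set $\Lambda^*(v,I)$ a play minimizing the cost of player~$i$ and reduce it to a lasso $\rho_{i,v,I}$. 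The goodness condition of Definition~\ref{def:Good} then follows from $\lambda^*$-consistency: along any suffix $\rho$ of a witness lasso reaching a vertex $(v,I) \in V_i^X$ with $i \notin I$ and an edge to $(v',I')$, $\lambda^*$-consistency bounds $\Cost_i(\rho)$ by $\lambda^*(v,I)$, while the Bellman-type relation satisfied by $\lambda^*$ together with the minimality of $\rho_{i,v',I'}$ gives $\lambda^*(v,I) \leq 1 + \Cost_i(\rho_{i,v',I'})$, the extra edge accounting for the $+1$.

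For the implication from the third to the fourth assertion, I would read the good symbolic witness as a finite-memory strategy profile: the memory records which lasso of $\mathcal{P}$ is currently followed and the position along it; the players start on $\rho_{0,v_0,I_0}$, and as soon as some player~$i$ deviates to a vertex $(v',I')$ off the prescribed lasso, the memory switches to the punishment lasso $\rho_{i,v',I'}$. The goodness condition guarantees that no such deviation is profitable in any subgame, hence the profile is an SPE of cost profile $c$, and its memory is bounded by the number of triples in $\mathcal{I}$ (of order $2^{|\Pi|}\cdot|\Pi|\cdot|V|$) times the maximal lasso length, which produces the announced bound. The last implication, from the fourth assertion back to the first, is immediate since a finite-memory SPE is in particular an SPE.

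The main obstacle will be the quantitative length bound on the punishment lassoes in the third assertion, namely the exponential term $\mathcal{O}(|V|^{(|\Pi|+1)\cdot(|\Pi|+|V|)})$. This is where the lack of prefix-independence of the cost function makes the argument harder than in the qualitative setting of~\cite{DBLP:journals/corr/abs-1809-03888}: the $\lambda^*$ values can be large, so one must argue that a $\lambda^*$-consistent play realizing the minimal cost from $(v,I)$ can nevertheless be chosen of controlled length. I expect this to require bounding the range of $\lambda^*$ through the value-iteration procedure on the extended game from~\cite{DBLP:journals/corr/abs-1905.00784}, combined with a pumping argument that deletes every cycle not needed either to realize the finite costs or to maintain the consistency constraints.
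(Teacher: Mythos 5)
Your overall architecture ($1 \Leftrightarrow 2$ via the outcome characterization of SPEs, then $2 \Rightarrow 3 \Rightarrow 4 \Rightarrow 1$) matches the paper's, and your reading of a good symbolic witness as a finite-memory SPE in $3 \Rightarrow 4$ is essentially the paper's construction. But there is a genuine error in your construction of the witness in $2 \Rightarrow 3$: you choose $\rho_{i,v',I'}$ as a play of $\Lambda^*(v',I')$ \emph{minimizing} $\Cost_i$, whereas the paper takes $\rho_{i,v',I'} = \argmax_{\rho' \in \Lambda^*(v',I')} \Cost_i(\rho')$. The choice of the maximum is not cosmetic. The Bellman-type relation satisfied by $\lambda^*$ at a vertex $(v,I) \in V_i^X$ with $i \notin I$ is $\lambda^*(v,I) = 1 + \min_{((v,I),(v',I')) \in E^X} \max\{\Cost_i(\rho') \mid \rho' \in \Lambda^*(v',I')\}$: a punishment after a deviation must itself be a credible, i.e. $\lambda^*$-consistent, continuation, and the deterrent is the \emph{worst} such continuation for the deviator. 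Consequently $\lambda^*(v,I) \leq 1 + \Cost_i(\rho_{i,v',I'})$ holds when $\rho_{i,v',I'}$ realizes the max, but fails in general for the min. Concretely, if from $(v',I')$ there are consistent continuations of cost $1$ and $100$ for player~$i$, a suffix of another witness lasso through $(v,I)$ may legitimately have cost up to $101$ for player~$i$; with your choice the goodness condition would demand $101 \leq 2$, and the strategy profile induced in $3 \Rightarrow 4$ would admit a profitable deviation (player~$i$ deviates and is ``punished'' by the continuation of cost $1$). Your final worry about bounding the lengths of the punishment lassoes is resolved in the paper precisely through this argmax choice: Lemma~\ref{lemma:maxCost} bounds $\MaxCost_i$, i.e. the cost of the argmax play when finite, by $\mathcal{O}(|V|^{(|\Pi|+1)\cdot(|\Pi|+|V|)})$, and the usual reduction to lassoes then gives the stated bound; no additional pumping argument on $\lambda^*$ is needed.

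A second, smaller slip: for the root lasso you apply (P1) and then (P2). Procedure (P1) can strictly \emph{decrease} costs (Lemma~\ref{lem:Lassoes} only gives $\leq$, and Remark~\ref{rem:twoThresholds} makes this point explicitly), so after (P1) the cost profile may no longer be equal to $c$, which assertion~3 requires exactly. The paper applies only (P2) to the root play, which preserves the cost profile and already yields the bound $|\rho_{0,v_0,I_0}| \leq M + |V|$: copy $\rho$ for at most $M$ steps, until every player with finite cost has reached his target set, and then close a cycle within at most $|V|$ further steps.
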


To obtain the bound on the length of the lassoes, we use the following lemma.

\begin{lemma}[\cite{DBLP:journals/corr/abs-1905.00784}]
	\label{lemma:maxCost}
	Let $v$ be a vertex in the extended game, let $\MaxCost_i(v) = \max \{ \Cost_i(\rho) \mid \rho \in \Plays(v)  \text{ and } \rho \text{ is } \lambda^*\text{-consistent} \}$. If $\MaxCost_i(v)< +\infty$, then  $\MaxCost_i(v) \leq \mathcal{O}(|V|^{(|\Pi|+1)\cdot (|\Pi|+|V|)})$.

\end{lemma}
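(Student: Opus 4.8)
The plan is to bound, for a fixed player~$i$ and vertex $v$ with $i \notin I$ (otherwise $\MaxCost_i(v) = 0$), the length of the longest prefix of a $\lambda^*$-consistent play from $v$ that occurs before player~$i$ first reaches $F_i$; since the hypothesis $\MaxCost_i(v) < +\infty$ forces \emph{every} $\lambda^*$-consistent play from $v$ to reach $F_i$, this length is exactly $\MaxCost_i(v)$. Two tools drive the argument. First, the very definition of $\lambda^*$-consistency says that whenever an as-yet-unsatisfied player~$j$ visits one of his own vertices $w$, he must reach $F_j$ within $\lambda^*(w)$ further steps; so a \emph{finite} label $\lambda^*(w)$ acts as a deadline of length at most $M^\ast := \max\{\lambda^*(w) \mid \lambda^*(w) < +\infty\}$. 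Second, a \emph{pumping principle}: if a $\lambda^*$-consistent play contains a cycle that returns to the same extended vertex, does not visit $F_i$, and along which no still-unsatisfied player carries a pending finite deadline (so every own-vertex met on the cycle by an unsatisfied player carries label $+\infty$), then repeating this cycle forever again yields a $\lambda^*$-consistent play by Theorem~\ref{thm:critOutcomeSPE}, but one in which $i$ never reaches $F_i$. This would give $\Cost_i = +\infty$, contradicting $\MaxCost_i(v) < +\infty$.

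Next I would decompose the prefix into maximal \emph{segments} on which the set $I$ of already-satisfied players stays constant. As $I$ strictly grows across each boundary and $|I| \le |\Pi|$, there are at most $|\Pi|+1$ segments before $i$ reaches $F_i$. Each segment I bound by $|V| + M^\ast$: split it into a \emph{free head}, during which no unsatisfied player carries a pending finite deadline, followed by a \emph{constrained tail} beginning the moment such a deadline first becomes active. On the free head no extended vertex may repeat, since a repetition would be a pumpable cycle of the kind excluded above; hence the head has length at most $|V|$. Once a finite deadline is active, its owner must reach his target within at most $M^\ast$ steps, and doing so enlarges $I$ and therefore ends the segment; hence the tail has length at most $M^\ast$. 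Multiplying, $\MaxCost_i(v) \le (|\Pi|+1)\,(|V| + M^\ast)$.

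The remaining, and main, obstacle is to bound $M^\ast$ without circularity, since the finite labels $\lambda^*(w)$ are themselves worst-case costs of the same flavour as $\MaxCost$, so the inequality above is only useful once $M^\ast$ is controlled. Here I would stratify through the value-iteration that defines $\lambda^*$ in \cite{DBLP:journals/corr/abs-1905.00784}: writing $B_t$ for the largest finite label produced after $t$ rounds, the segment estimate applied level by level yields a recurrence of the form $B_{t+1} \le (|\Pi|+1)(|V| + B_t) \le |V|^{\mathcal{O}(|\Pi|)}\cdot B_t$, while the iteration stabilises after $\mathcal{O}(|\Pi| + |V|)$ rounds. Compounding the per-round factor $|V|^{\mathcal{O}(|\Pi|)}$ over $\mathcal{O}(|\Pi| + |V|)$ rounds gives $M^\ast \le \mathcal{O}(|V|^{(|\Pi|+1)\cdot(|\Pi|+|V|)})$, and then the displayed inequality yields the same bound for $\MaxCost_i(v)$. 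The delicate points, carried out in the referenced work, are the exact stabilisation bound on the number of rounds and the verification that each round's growth is governed by the segment decomposition; together these pin down the exponent $(|\Pi|+1)\cdot(|\Pi|+|V|)$.
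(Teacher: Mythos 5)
Your first half is sound and self-contained: the pumping principle is valid (a cycle of the extended game cannot satisfy any new player, since the $I$-component would change, so pending deadlines stay pending across it; and if every own-vertex met by a still-unsatisfied player on the cycle carries label $+\infty$, repeating the cycle forever preserves $\lambda^*$-consistency while making $\Cost_i=+\infty$), and together with the segment decomposition it correctly gives $\MaxCost_i(v)\leq(|\Pi|+1)\cdot(|V|+M^\ast)$, with $|V|$ the vertex count of the \emph{original} game because inside a segment the $I$-component is fixed. Note, for the record, that the paper itself contains no proof of Lemma~\ref{lemma:maxCost}: it is imported verbatim from \cite{DBLP:journals/corr/abs-1905.00784}, so your attempt has to stand entirely on its own.

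It does not, because of the second half, and this is a genuine gap rather than a stylistic shortcut. The entire content of the lemma is a bound on quantities determined by the finite $\lambda^*$-labels; your reduction merely transfers the problem to $M^\ast$, the largest finite label, and at that point nothing is proved. The recurrence $B_{t+1}\leq(|\Pi|+1)(|V|+B_t)$ presupposes a precise definition of the iteration producing $\lambda^*$ (never stated) and a verification that the pumping/segment argument applies level by level, and the stabilisation after $\mathcal{O}(|\Pi|+|V|)$ rounds is simply asserted and explicitly outsourced (``carried out in the referenced work''). Deferring exactly these points to \cite{DBLP:journals/corr/abs-1905.00784} is circular for a blind proof: they \emph{are} the lemma. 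Moreover, your own numbers show they were reverse-engineered rather than derived: compounding your recurrence over $\mathcal{O}(|\Pi|+|V|)$ rounds yields roughly $(|\Pi|+1)^{\mathcal{O}(|\Pi|+|V|)}\cdot|V|$, i.e. $|V|^{\mathcal{O}(|\Pi|+|V|)}$ since $|\Pi|\leq|V|$, which is far below the target $|V|^{(|\Pi|+1)\cdot(|\Pi|+|V|)}$; you only recover the stated exponent by artificially inflating the per-round factor to $|V|^{\mathcal{O}(|\Pi|)}$. So either the recurrence or the round bound does not describe the real iteration, and neither is verified. To close the gap you would need to (i) define the decreasing sequence of labelings converging to $\lambda^*$, (ii) prove the per-round growth bound for finite labels by re-running your segment argument at each level, and (iii) prove a concrete bound on the number of rounds to the fixed point --- none of which is present.
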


Notice that the proofs of $(2 \Rightarrow 3)$ and $(3 \Rightarrow 4)$ are quietly the same that the one of Proposition~33 in~\cite{DBLP:journals/corr/abs-1905.00784}. But, in the following proof, we choose more adequatly the plays of the form $\rho_{i,v',I'}$ by picking lassoes with a finite size in the sets of $\lambda^*$-consistent plays beginning in $(v',I')$. It allows us to build and highlight the existent of a symbolic witness which is a symbolic finite representation of an SPE who has cost profile equal to $c$. In this way we build an SPE with a finite memory. 

\begin{proof}[of Proposition~\ref{prop:SPEmemory}]
	
	\noindent$\underline{1 \Rightarrow 2}$: Theorem 17 in~\cite{DBLP:journals/corr/abs-1905.00784}.\\
	$\underline{2 \Rightarrow 3}$: We build a symbolic witness $\mathcal{P}$ step by step and then prove that it is good.
	At the initialization, $\mathcal{P} = \emptyset$. \\
	
	Let $\rho \in \Lambda^*(v_0,I_0)$ such that $(\Cost_i(\rho))_{i\in \Pi} = c$. We apply (P2) on $\rho$ to obtain a lasso $\rho_{0,v_0,I_0}$ such that $|\rho_{0,v_0,I_0}|\leq M +|V|$ and  $(\Cost_i(\rho_{0,v_0,I_0}))_{i\in \Pi} = c$. Moreover, as $\rho$ is $\lambda^*$-consistent, $\rho_{0,v_0,I_0}$ is also $\lambda^*$-consistent (by Lemma~\ref{lem:LassoesConsis}). We add $\rho_{0,v_0,I_0}$ to $\mathcal{P}$.\\
	
	For each $(i,v,I) \in \mathcal{I}$, let $\displaystyle \rho = \argmax_{\rho' \in \Lambda^*(v,I)} \{\Cost_i(\rho')\}$. We obtain $\rho_{i,v,I}$ by copying $\rho$ until Player $i$ has visited his target set, then by removing the unnecessary cycles and apply (P2). If Player $i$ does not visit his target set along $\rho$, we remove all the unnecessary cycles (by apply iteratively (P1)) and then we apply (P2). By the same kind of arguments than for Lemma~\ref{lem:Lassoes} and Lemma~\ref{lem:LassoesConsis}, we obtain that: \emph{i)} $\rho_{i,v,I}$ is $\lambda^*$-consistent, \emph{ii)} $\Cost_i(\rho_{i,v,I}) = \Cost_i(\rho)$ and \emph{iii)} $|\rho_{i,v,I}| \leq \mathcal{O}(|V|^{(|\Pi|+1)\cdot (|\Pi|+|V|)}) + (|\Pi| + 1) \cdot |V|$ (by Lemma~\ref{lemma:maxCost}). We add $\rho_{i,v,I}$ to $\mathcal{P}$. \\
	
	By construction $\mathcal{P}$ is a symbolic witness. It remains to prove that it is good.
	
	Let $\rho_{j,u,J}$ and $\rho_{i,v',I'} \in \mathcal{P}$  and let $\rho \in \Plays(v,I)$ be a suffix of $\rho_{j,u,J}$ such that $((v,I),(v',I')) \in E^X$ and $(v,I) \in V^X_i$, we have to prove that if $i \not \in I$, $\Cost_i(\rho) \leq 1 + \Cost_i(\rho_{i,v',I'})$.
	
	As $i \not \in I$, $ \displaystyle \lambda^{*+1}(v,I)= 1+ \min_{((v,I),(v',I')\in E^X)} \{ \Cost_i(\rho') \mid \rho' \in \Lambda^*(v',I') \} $, we have that $\lambda^*(v,I) = \lambda^{*+1}(v,I) \leq 1 + \Cost_i(\rho_{i,v',I'})$. Finally, as $\rho_{j,u,J}$ is $\lambda^*$-consistent, we have: 
	$$ \Cost_i(\rho) \leq \lambda^*(v,I) \leq 1 + \Cost_i(\rho_{i,v',I'}).$$
	It concludes the proof.\\
	$\underline{3 \Rightarrow 4}$: Given a good symbolic witness $\mathcal{P}$ with properties given in statement 3, we show how to build an SPE with a finite memory. 
	
	We define a strategy profile $\sigma$ step by step by induction on the subgames of $(\mathcal{X}, (v_0,I_0))$. We first partially define $\sigma$ such that $\outcome{\sigma}{(v_0,I_0)} = \rho_{0,v_0,I_0}$.
	
	Consider next, $h(v,I)(v',I') \in \Hist(v_0,I_0)$ with $(v,I) \in V_i^X$ such that $\outcome{\sigma_{\restriction h}}{(v,I)}$ is already built but not $\outcome{\sigma_{\restriction h(v,I)}}{(v',I')}$. Then we extend $\sigma$ such that $\outcome{\sigma_{\restriction h (v,I)}}{(v',I')} = \rho_{i,v',I'}$. 
	
	Let us prove that $\sigma$ is a very weak SPE (and so an SPE). Consider the subgame $(\mathcal{X}_{\restriction h},(v,I))$ (with $(v,I) \in V_i^X$) and the one-shot  deviating strategy $\sigma'_i$ from $\sigma_{i\restriction h}$ such that $\sigma'_i(v,I)= (v',I')$.
	By construction, there exists $\rho_{j,u,J}$ and $\rho_{i,v',I'}\in \mathcal{P}$ and $h' \in \Hist(v_0,I_0)$ such that $h\outcome{\sigma_{\restriction h}}{(v,I)}= h'\rho_{j,u,J}$ and $\outcome{\sigma_{\restriction h(v,I)}}{(v',I')} = \rho_{i,v',I'}.$
	 We have to prove that:
	
	$$ \Cost_i(h \outcome{\sigma_{\restriction h}}{(v,I)}) \leq \Cost_i(h(v,I)\outcome{\sigma_{\restriction h(v,I)}}{(v',I')}).$$
	
	If $i \in I$, then $\Cost_i(h \outcome{\sigma_{\restriction h}}{(v,I)}) = \Cost_i(h(v,I)\outcome{\sigma_{\restriction h(v,I)}}{(v',I')})=0$. Otherwise, 
	
	\begin{align*}
		\Cost_i(h\outcome{\sigma_{\restriction h}}{(v,I)}) &= \Cost_i(h'\rho_{j,u,J}) \\
														   &=|h(v,I)|+\Cost_i(\rho)& (\text{where }\rho \text{ is a suffix of }\rho_{j,u,J} \text{ beginning in }(v,I))\\
														&\leq |h(v,I)|+1+\Cost_i(\rho_{i,v',I'}) & (\mathcal{P} \text{ is a good symbolic witness})\\
														&= \Cost_i(h(v,I)\rho_{i,v',I'})\\
														&= \Cost_i(h(v,I)\outcome{\sigma_{\restriction h(v,I)}}{v',I'}).
\end{align*}
Notice that $\sigma$ has a cost profile equal to $c$ by construction, as $(\Cost_i(\rho_{0,v_0,I_0}))_{i\in\Pi} = c$. It remains to prove that $\sigma$ is finite-memory with memory in $\mathcal{O}(M + 2^{|\Pi|} \cdot |\Pi|\cdot |V|^{(|\Pi|+1)\cdot(|\Pi|+|V|)+1})$. Having $(j,u,J)$ in memory (the last deviating player $j$ and the vertex $(u,J)$ where he moved), the machine $\mathcal{M}_i$, $i\in \Pi$, which represents the strategy $\sigma_i$, has to produce the lasso $\rho_{j,u,J}$ of length bounded by $M + (|\Pi|+1)\cdot |V|$  for $\rho_{0,v_0,I_0}$ and  by $\mathcal{O}(|V|^{(|\Pi|+1)\cdot (|\Pi|+|V|)})$ for the others (at most $|\Pi|\cdot|V| \cdot 2^{|\Pi|}$ such lassoes). It leads to a memory in $\mathcal{O}(M + 2^{|\Pi|} \cdot |\Pi|\cdot |V|^{(|\Pi|+1)\cdot(|\Pi|+|V|)+1})$. \\
	$\underline{4 \Rightarrow 1}$: Obvious. \qed
	
\end{proof}

\begin{proof}[of Corollary~\ref{cor:corCritOut2ConstraintProb} for SPEs]
	
By hypothesis, we know that there exists an SPE $\sigma$ such that $(\Cost_i(\outcome{\sigma}{x_0})) \leq y$. 
Let $\rho = \outcome{\sigma}{x_0}$, and apply (P1) as long as possible and then (P2) in order to obtain a lasso $\rho'=h\ell^\omega$ with size at most $(|\Pi|+1)\cdot |V|$  and such that $\Cost_i(\rho')\leq \min \{|V|\cdot|\Pi|, y_i\}$ (Lemma~\ref{lem:Lassoes}). As $\rho$ is the outcome of an SPE, $\rho$ is $\lambda^*$-consistent and thus $\rho'$ is also $\lambda^*$-consistent (Lemma~\ref{lem:LassoesConsis}). Thus $\rho'$ is the outcome of an SPE.

Thanks to Proposition~\ref{prop:SPEmemory}, we know that there exists an SPE $\tau$ with finite-memory in $\mathcal{O}(2^{|\Pi|} \cdot |\Pi|\cdot |V|^{(|\Pi|+1)\cdot(|\Pi|+|V|)+1})$ which as the same cost profile as $\rho'$. So, $\Cost_i(\outcome{\tau}{x_0}) \leq |V|\cdot|\Pi|$ for each $i \in \Visit(\outcome{\tau}{x_0})$ and $\tau$ fulfills the constraints. Moreover,  one can  assume that $\outcome{\tau}{x_0}$ is a lasso of length at most $(|\Pi|+1)\cdot |V|$. \qed
\end{proof}

\subsection{Proof of Proposition~\ref{prop:particularPareto}}

\begin{proof}[of Proposition~\ref{prop:particularPareto}]
	Second item is a direct consequence of the first one. Thus, let us prove first item.
	
	Let $\sigma$ be an NE such that its cost profile is Pareto optimal in $\Plays(v_0)$. To get a contradiction, assume that there exists $i \in \Visit(\outcome{\sigma}{v_0})$ such that $ \Cost_i(\outcome{\sigma}{v_0}) > |V|\cdot |\Pi|$. It means that there exists an unnecessary cycle before Player $i$ reaches his target set. By removing this cycle (applying (P1)), we obtain a new play $\rho'$ such that $\Cost_i(\rho') < \Cost_i(\outcome{\sigma}{v_0})$ and for Player $j$ ($j \neq i$), $\Cost_j(\rho') \leq \Cost_j(\outcome{\sigma}{v_0})$ (by Lemma~\ref{lem:Lassoes}). It leads to a contradiction with the fact that $(\Cost_i(\outcome{\sigma}{v_0}))_{i \in \Pi}$ is Pareto optimal in $\Plays(v_0)$.
	
	The same proof holds for SPE. \qed
\end{proof}

%!TEX root=sv-main.tex
\section{Complements of Section~\ref{subsection:results}}

%---RESULTS FOR PROBLEM 1

\subsection{Proof of Theorem~\ref{thm:Complexity} and Theorem~\ref{thm:Memory} for Problem~\ref{prob:decisionThreshold}}

\begin{proposition}
	\label{prop:compConstNashEasy}
	Let $(\mathcal{G},v_0)$ be a quantitative reachability game, for NE:
	\begin{itemize}
		\item Problem~\ref{prob:decisionThreshold} is NP-easy.
		\item If the answer to this decision problem is positive, then there exists an NE $\sigma$ with memory in $\mathcal{O}((|\Pi|+1)\cdot |V|)$ which satisfies the constraints and such that for all $i \in \Visit(\outcome{\sigma}{v_0})$, $\Cost_i(\outcome{\sigma}{v_0}) \leq |V|\cdot |\Pi|$.
	\end{itemize}
\end{proposition}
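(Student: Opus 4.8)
The plan is to establish both bullets by the standard guess-and-check paradigm, leaning entirely on the two structural results already at hand: the outcome characterization of NEs (Theorem~\ref{thm:critOutcomeNE}) and the lasso-sufficiency statement (Corollary~\ref{cor:corCritOut2ConstraintProb} for NEs). I would begin with the second bullet, which is almost immediate. Suppose the answer to Problem~\ref{prob:decisionThreshold} is positive, so that some NE $\sigma$ satisfies $(\Cost_i(\outcome{\sigma}{v_0}))_{i\in\Pi} \leq y$. Corollary~\ref{cor:corCritOut2ConstraintProb} then furnishes an NE $\tau$ that still satisfies the threshold, whose outcome is a lasso $h\ell^\omega$ with $|h\ell| \leq (|\Pi|+1)\cdot|V|$, with $\Cost_i(\outcome{\tau}{v_0}) \leq |\Pi|\cdot|V|$ for every $i \in \Visit(\outcome{\tau}{v_0})$, and with memory in $\mathcal{O}((|\Pi|+1)\cdot|V|)$. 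This $\tau$ is exactly the witness claimed in the second bullet.

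For NP-membership (first bullet) I would spell out the nondeterministic algorithm and check each step is polynomial. Because the sufficient lasso has length bounded by $(|\Pi|+1)\cdot|V|$, it admits a polynomial-size finite representation, so step (i) guesses such a lasso $\rho = h\ell^\omega$. Step (ii) computes the cost profile $(\Cost_i(\rho))_{i\in\Pi}$ — each $\Cost_i(\rho)$ is the index of the first vertex of $F_i$ along the finite word $h\ell$, or $+\infty$ if none occurs, hence computable in polynomial time — and tests $\leq y$. Step (iii) verifies that $\rho$ is the outcome of an NE by appealing to Theorem~\ref{thm:critOutcomeNE}: it suffices to confirm that $\rho$ is $\Val$-consistent, i.e. that for every index $k$ with $\rho_k \in V_i$ and $i \notin \Visit(\rho_{\leq k})$ one has $\Cost_i(\rho_{\geq k}) \leq \Val_i(\rho_k)$. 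If all three steps succeed the algorithm accepts; soundness follows because a $\Val$-consistent lasso is an NE outcome, and completeness follows from the second bullet, which guarantees that whenever the answer is positive such a lasso exists to be guessed.

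The only point deserving technical care is step (iii), and it is light rather than a genuine obstacle. The labels $\Val_i(v)$ are the values of the $|\Pi|$ coalitional games $\mathcal{G}_i$ (Definition~\ref{def:value}), which are computable in polynomial time as recalled in Section~\ref{subsection:charac}; I would tabulate all of them first. Checking $\Val$-consistency of the lasso $h\ell^\omega$ then requires inspecting only finitely many suffix-constraints: along $h$ and a single copy of the cycle $\ell$, for each position owned by a not-yet-satisfied player one compares the remaining cost $\Cost_i(\rho_{\geq k})$ — a quantity that is either an index at most $|h\ell|$ ahead or else $+\infty$, both determinable from the periodic structure — to the precomputed value. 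Since $|h\ell|$ is polynomial and each comparison is a constant-time arithmetic test against the table, the whole verification runs in polynomial time, so the problem lies in NP. I expect no real difficulty here: the substantive work, namely the punishment-strategy construction behind Theorem~\ref{thm:critOutcomeNE} and the cycle-elimination bounds of Corollary~\ref{cor:corCritOut2ConstraintProb}, is already established, and this proposition is essentially their assembly into an NP procedure.
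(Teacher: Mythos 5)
Your proposal is correct and follows essentially the same route as the paper: the memory/cost bullet is read off directly from Corollary~\ref{cor:corCritOut2ConstraintProb}, and NP-membership is obtained by guessing a lasso of length at most $(|\Pi|+1)\cdot|V|$, checking the threshold, computing the values $\Val_i$ of the coalitional games in polynomial time, and checking $\Val$-consistency via Theorem~\ref{thm:critOutcomeNE}. The only cosmetic difference is that the paper restricts the guess to lassoes $h\ell^\omega$ with $\Visit(h\ell^\omega)=\Visit(h)$ so that consistency need only be checked along $h\ell$, whereas you justify the same finite check by handling the periodic wrap-around of the remaining costs directly; both are valid.
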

\begin{proof}
	Let $(\mathcal{G},v_0)$ be a quantitative reachability game and let $y \in (\mathbb{N}\cup\{+\infty\})^{|\Pi|}$ be a threshold. By Corollary~\ref{cor:corCritOut2ConstraintProb}, we know that if there exists an NE $\sigma$ which fulfills the constraints, then there exists an other one such that its outcome is of the form $h\ell^\omega$ with $|h\ell|\leq (|\Pi|+1)\cdot |V| $ and which fulfills the constraints too.\\
	
	 Thus, the NP-algorithm is the following one:\\
		\textbf{Step 1:} Guess a lasso $\rho = h\ell^\omega$ with size at most $(|\Pi|+1)\cdot |V|$ such that $\Visit(\rho)= \Visit(h)$ and verify that $\rho$ is a lasso.\\
		\textbf{Step 2:} Compute $(\Cost_i(\rho))_{i\in \Pi}$ and verify that $(\Cost_i(\rho))_{i\in \Pi} \leq y$.\\
		\textbf{Step 3:} For each $i \in \Pi$, compute the values in $\mathcal{G}_i$.\\
		\textbf{Step 4:} Verify that $\rho$ is the outcome of an NE (thanks to Theorem~\ref{thm:critOutcomeNE}): as no new player reaches his target set along $\ell$ and $\ell$ is repeated infinitely often at the end of the outcome, we only have to check that $\rho$ is $\ConstNE$-consistent along $h\ell$ (an not along $h\ell^\omega$).\\
		
		The second item is a direct consequence of Corollary~\ref{cor:corCritOut2ConstraintProb}.\qed
\end{proof}

To obtain the NP-completeness it remains to prove the NP-hardness.

\begin{proposition}
	\label{prop:compConstNashHard}	
	Let $(\mathcal{G},v_0)$ be a quantitative reachability game, for NE, Problem~\ref{prob:decisionThreshold} is NP-hard.
\end{proposition}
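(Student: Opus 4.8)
The plan is to reduce from \textbf{SAT}. Given a CNF formula $\varphi = C_1 \wedge \cdots \wedge C_m$ over variables $x_1, \ldots, x_n$, I would construct in polynomial time an initialized quantitative reachability game $(\mathcal{G},v_0)$ together with a threshold $y$ so that $(\mathcal{G},v_0)$ admits an NE with cost profile $\leq y$ if and only if $\varphi$ is satisfiable. The player set is $\Pi = \{0,1,\ldots,m\}$: Player~$0$ is a dedicated \emph{assigner} with empty target set $F_0 = \emptyset$, and each Player~$j$ (for $1 \leq j \leq m$) is a passive player associated with clause $C_j$, owning no vertex at all.

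The arena is a forward chain of $n$ \emph{variable gadgets}. Gadget~$i$ consists of a vertex $g_i \in V_0$ (with $g_1 = v_0$) from which Player~$0$ chooses one of two successors $t_i$ or $f_i$, both leading to $g_{i+1}$, where $g_{n+1}$ is an absorbing sink carrying a self-loop. We set $t_i \in F_j$ exactly for those clauses $j$ in which $x_i$ occurs positively, and $f_i \in F_j$ exactly for those clauses $j$ in which $x_i$ occurs negatively. An infinite play is then determined by the sequence of choices $t_i/f_i$, which encodes a truth assignment $\nu$ (choosing $t_i$ stands for $x_i = \text{true}$); since the chain is acyclic up to the sink, each variable is assigned exactly once, so $\nu$ is well defined and consistent. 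Clause player~$j$ visits $F_j$ along this play if and only if $\nu$ satisfies $C_j$, and when it does so its cost is at most $2n$. Finally we set $y_0 = +\infty$ and $y_j = 2n$ for all $j \geq 1$.

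For correctness, if $\varphi$ is satisfied by some $\nu$, let $\sigma$ be the strategy profile in which Player~$0$ realises the choices of $\nu$ (playing arbitrarily off the resulting path); the clause players have no choices, and Player~$0$ has cost $+\infty$ under every profile, so no player has a profitable deviation and $\sigma$ is an NE whose outcome visits every $F_j$ with $\Cost_j \leq 2n$, whence $(\Cost_i(\outcome{\sigma}{v_0}))_{i\in\Pi} \leq y$. Conversely, any NE $\sigma$ with cost profile $\leq y$ satisfies $\Cost_j(\outcome{\sigma}{v_0}) \leq 2n < +\infty$ for every clause player, so its outcome visits all $F_j$; reading off the $t_i/f_i$ choices along $\outcome{\sigma}{v_0}$ yields an assignment satisfying every clause. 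Both the construction and the verification are plainly polynomial.

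The reduction carries no hidden difficulty on the equilibrium side, precisely because the clause players are passive and the assigner never reaches a target: the Nash condition holds vacuously, so all the content lies in the gadget encoding. The point I would be most careful about is the \emph{forcing} direction, namely that the finite thresholds $y_j = 2n$ genuinely compel the outcome to cover all clause targets and cannot be met by a play that loops early; this is guaranteed by the acyclic forward structure of the chain together with the placement of the target vertices. If one prefers, the NE property can instead be certified through the characterization of Theorem~\ref{thm:critOutcomeNE}, but in this construction the direct argument is the simplest.
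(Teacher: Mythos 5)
Your reduction is correct, and it shares its combinatorial skeleton with the paper's proof: both reduce from SAT via a forward chain of variable gadgets in which an ``assigner'' player selects one literal vertex per variable, each clause player's target set consists of the vertices of the literals occurring in his clause, the acyclicity of the chain guarantees a consistent assignment, and a finite threshold on the clause players forces every clause to be satisfied along the outcome. Where you genuinely diverge is in how the equilibrium condition is handled. The paper (adapting the safety-game reduction of \cite{condurache_et_al:LIPIcs:2016:6256}) keeps the instance strategically non-degenerate: each clause player owns a vertex $P_i$ from which he may escape to a vertex $T_\ell$ lying in all clause players' target sets, the assigner has target $\{T_w\}$ reachable only beyond the $P_i$'s, and the thresholds are tuned so that reaching $T_\ell$ is too costly; the proof must then exhibit punishment-style strategies and verify that no player has a profitable deviation. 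You instead collapse the strategic content entirely: the assigner has $F_0=\emptyset$, hence cost $+\infty$ under every profile, and the clause players own no vertices, hence admit no deviating strategies, so \emph{every} strategy profile is an NE and the problem degenerates to the existence of a play meeting the thresholds. Your version buys a shorter argument that isolates the hardness in pure path combinatorics (and shows hardness already for games with trivial equilibrium structure); the paper's richer gadget buys reusability, since $T_\ell$, $T_w$ and later $\perp$ are precisely the ingredients recycled in its hardness proofs for the social-welfare and Pareto problems and for the qualitative and SPE variants, where non-trivial equilibrium structure is indispensable. One small caveat on your side: the paper requires $(V_i)_{i\in\Pi}$ to be a partition of $V$, so if this is read as forbidding $V_i=\emptyset$ you should hand each clause player some harmless vertex (e.g.\ a literal vertex, which has a unique successor so ownership is immaterial, or a fresh self-loop vertex); the empty target set $F_0=\emptyset$ is unproblematic, as the paper itself uses empty target sets elsewhere.
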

This proof is based on a polynomial reduction from SAT and is inspired by the reduction provides for safety objective in~\cite{condurache_et_al:LIPIcs:2016:6256}.
\begin{proof}
	To prove this proposition, we give a polynomial reduction from the SAT problem that is NP-complete. Let $X = \{x_1, \ldots, x_m \}$ be the set of variables and $\psi = C_1\wedge\ldots \wedge C_n$ is a Boolean formula in CNF over $X$ and equals to the conjunction of the clauses $C_1, \ldots, C_n$. This problem is to decide if the formula $\psi$ is true. Such a formula is true if there exists a valuation $\mathcal{I} : X \rightarrow \{0,1\}$ such that the valuation of $X$ with respect to $\mathcal{I}$ evaluates $\psi$ to true.
	
	We build the following quantitative reachability game $\mathcal{G}_{\psi}= (\mathcal{A}, (\Cost_i)_{i \in \Pi}, (F_i)_{i\in \Pi})$ where $\mathcal{A} = (\Pi, V, E, (V_i)_{i \in \Pi})$:
	\begin{itemize}
		\item the arena $\arena$ is depicted in Figure~\ref{fig:redFormToGame} where the set of players $\Pi = \{1, \ldots, n, n+1 \}$ has $n+1$ players: one by clause (players $1$ to $n$) and an additional one. $V_{n+1}$ is depicted by squared vertices and for all $1 \leq i \leq n$, $V_i = \{P_i \}$;
		\item $F_{n+1} = \{ T_w \}$;
		\item for all $1 \leq i \leq n$, $F_i = \{0_x \mid \neg x \in C_i \} \cup \{ 1_x \mid x \in C_i \}\cup \{ T_\ell \}.$
	\end{itemize}
	
	The game $\mathcal{G}_{\psi}$ can be build from $\psi$ in polynomial time. Let us show that $\psi$ is true if and only if there exists an NE $ \sigma $ in $(\mathcal{G}_\psi,x_1)$ such that $(\Cost_i(\outcome{\sigma}{x_1}))_{i\in\Pi} \leq (2m,\ldots,2m,2m +n)$.\\
	
	$(\Rightarrow)$ Suppose that $\psi$ is true. Then there exists $\mathcal{I}: X \rightarrow \{0,1\}$ such that the valuation of $\psi$ with respect to $\mathcal{I}$ evaluates $\psi$ to true.
	
	Let us consider $\sigma$ defined in this way:
	
	\begin{itemize}
		\item for all $hv \in \Hist_{n+1}(x_1)$, $\sigma_{n+1}(hv) = \begin{cases} \mathcal{I}(v) & \text{ if } v = x \text{ with } x \in X \\
		 																		x_{k+1} & \text{ if } v = 0_{x_k} \text{ or } 1_{x_k} \text{with } 0 \leq k \leq m-1 \\
		P_1 & \text{if } v = 0_{x_m} \text{ or } 1_{x_m} \\
		T_\ell & \text{if } v = T_\ell \\
		T_w & \text{if } v = T_w\end{cases}$.
		\item for all $1 \leq i \leq n-1 $ (resp. for $i = n$), for all $hv \in \Hist_i(x_1)$ $\sigma_i(hv) = P_{i+1} $ (resp. $T_w$) if $hv$ is consistent with $\sigma_{n+1}$ and $\sigma_i(hv)= T_\ell$ otherwise. 	
		
	\end{itemize}
	
	We now prove that $\sigma$ is an NE. It is clear that $\outcome{\sigma}{x_1}$ is of the form $hP_1P_2\ldots P_n T_w^\omega$ and as $\sigma_{n+1}$ corresponds to the valuation $\mathcal{I}$ which evaluates $\psi$ to true, we have that : $\Cost_{n+1}(\outcome{\sigma}{x_1})= 2m+n$ and for all $1\leq i \leq n$, $\Cost_i(\outcome{\sigma}{x_1})\leq 2m$. Obviously Player $n+1$ does not have an incentive to deviate from $\sigma_{n+1}$ because it is a the least cost that he can obtain. For each player $i$ such that $1 \leq i \leq n$, as Player $i$ reaches his target set before he can play, changing his strategy does not change his cost. Thus, no player has a profitable deviation and $\sigma$ is an NE.\\
	
	$(\Leftarrow)$ Suppose that there exists an NE $\sigma$ in $(\mathcal{G}_{\psi}, x_1)$ such that $\Cost_i(\outcome{\sigma}{x_1}) \leq (2m, \ldots, 2m, 2m+n)$ and let us prove that $\psi$ is true.
	
	We define $\mathcal{I} : X \rightarrow \{0,1\}$ as follows: for $x \in X$, $\mathcal{I}(x) = \sigma_{n+1}(hx)$ with $hx$ a prefix of $\outcome{\sigma}{x_1}$. Let us show that the valuation $\mathcal{I}$ evaluates $\psi$ to true. As $\Cost_{n+1}(\outcome{\sigma}{x_1}) < +\infty$, it means that Player $n+1$ visits his target set $\{T_{w}\}$. Additionaly, for all $1\leq i \leq n$, as $\Cost_i(\outcome{\sigma}{x_1}) < +\infty$, Player $i$ visits also his target set but not $T_\ell$. Thus, he reaches $\{0_x \mid \neg x \in C_i \} \cup \{ 1_x \mid x \in C_i \}$ and this means that the clause $C_i$ is true if each variable in $C_i$ is replaced by its valuation. As it is the case for each clause, the formula $\psi$ is true.

	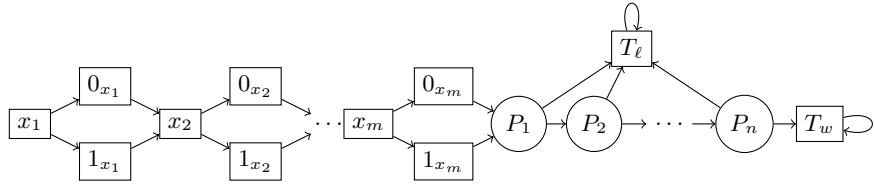
\begin{figure}

		\centering
		\begin{tikzpicture}
			\node[draw] (x1) at (0,0){$x_1$};
			\node[draw] (x10) at (1, 0.5){$0_{x_1}$};
			\node[draw] (x11) at (1, -0.5){$1_{x_1}$};
			
			\draw[->] (x1) to (x10);
			\draw[->] (x1) to (x11);
			
			\node[draw] (x2) at (2,0){$x_2$};
			\node[draw] (x20) at (3,0.5){$0_{x_2}$};
			\node[draw] (x21) at (3,-0.5){$1_{x_2}$};
			
			\draw[->] (x10) to (x2);
			\draw[->] (x11) to (x2);
			\draw[->] (x2) to (x20);
			\draw[->] (x2) to (x21);
			
			\node (int1) at (4,0){\ldots};
			\draw[->] (x20) to (int1);
			\draw[->] (x21) to (int1);
			
			\node[draw] (xm) at (4.5,0){$x_m$};
			\node[draw] (xm0) at (5.5,0.5){$0_{x_m}$};
			\node[draw] (xm1) at (5.5,-0.5){$1_{x_m}$};
			
			\draw[->] (xm) to (xm0);
			\draw[->] (xm) to (xm1);
			
			\node[draw,circle] (P1) at (6.5,0){$P_1$};
			
			\draw[->] (xm1) to (P1);
			\draw[->] (xm0) to (P1);
			
			\node[draw,circle] (P2) at (7.5,0){$P_2$};
			
			\draw[->] (P1) to (P2);
			
			\node (int2) at (8.5,0){$\ldots$};
			\draw[->] (P2) to (int2);
			
			\node[draw,circle] (Pn) at (9.5,0){$P_n$};
			\draw[->] (int2) to (Pn);
			
			\node[draw] (Tw) at (10.5,0){$T_{w}$};
			\draw[->] (Pn) to (Tw);
			
			\draw[->] (Tw) to [loop right] (Tw);
			
			\node[draw] (Tl) at (8,1){$T_{\ell}$};

			\draw[->] (Tl) to [loop above] (Tl);
			
			\draw[->] (P1) to (Tl);
			\draw[->] (P2) to (Tl);
			\draw[->] (Pn) to (Tl);			
		\end{tikzpicture}
			\caption{Reduction from the formula $\psi$ to the quantitative reachability game $\mathcal{G}_{\psi}$}
			\label{fig:redFormToGame}
	\end{figure}
\end{proof}

We conclude results about the threshold existence problem in quantitative reachability games with some remarks about a variant of this problem in this setting.

\begin{remark}
	\label{rem:twoThresholds}
	We may also consider this problem but with an upper and a lower threshold. This problem is also NP-complete but the result about memory is a little bit different. Indeed, the memory depends on the upper bound. If there exists an NE which satisfies the constraints then there exists an other one which satisfies the constraints too but with memory in $\mathcal{O}(\max_{i \in \Pi \mid y_i \neq +\infty} y_i + |V| + |\Pi|)$. The difference is due to the fact that we cannot apply iteratively the procedure (P1) and then procedure (P2) to obtain a lasso as procedure (P1) can decrease the cost of the outcome. So the new outcome could  no longer satisfy the constraints. To solve this problem, we only have to apply procedure (P2) on the outcome of $\sigma$ and obtain a lasso of length at most $\max_{i \in \Pi \mid y_i \neq +\infty} y_i + |V|$.
\end{remark}

\begin{theorem}
	\label{thm:thrSPE}
	Let $(\mathcal{G},v_0)$ be a quantitative reachability game and $(\mathcal{X},x_0)$ be its extended game,
	\begin{itemize}
		\item For SPE, (\cite{DBLP:journals/corr/abs-1905.00784}) Problem~\ref{prob:decisionThreshold} is PSPACE-complete.
		\item  If the answer of the decision problem is positive, there exists a strategy profile $ \sigma$ with memory in  $\mathcal{O}(2^{|\Pi|} \cdot |\Pi|\cdot |V|^{(|\Pi|+1)\cdot(|\Pi|+|V|)+1})$ which satisfies the constraints and such that $\Cost_i(\outcome{\sigma}{x_0}) \leq |V| \cdot |\Pi|$ if $i\in \Visit(\outcome{\sigma}{x_0})$.
	\end{itemize}
\end{theorem}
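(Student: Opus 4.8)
The plan is to treat the two bullets separately, since they have quite different status. The first bullet — that Problem~\ref{prob:decisionThreshold} for SPEs is PSPACE-complete — is not a new result of this paper: it is established in \cite{DBLP:journals/corr/abs-1905.00784}. Accordingly I would simply invoke that reference for the complexity claim (this is consistent with the footnote in Section~\ref{section:algo}, where Problem~\ref{prob:decisionThreshold} for SPEs is explicitly deferred to \cite{DBLP:journals/corr/abs-1905.00784}). The genuine content to be proved here is therefore the second bullet: the existence of a witnessing strategy profile with the stated memory and cost bounds whenever the answer is positive.

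For the second bullet, I would argue that it is an immediate consequence of Corollary~\ref{cor:corCritOut2ConstraintProb} applied to SPEs with $w_0 = x_0$. Concretely, suppose the decision problem has a positive answer, so there exists an SPE $\sigma$ in $(\mathcal{X},x_0)$ with $(\Cost_i(\outcome{\sigma}{x_0}))_{i\in\Pi} \leq y$. Corollary~\ref{cor:corCritOut2ConstraintProb} then produces an SPE $\tau$ in $(\mathcal{X},x_0)$ that still satisfies $(\Cost_i(\outcome{\tau}{x_0}))_{i\in\Pi} \leq y$, whose outcome is a lasso $h\ell^\omega$ with $|h\ell| \leq (|\Pi|+1)\cdot|V|$, for which $\Cost_i(\outcome{\tau}{x_0}) \leq |\Pi|\cdot|V|$ for every $i \in \Visit(\outcome{\tau}{x_0})$, and whose memory lies in $\mathcal{O}(2^{|\Pi|}\cdot|\Pi|\cdot|V|^{(|\Pi|+1)\cdot(|\Pi|+|V|)+1})$. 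Setting the sought profile to be this $\tau$ yields exactly the second bullet: the constraint satisfaction and the cost bound $\Cost_i(\outcome{\tau}{x_0}) \leq |V|\cdot|\Pi|$ are the first and third items of the corollary, and the memory bound is its last item.

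Because the second bullet collapses to a single application of Corollary~\ref{cor:corCritOut2ConstraintProb}, there is essentially no new obstacle at the level of this theorem; the real difficulty has already been discharged inside the proof of that corollary, so the ``hard part'' here is entirely inherited. If one wanted the argument to be self-contained, the chain to re-trace would be: the outcome characterization of SPEs (Theorem~\ref{thm:critOutcomeSPE}) via $\lambda^*$-consistency, the fact that applying (P1) then (P2) to a $\lambda^*$-consistent play yields a short $\lambda^*$-consistent lasso (Lemma~\ref{lem:Lassoes} together with Lemma~\ref{lem:LassoesConsis}), and the symbolic-witness construction of Proposition~\ref{prop:SPEmemory}, whose implication $(3 \Rightarrow 4)$ supplies the exponential memory bound by storing the identity of the last deviator together with a finite lasso $\rho_{i,v,I}$ of length $\mathcal{O}(|V|^{(|\Pi|+1)\cdot(|\Pi|+|V|)})$ to be replayed after each deviation. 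The one point genuinely deserving care — and the reason the bounds are meaningful — is that every quantity is expressed in terms of $|V|$ and $|\Pi|$ of the original game $\mathcal{G}$ rather than of the exponentially larger extended game $\mathcal{X}$, which is precisely the strengthening recorded in the remark following Lemma~\ref{lem:Lassoes} and carried through Corollary~\ref{cor:corCritOut2ConstraintProb}.
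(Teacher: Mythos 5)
Your proposal is correct and takes exactly the same route as the paper: the PSPACE-completeness is delegated to \cite{DBLP:journals/corr/abs-1905.00784}, and the paper's entire proof of the second bullet reads ``It is a direct consequence of Corollary~\ref{cor:corCritOut2ConstraintProb}.'' Your optional unpacking of that corollary (P1/P2 plus Lemma~\ref{lem:LassoesConsis} for $\lambda^*$-consistent lassoes, then Proposition~\ref{prop:SPEmemory} for the memory bound, with all bounds in terms of the original game's $|V|$ and $|\Pi|$) matches the paper's own proof of the corollary for SPEs.
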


Notice that, in~\cite{DBLP:journals/corr/abs-1905.00784}, Problem~\ref{prob:decisionThreshold} for SPE in quantitative reachability games is shown PSPACE-complete with a lower and an upper threshold.

\begin{proof}
	Let us prove the second assertion. It is a direct consequence of Corollary~\ref{cor:corCritOut2ConstraintProb}. \qed
\end{proof}

%---RESULTS FOR PROBLEM 2---

\subsection{Proof of Theorem~\ref{thm:Complexity} and Theorem~\ref{thm:Memory} for Problem~\ref{prob:decisionWelfare}}

\begin{lemma}
	\label{lem:NEwelfare} Let $(\mathcal{G},v_0)$  be a reachability game (resp.$(\mathcal{X},x_0)$ its extended game) and let $w_0= v_0$ (resp. $w_0=x_0$)
	Let $k \in \{0, \ldots, |\Pi| \}$ and $c \in \mathbb{N}$ be two thresholds, let $p \in (\mathbb{N}\cup \{+\infty\})^{|\Pi|}$ be a cost profile and let $\sigma$ be a strategy profile in $(\mathcal{G},v_0)$ (resp. $(\mathcal{X},x_0)$). We define $\mathcal{R} = \{ i \in \Pi \mid p_i < +\infty \}$. If $(|\mathcal{R}|, \sum_{i\in \mathcal{R}} p_i) \succeq (k,c)$ and $(\Cost_i(\outcome{\sigma}{w_0}))_{i\in\Pi} \leq p$ then $\SW(\outcome{\sigma}{w_0}) \succeq (k,c)$.
\end{lemma}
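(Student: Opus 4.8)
The plan is to reduce the statement to a simple monotonicity argument for the social welfare, using transitivity of the lexicographic order $\succeq$. Write $\rho = \outcome{\sigma}{w_0}$. From the hypothesis $(\Cost_i(\rho))_{i\in\Pi} \leq p$ we get $\Cost_i(\rho) \leq p_i$ for every $i \in \Pi$. The first step I would carry out is to observe the inclusion $\mathcal{R} \subseteq \Visit(\rho)$: indeed, if $i \in \mathcal{R}$ then $p_i < +\infty$, hence $\Cost_i(\rho) \leq p_i < +\infty$, which means that player~$i$ reaches his target set along $\rho$, i.e. $i \in \Visit(\rho)$. In particular $|\Visit(\rho)| \geq |\mathcal{R}|$. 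Given this, it suffices to prove the intermediate inequality $\SW(\rho) \succeq (|\mathcal{R}|, \sum_{i\in\mathcal{R}} p_i)$, since combining it with the assumed $(|\mathcal{R}|, \sum_{i\in\mathcal{R}} p_i) \succeq (k,c)$ and the transitivity of $\succeq$ immediately yields $\SW(\rho) \succeq (k,c)$.

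To establish the intermediate inequality I would split on whether the inclusion $\mathcal{R} \subseteq \Visit(\rho)$ is strict. If $|\Visit(\rho)| > |\mathcal{R}|$, then the first component of $\SW(\rho) = (|\Visit(\rho)|, \sum_{i\in\Visit(\rho)}\Cost_i(\rho))$ is strictly larger than $|\mathcal{R}|$, so $\SW(\rho) \succeq (|\mathcal{R}|, \sum_{i\in\mathcal{R}} p_i)$ holds by the first clause of the lexicographic order, regardless of the second components. Otherwise $|\Visit(\rho)| = |\mathcal{R}|$, and since $\mathcal{R} \subseteq \Visit(\rho)$ with both sets finite this forces $\Visit(\rho) = \mathcal{R}$. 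Then the second component satisfies $\sum_{i\in\Visit(\rho)}\Cost_i(\rho) = \sum_{i\in\mathcal{R}}\Cost_i(\rho) \leq \sum_{i\in\mathcal{R}} p_i$, using $\Cost_i(\rho) \leq p_i$ termwise, so $\SW(\rho) \succeq (|\mathcal{R}|, \sum_{i\in\mathcal{R}} p_i)$ holds by the second clause. In both cases the intermediate inequality holds, which completes the argument.

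Finally, the ``resp.'' version for the extended game $(\mathcal{X},x_0)$ requires no separate argument: as recalled in the paper, $\mathcal{X}$ is itself a quantitative reachability game, so $\Visit$, $\Cost^X_i$ and $\SW$ are defined exactly as above and the identical reasoning applies with $w_0 = x_0$. I do not expect any serious obstacle here; the only points requiring care are correctly exploiting the two clauses of the lexicographic order in the case split, and justifying $\Visit(\rho) = \mathcal{R}$ from equal finite cardinalities together with the inclusion, rather than attempting to compare the accumulated costs directly when the first components already differ.
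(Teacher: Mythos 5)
Your proof is correct and follows essentially the same route as the paper's: establish $|\Visit(\outcome{\sigma}{w_0})| \geq |\mathcal{R}|$ from the cost bound, derive the intermediate inequality $\SW(\outcome{\sigma}{w_0}) \succeq (|\mathcal{R}|, \sum_{i\in\mathcal{R}} p_i)$, and conclude by transitivity of $\succeq$. In fact your case split (strict inequality of first components versus $\Visit(\outcome{\sigma}{w_0}) = \mathcal{R}$ with a termwise cost comparison) carefully justifies the step the paper compresses into a single ``Thus'', so your write-up is, if anything, more complete.
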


\begin{proof}
	
	Assume that $(|\mathcal{R}|, \sum_{i\in \mathcal{R}} p_i) \succeq (k,c)$ and $(\Cost_i(\outcome{\sigma}{w_0}))_{i\in \Pi} \leq p$. As $(\Cost_i(\outcome{\sigma}{w_0}))_{i\in \Pi} \leq p$, we have that $|\Visit(\outcome{\sigma}{w_0})| \geq |\mathcal{R}|$.
	
	 Thus, $\SW(\outcome{\sigma}{w_0}) \succeq (|\mathcal{R}|, \sum_{i\in \mathcal{R}} p_i)$ and as the lexicographic ordering is transitive we have: $\SW(\outcome{\sigma}{v_0}) \succeq (k,c)$. \qed
	%Secondly, if $|\Visit(\outcome{\sigma}{w_0})| = |\mathcal{R}|$, as $(\Cost_i(\outcome{\sigma}{w_0}))_{i\in \Pi} \leq p$ it means that:\\ $\sum_{i \in \Visit(\outcome{\sigma}{w_0})} \Cost_i(\outcome{\sigma}{w_0}) \leq \sum_{i \in \mathcal{R}}p_i$. Thus, $\SW(\outcome{\sigma}{w_0}) \succeq (|\mathcal{R}|, \sum_{i\in \mathcal{R}} p_i)$ and as previously it concludes the proof. \qed
	
\end{proof}

\begin{proposition}
	\label{prop:quantNEWelfare}
	Let $(\mathcal{G},v_0)$ be a quantitative reachability game. 
	\begin{itemize}
		\item For NE, Problem~\ref{prob:decisionWelfare} is NP-complete.
		\item If the answer to this decision problem is affirmative, then there exists a strategy profile $\sigma$ with memory in  $\mathcal{O}((|\Pi|+1)\cdot |V|)$ which satisfies the constraints and such $\sum_{i \in \Visit(\outcome{\sigma}{v_0})} \Cost_i(\outcome{\sigma}{v_0}) \leq |\Pi|^2 \cdot |V|$ . 
		%\item The strategy profile which satisfies constraints of~\eqref{item:compConstExistProb3} need at most $ \max_{i\in\Pi} \{ \overline{y_i} \mid \overline{y_i} < +\infty \} + |V| + |\Pi|$ memory.
	\end{itemize}
\end{proposition}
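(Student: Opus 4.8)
The plan is to establish membership in NP by reducing \textbf{Problem~\ref{prob:decisionWelfare}} to the threshold problem \textbf{Problem~\ref{prob:decisionThreshold}} through Lemma~\ref{lem:NEwelfare}, and to obtain NP-hardness by a SAT reduction analogous to the one of Proposition~\ref{prop:compConstNashHard}. The memory and cost bounds will then fall out of the lasso reduction used for the threshold problem together with the outcome characterization of Theorem~\ref{thm:critOutcomeNE}.

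For the NP upper bound, I would design the following nondeterministic polynomial-time procedure. \emph{Step 1:} guess a cost profile $p \in (\{0,1,\ldots,|\Pi|\cdot|V|\} \cup \{+\infty\})^{|\Pi|}$, which is polynomially representable, and set $\mathcal{R} = \{ i \in \Pi \mid p_i < +\infty \}$. \emph{Step 2:} verify that $(|\mathcal{R}|, \sum_{i\in\mathcal{R}} p_i) \succeq (k,c)$. \emph{Step 3:} run the NP procedure of Proposition~\ref{prop:compConstNashEasy} for \textbf{Problem~\ref{prob:decisionThreshold}} on $(\mathcal{G},v_0)$ with threshold $p$, and accept iff both Step~2 and Step~3 succeed. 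Soundness is immediate from Lemma~\ref{lem:NEwelfare}: if both steps succeed, the NE $\sigma$ produced by Step~3 satisfies $(\Cost_i(\outcome{\sigma}{v_0}))_{i\in\Pi} \leq p$ and hence $\SW(\outcome{\sigma}{v_0}) \succeq (k,c)$.

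For completeness, suppose some NE $\sigma$ satisfies $\SW(\outcome{\sigma}{v_0}) \succeq (k,c)$ and let $\rho = \outcome{\sigma}{v_0}$. I would apply (P1) until no unnecessary cycle remains and then (P2) to obtain a lasso $\rho' = h\ell^\omega$ with $|h\ell| \leq (|\Pi|+1)\cdot|V|$. By Lemma~\ref{lem:Lassoes} these procedures preserve the visit set, $\Visit(\rho') = \Visit(\rho)$, never increase any cost, and guarantee $\Cost_i(\rho') \leq |\Pi|\cdot|V|$ for every $i \in \Visit(\rho')$. The key point is the monotonicity of $\SW$ under this transformation: since the number of visitors is unchanged and their accumulated cost does not increase, one has $\SW(\rho') \succeq \SW(\rho) \succeq (k,c)$. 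Setting $p_i = \Cost_i(\rho')$ for $i \in \Visit(\rho')$ and $p_i = +\infty$ otherwise yields a profile lying in the guessing range of Step~1 for which Step~2 succeeds, and by Lemma~\ref{lem:LassoesConsis} together with Theorem~\ref{thm:critOutcomeNE} the $\Val$-consistent lasso $\rho'$ is the outcome of an NE with cost $\leq p$, so Step~3 succeeds as well.

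The memory and cost bounds follow from the same witness $\rho'$. By Theorem~\ref{thm:critOutcomeNE}, $\rho'$ is realised by an NE $\sigma$ with memory in $\mathcal{O}(|h\ell| + |\Pi|) = \mathcal{O}((|\Pi|+1)\cdot|V|)$, and since at most $|\Pi|$ players visit their target set and each pays at most $|\Pi|\cdot|V|$, we obtain $\sum_{i\in\Visit(\outcome{\sigma}{v_0})} \Cost_i(\outcome{\sigma}{v_0}) \leq |\Pi|\cdot(|\Pi|\cdot|V|) = |\Pi|^2\cdot|V|$. Finally, for NP-hardness I would reuse the game $\mathcal{G}_\psi$ of Proposition~\ref{prop:compConstNashHard} and ask whether there is an NE with $\SW(\outcome{\sigma}{x_1}) \succeq (n+1, c)$ for a suitable total-cost threshold $c$, mirroring the correspondence between satisfying assignments of $\psi$ and NEs in which all players reach their targets. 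The main obstacle is the completeness direction: one must ensure that compressing the witnessing outcome to a polynomial-length lasso does not destroy the social-welfare guarantee, which is precisely why unnecessary-cycle elimination (which preserves $\Visit$ while only lowering costs) is used rather than arbitrary cost reduction.
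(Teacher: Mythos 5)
Your proposal is correct and follows essentially the same route as the paper: the same guess-a-cost-profile algorithm reduced to Problem~\ref{prob:decisionThreshold} via Lemma~\ref{lem:NEwelfare}, the same lasso machinery ((P1)/(P2), Lemma~\ref{lem:LassoesConsis}, Theorem~\ref{thm:critOutcomeNE}) for the memory and cost bounds, and the same SAT reduction reusing the game of Proposition~\ref{prop:compConstNashHard} for hardness. Your explicit verification that (P1)/(P2) preserve the social-welfare guarantee (via $\Visit$-preservation and cost monotonicity), and your bounding of the guessed profile to $\{0,\ldots,|\Pi|\cdot|V|\}\cup\{+\infty\}$, merely make explicit what the paper leaves implicit behind Corollary~\ref{cor:corCritOut2ConstraintProb}.
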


\begin{proof}
	
	Let $(\mathcal{G},v_0)$ be a quantitative reachability game and let $k \in \{0, \ldots, |\Pi| \}$ and $c \in \mathbb{N}$ be two thresholds. 
	
	The NP-algorithm works as follows:\\	
	\textbf{Step 1:} Guess $p \in (\mathbb{N}\cup \{+\infty\})^{|\Pi|}$ a cost profile;\\
 	\textbf{Step 2:} Let $\mathcal{R} = \{ i \in \Pi \mid p_i < +\infty \}$ be the set of players who have a finite cost in the cost profile $p$, verify if $(|\mathcal{R}|, \sum_{i \in \mathcal{R}}p_i) \succeq (k,c)$;\\
	\textbf{Step 3:} Verify that there exists an NE $\sigma$ in $(\mathcal{G},v_0)$ such that $(\Cost_i(\outcome{\sigma}{v_0}))_{i\in\Pi} \leq p. (\star)$\\
	
	Thanks to Proposition~\ref{prop:compConstNashEasy} and Lemma~\ref{lem:NEwelfare}, this algorithm is an NP-algorithm to solve Problem~\ref{prob:decisionWelfare} for Nash equilibrium. Moreover, if there exists an NE which fulfills constraints $(\star)$ there exists one with a polynomial memory and with costs less or equal to $|\Pi|\cdot|V|$ for players who have visited their target set (see Corollary~\ref{cor:corCritOut2ConstraintProb}). This leads to an NE such that the accumulated cost of the players who have visited their target set is less or equal to $|\Pi|^2\cdot |V|$ and it also fulfills the constraints for the social welfare.\\
	
The NP-hardness is due to a polynomial reduction from the  SAT problem in the same philosophy than the one for Problem~\ref{prob:decisionThreshold} for NE in quantitative reachability games: the SAT-formula is satisfiable if and only if the exists an NE with a social welfare $\succeq (|\Pi|,S)$ where $S = 2 m n + 2m+n$ (sum of the components of the threshold fixed for the reduction for Problem~\ref{prob:decisionThreshold}). \qed
\end{proof}

\begin{theorem}
	\label{thm:quantSPEWelfare}
	Let $(\mathcal{G},v_0)$ be a quantitative reachability game and $(\mathcal{X},x_0)$ be its extended game.
	\begin{itemize}
		\item For SPE, Problem~\ref{prob:decisionWelfare} is PSPACE-complete.
		\item If the answer to this decision problem is positive, then there exists a strategy profile $\sigma$ with memory in  $\mathcal{O}(2^{|\Pi|} \cdot |\Pi|\cdot |V|^{(|\Pi|+1)\cdot(|\Pi|+|V|)+1})$ which satisfies the constraints and such $\sum_{i \in \Visit(\outcome{\sigma}{x_0})} \Cost_i(\outcome{\sigma}{x_0}) \leq |\Pi|^2 \cdot |V|$ .
		%\item The strategy profile which satisfies constraints of~\eqref{item:compConstExistProb3} need at most $ \max_{i\in\Pi} \{ \overline{y_i} \mid \overline{y_i} < +\infty \} + |V| + |\Pi|$ memory.
	\end{itemize}
\end{theorem}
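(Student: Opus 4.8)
The plan is to handle Problem~\ref{prob:decisionWelfare} for SPEs exactly as its NE counterpart in Proposition~\ref{prop:quantNEWelfare}: reduce it to the threshold problem, but now invoke the PSPACE routine for Problem~\ref{prob:decisionThreshold} on SPEs (Theorem~\ref{thm:thrSPE}) in place of the NP one. Concretely, given thresholds $k$ and $c$, I would let the procedure range over all cost profiles $p \in (\{0,1,\ldots,|V|\cdot|\Pi|\} \cup \{+\infty\})^{|\Pi|}$; for each such $p$, with $\mathcal{R} = \{ i \in \Pi \mid p_i < +\infty \}$, it first checks the cheap condition $(|\mathcal{R}|, \sum_{i \in \mathcal{R}} p_i) \succeq (k,c)$ and, if it holds, calls the threshold routine of Theorem~\ref{thm:thrSPE} to decide whether some SPE $\sigma$ in $(\mathcal{X},x_0)$ satisfies $(\Cost_i(\outcome{\sigma}{x_0}))_{i \in \Pi} \leq p$. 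The procedure accepts iff some $p$ passes both tests.

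Soundness is immediate from Lemma~\ref{lem:NEwelfare}: if some $p$ passes both tests, the witnessing SPE has $\SW(\outcome{\sigma}{x_0}) \succeq (k,c)$. For completeness, suppose some SPE $\sigma$ already satisfies $\SW(\outcome{\sigma}{x_0}) \succeq (k,c)$ and put $q = (\Cost_i(\outcome{\sigma}{x_0}))_{i \in \Pi}$. Applying Corollary~\ref{cor:corCritOut2ConstraintProb} with threshold $y = q$ yields an SPE $\tau$ with $(\Cost_i(\outcome{\tau}{x_0}))_{i \in \Pi} \leq q$ in which every visiting player pays at most $|V|\cdot|\Pi|$; hence $p := (\Cost_i(\outcome{\tau}{x_0}))_{i \in \Pi}$ has all finite entries bounded by $|V|\cdot|\Pi|$ and therefore lies in the enumerated range. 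Since $(\Cost_i(\outcome{\tau}{x_0}))_{i \in \Pi} \leq q$ and, by assumption, the welfare induced by $q$ equals $\SW(\outcome{\sigma}{x_0}) \succeq (k,c)$, Lemma~\ref{lem:NEwelfare} applied to the profile $q$ gives $\SW(\outcome{\tau}{x_0}) \succeq (k,c)$, so this bounded $p$ passes both tests, the second one witnessed by $\tau$ itself. The crucial point for the complexity is exactly that it suffices to range over profiles with finite entries bounded by $|V|\cdot|\Pi|$: each such $p$ is describable in polynomial space, so the outer enumeration uses only polynomial space, and reusing this space across the polynomially space-bounded calls to Theorem~\ref{thm:thrSPE} keeps the whole procedure in PSPACE.

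For the matching lower bound, PSPACE-hardness is obtained by adapting the reduction already used to establish PSPACE-hardness of Problem~\ref{prob:decisionThreshold} for SPEs (as in \cite{DBLP:journals/corr/abs-1905.00784}), in the same spirit as the SAT-based argument lifting the NE threshold hardness to the NE welfare hardness in Proposition~\ref{prop:quantNEWelfare}: one chooses the welfare threshold to be $(|\Pi|, S)$, where $S$ is the sum of the components of the threshold used in the threshold reduction, so that meeting the welfare bound forces every player to reach his target within the prescribed number of steps.

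Finally, the memory and accumulated-cost guarantees come for free from the completeness analysis. The SPE $\tau$ produced above is precisely a threshold witness for $p$, so by Corollary~\ref{cor:corCritOut2ConstraintProb} it has memory in $\mathcal{O}(2^{|\Pi|} \cdot |\Pi| \cdot |V|^{(|\Pi|+1)\cdot(|\Pi|+|V|)+1})$ and satisfies $\Cost_i(\outcome{\tau}{x_0}) \leq |V|\cdot|\Pi|$ for every $i \in \Visit(\outcome{\tau}{x_0})$; summing over the at most $|\Pi|$ visiting players gives $\sum_{i \in \Visit(\outcome{\tau}{x_0})} \Cost_i(\outcome{\tau}{x_0}) \leq |\Pi|^2 \cdot |V|$, while Lemma~\ref{lem:NEwelfare} guarantees that $\tau$ still meets the welfare constraint. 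I expect the only genuinely delicate step to be the PSPACE upper bound, and specifically the justification, via the cost bound of Corollary~\ref{cor:corCritOut2ConstraintProb}, that restricting the search to cost profiles with polynomially bounded finite entries loses no generality; everything else is a direct transcription of the NE argument with the SPE tools substituted in.
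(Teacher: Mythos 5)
Your PSPACE upper bound and the memory/accumulated-cost bounds are correct and follow essentially the same route as the paper: enumerate a cost profile $p$, check the cheap lexicographic condition on $(|\mathcal{R}|,\sum_{i\in\mathcal{R}}p_i)$, and call the SPE threshold routine of Theorem~\ref{thm:thrSPE}. Your explicit completeness argument (via Corollary~\ref{cor:corCritOut2ConstraintProb}) showing that one may restrict to profiles whose finite entries are at most $|V|\cdot|\Pi|$ is in fact slightly more careful than the paper's write-up, which guesses $p \in (\mathbb{N}\cup\{+\infty\})^{|\Pi|}$ and leaves this bounding implicit.

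The genuine gap is in the PSPACE-hardness. You propose to reuse the QBF reduction for Problem~\ref{prob:decisionThreshold} for SPEs with welfare threshold $(|\Pi|, S)$, so that ``meeting the welfare bound forces every player to reach his target.'' That cannot work: in the QBF game of \cite{DBLP:journals/corr/abs-1905.00784} the existential player (Player $n+1$) and the universal player (Player $n+2$) have complementary objectives --- Player $n+1$ visits his target set if and only if Player $n+2$ does not --- so every play $\rho$ satisfies $|\Visit(\rho)| \leq |\Pi|-1$, and no SPE can ever have social welfare $\succeq (|\Pi|, S)$. Your reduction would therefore map every QBF instance, satisfiable or not, to a ``no'' instance. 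The paper instead takes the threshold $(|\Pi|-1, S)$ and, crucially, modifies the arena: it adds a vertex $\perp$ to $F_{n+1}$, reachable from the initial vertex $q_1 \in V_{n+1}$ by a path of length $2S$. This gadget is what makes the ``$\Leftarrow$'' direction sound (if a witnessing SPE did not let Player $n+1$ visit his target set, deviating to $\perp$ would be a profitable deviation, contradicting the SPE property), while the length $2S$ of that path ensures in the ``$\Rightarrow$'' direction that Player $n+1$ has no incentive to go to $\perp$ instead of following the satisfying valuation. So the hardness part is not a direct transcription of the NE/SAT argument with SPE tools substituted in: it needs both a different welfare threshold and a modified arena, and this is precisely the delicate step your sketch misses.
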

\begin{proof}
	Let $(\mathcal{G},v_0)$ be a quantitative reachability game and $(\mathcal{X},x_0)$ be its extended game and let $k \in \{0, \ldots, |\Pi| \}$ and $c \in \mathbb{N}$ be two thresholds. 
	
	The PSPACE-algorithm works as follows:\\	
	\textbf{Step 1:} Guess $p \in (\mathbb{N}\cup \{+\infty\})^{|\Pi|}$ a cost profile;\\
 	\textbf{Step 2:}Let $\mathcal{R} = \{ i \in \Pi \mid p_i < +\infty \}$ be the set of players who have a finite cost in the cost profile $p$, verify if $(|\mathcal{R}|, \sum_{i \in \mathcal{R}}p_i) \succeq (k,c)$;\\
	\textbf{Step 3:} Verify that there exists an SPE $\sigma$ in $(\mathcal{X}, x_0)$ such that $(\Cost_i(\outcome{\sigma}{x_0}))_{i\in\Pi} \leq p.$ ($\star$)\\
	
	Thanks to Theorem~\ref{thm:thrSPE} and Lemma~\ref{lem:NEwelfare}, this algorithm is a PSPACE-algorithm to solve Problem~\ref{prob:decisionWelfare} for SPE. Moreover, if there exists an SPE which fulfills the constraints $(\star)$ there exists one with a finite-memory and with costs less or equal to $|\Pi|\cdot|V|$ for players who have visited their target set (see Corollary~\ref{cor:corCritOut2ConstraintProb}). This leads to an SPE such that the cost sum of the players who have visited their target set is less or equal to $|\Pi|^2\cdot |V|$ and it fulfills the constraints $(\star)$ and thus the constraints of Problem~\ref{prob:decisionWelfare} (by Lemma~\ref{lem:NEwelfare}).\\
	
The PSPACE-hardness is due to a polynomial reduction from the QBF problem (which is PSPACE-complete) in the same philosophy than the one for Problem~\ref{prob:decisionThreshold} for SPE in quantitative reachability games (see~\cite{DBLP:journals/corr/abs-1905.00784}): the fully quantified Boolean formula is satisfiable if and only if the exists an SPE with a social welfare $\succeq (|\Pi|-1,S)$ where $S = 2 m n + 2m+n$ (sum of the components of the threshold fixed for the reduction for Problem~\ref{prob:decisionThreshold} for SPE in quantitative reachability games). But, if we want that the implication ``$\Leftarrow$'' holds, we have to slightly change the arena of the game (see Figure~\ref{figure:reachPSPACEh}): we add a vertex $\perp$ that we add in the target set of the ``existential player'' (Player $n+1$) and that is reachable from  $q_1 \in V_{n+1}$ (we can assume that the formula $\psi = Q_1x_1Q_2x_2\ldots Q_mx_m \phi(X)$ is such that $Q_k = \exists$ if $k$ is odd and $Q_k = \forall$ otherwise). Notice that the weight $2S$ on the edge $(q_1,\perp)$ should be understood as a path of length $2S$.

The main idea of the implication ``$\Leftarrow$'' is the following one: we have an SPE $\sigma$ sucht that $\SW(\outcome{\sigma}{q_1}) \succeq (|\Pi|-1, S)$. As Player $n+1$ visits his target set if and only if Player $n+2$ does not, we have that $|\Visit(\outcome{\sigma}{q_1})| = |\Pi|-1$. Assume that Player $n+1$ does not visit his target set, as the game is initialized in $q_1$ which is a vertex of Player $n+1$, he can go to $\perp$ and in this way he visits his target set. So, it is a profitable deviation for Player $n+1$ and it leads to a contradiction with the fact that $\sigma$ is an SPE. We can conclude that $\sigma$ is an SPE such that all players excepted Player $n+2$ visit their target set.

Additionaly, for each $\rho \in \Plays(q_1)$, if $\Cost_{n+1}(\rho) < +\infty$ (either the vertex $\perp$ or the vertex $t_{n+1}$ is reached) then no vertex $t_k$ with $1\leq k \leq n$ is reached.  Then the philosophy of the proof is the same as the one in~\cite{DBLP:journals/corr/abs-1905.00784}.

Notice that the weight on the edge $(q_1,\perp)$ allows to ensure that, for implication ``$\Rightarrow$'', Player $n+1$ does not have an incentive to go to $\perp$ in place to follow the valuation that makes the formula true.

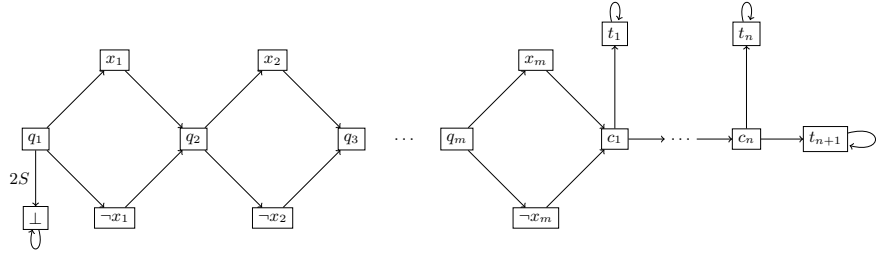
\begin{figure}[h!]
\centering
\scalebox{0.7}{
\begin{tikzpicture}
\node[draw] (Q1) at (0,0){$q_1$};
\node[draw] (perp) at (0,-1.5){$\perp$};
\node[draw] (Q2) at (3,0){$q_2$};
\node[draw] (Q3) at (6,0){$q_3$};
\node (empty) at (7,0){$\ldots$};
\node[draw] (Qm) at (8,0){$q_m$};
\node[draw] (C1) at (11,0){$c_1$};
\node (empty2) at (12.25,0){$\ldots$};
\node[draw](Cn) at (13.5,0){$c_n$};
\node[draw](T0) at (15,0){$t_{n+1}$};

\node[draw] (x1) at (1.5,1.5){$x_1$};
\node[draw] (nx1) at (1.5,-1.5){$\neg x_1$};

\node[draw] (x2) at (4.5,1.5){$x_2$};
\node[draw] (nx2) at (4.5,-1.5){$\neg x_2$};

\node[draw] (xm) at (9.5, 1.5){$x_m$};
\node[draw] (nxm) at (9.5, -1.5){$\neg x_m$};

\node[draw] (T1) at (11,2){$t_1$};
\node[draw] (Tn) at (13.5,2){$t_n$};

\draw[->] (Q1) to (x1);
\draw[->] (Q1) to (nx1);
\draw[->] (Q1) to node[left]{$2S$} (perp);
\draw[->] (perp) to [loop below] (perp);

\draw[->] (x1) to (Q2);
\draw[->] (nx1) to (Q2);

\draw[->] (Q2) to (x2);
\draw[->] (Q2) to (nx2);

\draw[->] (x2) to (Q3);
\draw[->] (nx2) to (Q3);

\draw[->](Qm) to (xm);
\draw[->](Qm) to (nxm);

\draw[->](xm) to (C1);
\draw[->](nxm) to (C1);

\draw[->](C1) to (empty2);
\draw[->](empty2) to (Cn);

\draw[->](Cn) to (T0);

\draw[->](C1) to (T1);
\draw[->](Cn) to (Tn);

\draw[->] (T1) edge [loop above] (T1);
\draw[->] (Tn) edge [loop above] (Tn);
\draw[->] (T0) edge [loop right] (T0);
\end{tikzpicture}}
\caption{Reduction from the formula $\psi$ to the quantitative reachability game $\mathcal{G}_{\psi}$ for Problem~\ref{prob:decisionWelfare} with SPEs.}
\label{figure:reachPSPACEh}
\end{figure}  \qed
\end{proof}

%---RESULTS FOR PROBLEM 2---

\subsection{Proof of Theorem~\ref{thm:Complexity} and Theorem~\ref{thm:Memory} for Problem~\ref{prob:decisionPareto}}

\begin{lemma}
	Problem~\ref{prob:4} belongs to co-NP for quantitative reachability games.
\end{lemma}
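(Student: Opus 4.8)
The plan is to show that the \emph{complement} of Problem~\ref{prob:4} lies in NP, which immediately gives that Problem~\ref{prob:4} is in co-NP. Fix a quantitative reachability game $(\mathcal{G},v_0)$ and a lasso $\rho$ with cost profile $p = (\Cost_i(\rho))_{i\in\Pi}$; note that $p$ already belongs to the set $P$ of realizable cost profiles by definition, so $p$ is Pareto optimal in $\Plays(v_0)$ exactly when it is minimal in $P$ for the componentwise ordering. Hence $p$ \emph{fails} to be Pareto optimal if and only if there exists a play $\rho' \in \Plays(v_0)$ whose cost profile \emph{strictly dominates} $p$, that is, $(\Cost_i(\rho'))_{i\in\Pi} \leq p$ together with $\Cost_j(\rho') < p_j$ for at least one $j \in \Pi$. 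Such a dominating play is the natural certificate for the complementary problem.

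The key step is to argue that, whenever a dominating play exists, there is one of \emph{polynomial size} that serves as an NP-certificate. Starting from any dominating $\rho'$, I would apply procedure (P1) as long as possible and then procedure (P2). By Lemma~\ref{lem:Lassoes}, (P1) can only decrease the cost profile componentwise and (P2) leaves it unchanged, so the resulting lasso $\rho''$ satisfies $(\Cost_i(\rho''))_{i\in\Pi} \leq (\Cost_i(\rho'))_{i\in\Pi} \leq p$. In particular $\Cost_j(\rho'') \leq \Cost_j(\rho') < p_j$, so the strict inequality in coordinate $j$ is preserved and $\rho''$ still strictly dominates $p$. Moreover, the same lemma bounds $|\rho''|$ by $(|\Pi|+1)\cdot|V|$, which is polynomial in the size of the game.

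The NP-procedure for the complement then guesses a lasso of length at most $(|\Pi|+1)\cdot|V|$, checks that it is a genuine play of the game, computes its cost profile (which, since the lasso has polynomial length, takes polynomial time), and verifies both $(\Cost_i(\rho''))_{i\in\Pi} \leq p$ and the existence of at least one strictly improved coordinate. All these checks run in polynomial time, so the complement is in NP and Problem~\ref{prob:4} is in co-NP. The same reasoning applies verbatim to the extended game $(\mathcal{X},x_0)$, since it is itself a quantitative reachability game and, by the remark following Lemma~\ref{lem:Lassoes}, the polynomial bound $(|\Pi|+1)\cdot|V|$ holds with $|V|$ the number of vertices of $\mathcal{G}$ (so the witness stays polynomial in the size of $\mathcal{G}$). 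The only delicate point—and thus the main obstacle—is verifying that reducing a dominating play to a polynomial lasso does not destroy the strict domination; this is exactly what the monotonicity of (P1) in Lemma~\ref{lem:Lassoes} guarantees.
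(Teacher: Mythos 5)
Your proposal is correct and follows essentially the same route as the paper: guess a lasso of length at most $(|\Pi|+1)\cdot|V|$ (whose existence follows from Lemma~\ref{lem:Lassoes}) and check in polynomial time that its cost profile strictly dominates the given one, which puts the complement of Problem~\ref{prob:4} in NP. Your write-up is in fact slightly more careful than the paper's, since you explicitly verify that applying (P1)/(P2) to a dominating play preserves strict domination (the paper glosses over this by saying one ``may assume'' the dominating play is already a short lasso).
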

\begin{proof}
	Let us prove it for quantitative reachability games. If $\rho$ is not Pareto optimal, there exists a play $\rho'$ such that $(\Cost_i(\rho))_{i\in\Pi} \geq (\Cost_i(\rho'))_{i\in\Pi}$ and $(\Cost_i(\rho))_{i\in\Pi} \neq (\Cost_i(\rho'))_{i\in\Pi}$. Moreover, thanks to Lemma~\ref{lem:Lassoes}, one may assume that $\rho'$ is a lasso with size at most $(|\Pi|+1)\cdot |V|$. So, we only have to guess such a lasso $\rho'$ and to verify that $(\Cost_i(\rho))_{i\in\Pi} \geq (\Cost_i(\rho'))_{i\in\Pi}$ and $(\Cost_i(\rho))_{i\in\Pi} \neq (\Cost_i(\rho'))_{i\in\Pi}$. This can be done in polynomial time.
	
\end{proof}

\begin{proposition}
		Let $(\mathcal{G},v_0)$ be a quantitative reachability game, for NE:
		
		\begin{itemize}
			\item Problem~\ref{prob:decisionPareto} belongs to $\Sigma^P_2$ and is NP-hard.
			\item If the answer to this decision problem is positive, then there exists an NE $\sigma$ with memory in $\mathcal{O}((|\Pi|+1)\cdot |V|)$ such that for all $i \in \Visit(\outcome{\sigma}{v_0})$, $\Cost_i(\outcome{\sigma}{v_0}) \leq |\Pi|\cdot|V|$.
		\end{itemize}
\end{proposition}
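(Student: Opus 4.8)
The plan is to treat the three assertions of the proposition separately, reusing the machinery already set up in Sections~\ref{subsection:charac}--\ref{section:algo}.

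For membership in $\Sigma^P_2$, I would follow exactly the guess-and-check scheme outlined in Section~\ref{section:algo}. By Proposition~\ref{prop:particularPareto}, if some NE has a Pareto optimal cost profile, then there is an NE $\tau$ with the \emph{same} cost profile whose outcome is a lasso $h\ell^\omega$ with $|h\ell| \le (|\Pi|+1)\cdot|V|$; since the cost profile is unchanged it remains Pareto optimal in $\Plays(v_0)$. Hence it suffices to existentially guess such a polynomial-length lasso $\rho = h\ell^\omega$ with $\Visit(\rho)=\Visit(h)$. Checking that $\rho$ is the outcome of an NE is polynomial via Theorem~\ref{thm:critOutcomeNE}: one computes the values $\Val_i$ of the coalitional games (polynomial time) and verifies that $\rho$ is $\ConstNE$-consistent, which only needs to be tested along $h\ell$ because no new player reaches his target inside $\ell$. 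The sole remaining test, that $(\Cost_i(\rho))_{i\in\Pi}$ is Pareto optimal in $\Plays(v_0)$, is precisely Problem~\ref{prob:4}, which lies in co-NP. Thus the whole procedure is an NP computation with access to a co-NP oracle, which places Problem~\ref{prob:decisionPareto} for NEs in $\Sigma^P_2$.

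For NP-hardness I would reduce from SAT, adapting the construction of Proposition~\ref{prop:compConstNashHard} (the game $\mathcal{G}_\psi$ of Figure~\ref{fig:redFormToGame}). The goal is to exhibit a distinguished cost profile $p^\star$ that is Pareto optimal in $\Plays(v_0)$ and is realized by an NE if and only if $\psi$ is satisfiable, while arranging that \emph{every} NE of the game either realizes $p^\star$ or has a cost profile strictly dominated by some play's profile, hence not Pareto optimal. Concretely, a satisfying valuation lets Player $n+1$ reach $T_w$ while each clause player already reaches his target at minimal cost in the variable-selection phase, and this minimal joint behaviour yields $p^\star$; if $\psi$ is unsatisfiable, some clause player is forced to escape to $T_\ell$, which both blocks $T_w$ and pushes the equilibrium onto a dominated profile.

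Finally, the memory and cost bounds are a direct consequence of Proposition~\ref{prop:particularPareto}: the NE $\tau$ it produces has outcome a lasso $h\ell^\omega$ with $|h\ell| \le (|\Pi|+1)\cdot|V|$ and $\Cost_i(\outcome{\tau}{v_0}) \le |\Pi|\cdot|V|$ for every $i\in\Visit(\outcome{\tau}{v_0})$; feeding this lasso length into the finite-state-machine representation of Theorem~\ref{thm:critOutcomeNE} yields memory $\mathcal{O}((|\Pi|+1)\cdot|V|)$. The main obstacle I anticipate is the hardness direction: unlike the threshold problem, Pareto optimality is a statement about \emph{global minimality} of the realized profile within $\Plays(v_0)$, so the delicate point is to design the gadget so that unsatisfiability genuinely forces every equilibrium profile to be dominated, rather than merely violating a fixed threshold.
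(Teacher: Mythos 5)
Your $\Sigma^P_2$ membership argument and the memory/cost bounds are correct and coincide with the paper's own proof: guess a lasso of length at most $(|\Pi|+1)\cdot|V|$ (sufficient by Proposition~\ref{prop:particularPareto}), check that it is the outcome of an NE in polynomial time via the $\ConstNE$-consistency characterization of Theorem~\ref{thm:critOutcomeNE}, query the co-NP oracle of Problem~\ref{prob:4} for Pareto optimality, and read the memory bound off the finite-state machine of Theorem~\ref{thm:critOutcomeNE} instantiated on that lasso.

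The hardness half, however, has a genuine gap, exactly at the point you yourself flag as the main obstacle. The game $\mathcal{G}_\psi$ of Figure~\ref{fig:redFormToGame}, used unmodified, does \emph{not} give a reduction: when $\psi$ is unsatisfiable it still admits an NE whose cost profile is Pareto optimal, so the reduction would answer yes on unsatisfiable instances. Concretely, fix any valuation $I$ and let clause Player~$1$ always move from $P_1$ to $T_\ell$. The resulting play $\rho_I$ (valuation phase under $I$, then $P_1$, then $T_\ell^\omega$) is an NE outcome: Player~$n+1$ gets $+\infty$ whatever he plays, since every deviation still runs into Player~$1$'s exit to $T_\ell$; every clause player reaches his target either during the valuation phase or at $T_\ell$ (recall $T_\ell \in F_i$ for all $1 \le i \le n$); and Players $2,\ldots,n$ own no vertex on the play. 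Its cost profile gives a finite cost to every clause player and $+\infty$ to Player~$n+1$. Now, because $\psi$ is unsatisfiable, \emph{every} play reaching $T_w$ leaves at least one clause player at cost $+\infty$, so no $T_w$-play can componentwise dominate this profile; and a play entering $T_\ell$ from some later $P_k$ is itself weakly dominated by the play with the same valuation entering $T_\ell$ from $P_1$. Hence, taking $I$ Pareto-minimal within the finite family $\{\rho_{I'}\}_{I'}$ yields an NE whose profile is Pareto optimal in $\Plays(v_0)$. Your claim that unsatisfiability ``pushes the equilibrium onto a dominated profile'' is therefore false for this gadget: the $T_\ell$-profiles and the $T_w$-profiles are incomparable, not ordered.

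The paper's fix is the gadget of the qualitative proof (Proposition~\ref{prop:qualNEPareto}, Figure~\ref{fig:redFormToGamePareto}), transported to the quantitative setting: add a fresh absorbing vertex $\perp$ with $\perp \in F_{n+1}$, reachable from the initial vertex (which belongs to Player~$n+1$) by a path of sufficiently high length, say $2S$. When $\psi$ is unsatisfiable, every $T_\ell$-ending play then ceases to be an NE outcome, because Player~$n+1$ can deviate at the initial vertex to $\perp$ and pay $2S < +\infty$; $T_w$-ending NEs are excluded as before; so every NE goes directly to $\perp$, with profile $(+\infty,\ldots,+\infty,2S)$, and \emph{this} profile is dominated by any play through a valuation to $T_w$ (Player~$n+1$ improves to $2m+n < 2S$, at least one clause player improves to a finite cost, nobody worsens). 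When $\psi$ is satisfiable, the high weight $2S$ removes Player~$n+1$'s incentive to go to $\perp$, so the NE following a cost-minimal satisfying valuation to $T_w$ survives and its profile is Pareto optimal. Without this $\perp$ gadget (or an equivalent device that lets Player~$n+1$ unilaterally secure a finite cost), your sketch cannot be repaired.
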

\begin{proof}
	We can provide the following $\Sigma^P_2$-algorithm for Problem~\ref{prob:decisionPareto}, given an oracle for Problem~\ref{prob:4}:\\
	 \textbf{Step 1:} Guess a play $\rho$ as a lasso of length at most $(|\Pi|+1)\cdot|V|$ (sufficient thanks to Proposition~\ref{prop:particularPareto}).\\
	 \textbf{Step 2:} Check that $(\Cost_i(\rho))_{i\in \Pi}$ is Pareto optimal in $\Plays(v_0)$ using the oracle for Problem~\ref{prob:4}.\\
	 \textbf{Step 3:} Check that $\rho$ is the outcome of a Nash equilibrium using the characterization.\\
	
	Additionally, as $\rho$ is a lasso with length at most $(|\Pi|+1)\cdot|V|$, it provides an NE with outcome $\rho$ and memory in $\mathcal{O}((|\Pi|+1)\cdot |V|)$ such that for all $i \in \Visit(\rho)$, $\Cost_i(\rho) \leq |V|\cdot |\Pi|$ (by Proposition~\ref{prop:particularPareto}).\\
	
	The NP-hardness is due to a polynomial reduction from the SAT problem which is NP-complete. The philosophy of the proof is the same than the one for the qualitative setting (see proof of Proposition~\ref{prop:qualNEPareto}) but we have to put a sufficiently high weight on the edge between the initial vertex and the vertex $\perp$ to ensure that Player $n+1$ does not have an incentive to go to $\perp$.\qed
\end{proof}

\begin{proposition}
		Let $(\mathcal{G},v_0)$ be a quantitative reachability game and let $(\mathcal{X},x_0)$ its extended game, for SPE:
		
		\begin{itemize}
			\item Problem~\ref{prob:decisionPareto} is PSPACE-complete.
			\item If the answer to this decision problem is positive, then there exists an SPE $\sigma$ with memory with memory in  $\mathcal{O}(2^{|\Pi|} \cdot |\Pi|\cdot |V|^{(|\Pi|+1)\cdot(|\Pi|+|V|)+1})$ which satisfies the constraints  such that for all $i \in \Visit(\outcome{\sigma}{x_0})$, $\Cost_i(\outcome{\sigma}{x_0}) \leq |\Pi|\cdot|V|$.
		\end{itemize}
\end{proposition}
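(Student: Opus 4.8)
The plan is to follow the recipe sketched in Section~\ref{section:algo}: reduce Problem~\ref{prob:decisionPareto} for SPEs to the threshold Problem~\ref{prob:decisionThreshold} for SPEs (which is in PSPACE by Theorem~\ref{thm:thrSPE}) combined with the co-NP oracle for Problem~\ref{prob:4}, and then to read off the memory bound from Proposition~\ref{prop:particularPareto} and Proposition~\ref{prop:SPEmemory}. The PSPACE-hardness I would obtain by adapting the QBF reduction already used for the threshold problem.

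For membership in PSPACE, I would argue as follows. By Proposition~\ref{prop:particularPareto}, if some SPE has a Pareto optimal cost profile, then there is one whose outcome is a lasso $h\ell^\omega$ of length at most $(|\Pi|+1)\cdot |V|$, with the same (hence Pareto optimal) cost profile and with $\Cost_i \leq |V|\cdot|\Pi|$ for every $i \in \Visit$. It therefore suffices to: (i) guess such a lasso $\rho$ and set $c = (\Cost_i(\rho))_{i\in\Pi}$, which certifies $c \in P$; (ii) invoke the co-NP oracle for Problem~\ref{prob:4} to check that $c$ is Pareto optimal in $\Plays(x_0)$; (iii) invoke the PSPACE procedure of Theorem~\ref{thm:thrSPE} to check that there exists an SPE with cost profile $\leq c$. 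Correctness of step (iii) hinges on the minimality of $c$ in $P$: any SPE produced in step (iii) has a cost profile in $P$ that is $\leq c$, hence equal to $c$ by minimality, so that SPE realizes the Pareto optimal profile $c$. Since co-NP $\subseteq$ PSPACE, Problem~\ref{prob:decisionThreshold} for SPEs lies in PSPACE, and the polynomially many relevant cost profiles can be enumerated deterministically while reusing space, the whole procedure runs in PSPACE.

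For the memory bound, I would combine the two structural statements. Proposition~\ref{prop:particularPareto} bounds the finite components of the Pareto optimal cost profile $c$ by $|V|\cdot|\Pi|$, so the quantity $M = \max_{i\in\Pi}\{c_i \mid c_i < +\infty\}$ appearing in Proposition~\ref{prop:SPEmemory} satisfies $M \leq |V|\cdot|\Pi|$. Applying the equivalence of Proposition~\ref{prop:SPEmemory} to $c$ then yields a finite-memory SPE with cost profile $c$ and memory in $\mathcal{O}(M + 2^{|\Pi|}\cdot|\Pi|\cdot|V|^{(|\Pi|+1)\cdot(|\Pi|+|V|)+1})$; the additive term $M$ is dominated by the second summand, giving precisely the announced bound, while $\Cost_i \leq |V|\cdot|\Pi|$ for each $i \in \Visit$ holds again by Proposition~\ref{prop:particularPareto}.

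For PSPACE-hardness, as announced in Section~\ref{subsection:results}, I would adapt the QBF reduction used for the threshold problem for SPEs (Theorem~\ref{thm:thrSPE}, following \cite{DBLP:journals/corr/abs-1905.00784} and the gadget of Figure~\ref{figure:reachPSPACEh}), tuning the edge weights so that the fully quantified Boolean formula evaluates to true exactly when the game admits an SPE with a Pareto optimal cost profile. The main obstacle is exactly this last step: the reduction must now target Pareto optimality rather than a fixed threshold, so I must ensure that the ``truth-witnessing'' cost profile is genuinely minimal in $P$ and that no spurious SPE attains a dominating or incomparable profile. The membership and memory parts, by contrast, are essentially direct consequences of the already-established Proposition~\ref{prop:particularPareto} and Proposition~\ref{prop:SPEmemory}.
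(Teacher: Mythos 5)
Your membership and memory arguments are correct and follow essentially the paper's own proof: the paper likewise combines the co-NP oracle for Problem~\ref{prob:4} with the threshold problem for SPEs, and then applies Proposition~\ref{prop:particularPareto} and Proposition~\ref{prop:SPEmemory} exactly as you do to obtain the cost and memory bounds. Two of your choices are sound small variations: you guess a polynomial-length lasso rather than a bare cost profile, which has the advantage of certifying $c \in P$ and of matching the input format of Problem~\ref{prob:4}; and you invoke only the one-sided threshold problem, using Pareto minimality of $c$ to conclude that any SPE with cost profile $\leq c$ must have cost profile exactly $c$ --- the paper instead queries the two-sided variant ($p \leq \cdot \leq p$ in its Step~3), which \cite{DBLP:journals/corr/abs-1905.00784} also establishes to be solvable in PSPACE. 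One slip: there are exponentially many (not polynomially many) candidate lassoes/profiles, but this is harmless since each candidate has polynomial size, so a deterministic enumeration reusing space (or simply NPSPACE $=$ PSPACE) still yields the PSPACE bound.

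The genuine gap is the PSPACE-hardness, which you explicitly leave unresolved (``the main obstacle''). The paper does not adapt the threshold reduction directly, as you propose; it adapts the \emph{qualitative} Pareto reduction of Proposition~\ref{prop:qualSPEPareto} (on the arena of Figure~\ref{figure:reachPSPACEh}), where the difficulty you identify is resolved structurally rather than by weight-tuning on the clause side: in that game there are exactly two Pareto optimal profiles, $(g_1,\ldots,g_n,1,0)$ and $(g_1,\ldots,g_n,0,1)$ (where $g_i$ records whether clause $C_i$ is satisfiable and the last two coordinates belong to Players $n+1$ and $n+2$), and the escape edge from $q_1$ to $\perp$ guarantees that no SPE realizes the second profile, since Player $n+1$ could otherwise deviate to $\perp$ and secure his objective. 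Hence an SPE with a Pareto optimal profile exists if and only if the QBF is true, by the arguments of \cite{DBLP:journals/corr/abs-1905.00784}. The only modification needed in the quantitative setting is to replace the edge $(q_1,\perp)$ by a path of sufficiently high weight, so that Player $n+1$ does not prefer bailing out to $\perp$ over following a satisfying valuation. So your plan is repairable, but as written the hardness direction is missing precisely the argument that pins down the Pareto optimal profiles of the gadget and excludes the spurious one.
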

\begin{proof}
	The PSPACE-algorithm works as follows\\
	 \textbf{Step 1:} Guess a cost profile $p \in (\mathbb{N}\cup \{+\infty\})^{|\Pi|}$;\\
	 \textbf{Step 2:} Check that $p$ is Pareto optimal in $\Plays(x_0)$ using the oracle for Problem~\ref{prob:4}.\\
	 \textbf{Step 3:} Verify that there exists an SPE $\tau$ such that $p \leq (\Cost_i(\outcome{\tau}{x_0}))_{i\in\Pi} \leq p$ (Problem~\ref{prob:decisionThreshold})\\

	We obtain an SPE $\tau$ such that $(\Cost_i(\outcome{\tau}{x_0}))_{i\in\Pi} = p$ is Pareto optimal. By Proposition~\ref{prop:particularPareto}, we have that for all $i \in \Visit(\outcome{\tau}{x_0})$, $\Cost_i(\outcome{\tau}{x_0}) \leq |V|\cdot|\Pi|$.	Thus, by Proposition~\ref{prop:SPEmemory}, there exists an  SPE $\sigma$ with the same cost profile than $\tau$ but with memory in $\mathcal{O}(2^{|\Pi|} \cdot |\Pi|\cdot |V|^{(|\Pi|+1)\cdot(|\Pi|+|V|)+1})$.\\
	
	The PSPACE-hardness is due to a polynomial reduction from the QBF problem which is PSPACE-complete. The philosophy of the proof is the same than the one for the qualitative setting (see proof of Proposition~\ref{prop:qualSPEPareto}) but we have to put a sufficiently high weight on the edge between the initial vertex and the vertex $\perp$.\qed
\end{proof}

%\input{sv-appNE}
%\input{sv-appWitness}
%!TEX root=sv-main.tex
\section{Qualitative reachability games}

%---- PRELIMINARIES----
\subsection{Additional preliminaries}

\subsubsection{Qualitative reachability games}

All along this section we focus on \emph{qualitative reachability games}. Unlike quantitative reachability games, the arena is equipped with a gain function profile $(\Gain_i)_{i\in\Pi}$ such that for all $i\in \Pi$, $\Gain_i: \Plays \rightarrow \mathbb{N}\cup \{+\infty\}$ is a \emph{gain function} which assigns a gain to each play $\rho$ for Player $i$. We also say that the play $\rho$ has gain profile $(\Gain_i(\rho))_{i\in \Pi}$ and similarly if we consider the outcome of the strategy profile $\sigma$  from $v_0$, we say that $\sigma$ has gain profile $(\Gain_i(\outcome{\sigma}{v_0}))_{i\in\Pi}$.

\begin{definition} \label{def:qualitativeGame}
	A \emph{qualitative reachability game} $\mathcal{G} = (\mathcal{A}, (\Gain_i)_{i\in \Pi}, (F_i)_{i\in \Pi})$ is a game enhanced with a target set $F_i \subseteq V$. For all $i \in \Pi$, the gain function $\CostOrGain_i = \Gain_i$ is defined as follows: for all $\rho= \rho_0\rho_1\ldots \in \Plays$: $\Gain_i(\rho) = 1$ if there exists $k\in \mathbb{N}$ such that $\rho_k \in F_i$ and $\Gain_i(\rho) = 0$ otherwise.
\end{definition}

In this particular setting, players only aim to reach their target set but do not take into account the number of steps it takes. Player $i$ receives a gain of $1$ if $\rho$ visits his target set $F_i$, and a gain of $0$ otherwise. Thus each player~$i$ wants to \textbf{maximize} his gain.

\subsubsection{Solution concepts}

For qualitative reachability games, it is easy to recover the definitions of NE and SPE defined in Section~\ref{section:solutionConcepts}  by reversing the inequality and replacing cost functions by gain functions, as players want to maximize their gain instead of minimizing their cost.
This leads to the following Lemma. 

\begin{lemma} \label{lem:QuantitativeToQualitative}
Let $(\mathcal{G},v_0)$ be an initialized quantitative reachability game and $\sigma$ be a strategy profile. Consider the related qualitative reachability game $\mathcal{G}'$ with the same arena $\arena$ and target sets $(F_i)_{i \in \Pi}$, but the gain functions $(\Gain_i)_{i \in \Pi}$. Then if $\sigma$ is an NE (resp. SPE) in $(\mathcal{G},v_0)$, then $\sigma$ is also an NE (resp. SPE) in $(\mathcal{G}',v_0)$.
\end{lemma}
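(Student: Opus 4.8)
The plan is to exploit the fact that the cost and gain functions encode exactly the same reachability information, merely ordered differently: for every play $\rho$ and every player $i$ we have $\Gain_i(\rho) = 1$ if and only if $\Cost_i(\rho) < +\infty$ (equivalently $i \in \Visit(\rho)$), and $\Gain_i(\rho) = 0$ if and only if $\Cost_i(\rho) = +\infty$. First I would record this equivalence, which is immediate from Definitions~\ref{def:quantitativeGame} and~\ref{def:qualitativeGame}. The asymmetry worth keeping in mind is that the implication goes only one way: a quantitative profitable deviation may strictly lower a \emph{finite} cost without changing whether the target is reached, so being an NE in $\mathcal{G}$ is a strictly stronger requirement than being an NE in $\mathcal{G}'$, which is exactly why the lemma holds in this direction.

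For the NE case, I would argue by contraposition. Suppose $\sigma$ is an NE in $(\mathcal{G},v_0)$ but not in $(\mathcal{G}',v_0)$. Then some player $i$ has a deviation $\sigma'_i$ with $\Gain_i(\outcome{\sigma}{v_0}) < \Gain_i(\outcome{\sigma'_i,\sigma_{-i}}{v_0})$. Since gains lie in $\{0,1\}$, this forces $\Gain_i(\outcome{\sigma}{v_0}) = 0$ and $\Gain_i(\outcome{\sigma'_i,\sigma_{-i}}{v_0}) = 1$. Translating through the equivalence above, this reads $\Cost_i(\outcome{\sigma}{v_0}) = +\infty$ and $\Cost_i(\outcome{\sigma'_i,\sigma_{-i}}{v_0}) < +\infty$, hence $\Cost_i(\outcome{\sigma}{v_0}) > \Cost_i(\outcome{\sigma'_i,\sigma_{-i}}{v_0})$, so $\sigma'_i$ is a profitable deviation in $(\mathcal{G},v_0)$, contradicting that $\sigma$ is an NE there.

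For the SPE case, I would reduce to the NE case subgame by subgame, the qualitative subgame being defined analogously by $\Gain_{i\restriction h}(\rho) = \Gain_i(h\rho)$. The point is that the cost--gain equivalence survives restriction: for any history $hv \in \Hist(v_0)$ we have $\Gain_{i\restriction h}(\rho) = \Gain_i(h\rho)$ and $\Cost_{i\restriction h}(\rho) = \Cost_i(h\rho)$, so $\Gain_{i\restriction h}(\rho) = 1$ iff $\Cost_{i\restriction h}(\rho) < +\infty$ holds in each subgame as well. Consequently the argument of the NE case, applied verbatim to the pair $(\rest{\mathcal{G}}{h},v)$ and $(\rest{\mathcal{G}'}{h},v)$, shows that whenever $\rest{\sigma}{h}$ is an NE in the quantitative subgame it is also an NE in the qualitative subgame. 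Since $\sigma$ being an SPE in $(\mathcal{G},v_0)$ means $\rest{\sigma}{h}$ is an NE in $(\rest{\mathcal{G}}{h},v)$ for all $hv$, I conclude that $\rest{\sigma}{h}$ is an NE in $(\rest{\mathcal{G}'}{h},v)$ for all $hv$, i.e. $\sigma$ is an SPE in $(\mathcal{G}',v_0)$. The only real care needed, and thus the closest thing to an obstacle in an otherwise routine argument, is checking that the qualitative subgame $\rest{\mathcal{G}'}{h}$ is precisely the gain-function version of the quantitative subgame $\rest{\mathcal{G}}{h}$ — which is exactly the observation that restriction commutes with the cost--gain equivalence.
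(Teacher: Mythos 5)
Your proof is correct, and it matches the argument the paper has in mind: the paper states this lemma without an explicit proof, treating it as immediate from the definitions, and your contraposition argument (a qualitative profitable deviation must flip the gain from $0$ to $1$, hence drop the cost from $+\infty$ to a finite value, hence be a quantitative profitable deviation) is precisely the careful write-up of that observation, correctly extended subgame-by-subgame for the SPE case via the fact that the cost--gain correspondence commutes with restriction to $\rest{\mathcal{G}}{h}$. There are no gaps.
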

 
Thus, as it is proved that there always exists an SPE (and thus an NE) in a quantitative reachability game, there always exists one in a qualitative reachability game. 

\begin{theorem} \label{thm:SPEexistqual}
In every initialized qualitative reachability game, there always exists an SPE, and thus also an NE.
\end{theorem}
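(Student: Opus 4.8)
The plan is to reduce the qualitative existence question to the already-known quantitative one, using Lemma~\ref{lem:QuantitativeToQualitative} as the bridge. First I would fix an arbitrary initialized qualitative reachability game $(\mathcal{G}',v_0)$ with arena $\arena$ and target sets $(F_i)_{i\in\Pi}$. Associated with it is a quantitative reachability game $(\mathcal{G},v_0)$ sharing the same arena and the same target sets, but equipped with the cost functions $(\Cost_i)_{i\in\Pi}$ of Definition~\ref{def:quantitativeGame} instead of the gain functions. Since $(\mathcal{G},v_0)$ is a genuine quantitative reachability game, Theorem~2.1 of~\cite{DBLP:journals/corr/abs-1205-6346} (recalled in Section~\ref{section:solutionConcepts}) guarantees the existence of an SPE $\sigma$ in $(\mathcal{G},v_0)$.

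The crucial step is then to transfer $\sigma$ back to the qualitative game. This is exactly the content of Lemma~\ref{lem:QuantitativeToQualitative}: any strategy profile that is an SPE (resp.\ NE) in $(\mathcal{G},v_0)$ remains an SPE (resp.\ NE) in $(\mathcal{G}',v_0)$. Applying it to our $\sigma$ yields an SPE in $(\mathcal{G}',v_0)$, and because every SPE is in particular an NE, the existence of an NE follows at once.

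All the real content sits inside Lemma~\ref{lem:QuantitativeToQualitative}, so I would expect that to be the only delicate point; the theorem itself is an immediate corollary. Intuitively the lemma holds because reaching one's target set is precisely the event that separates a finite cost from an infinite one: a player enjoys gain $1$ in $\mathcal{G}'$ exactly when his cost in $\mathcal{G}$ is finite. Hence a profitable deviation in $\mathcal{G}'$, which by definition turns a gain of $0$ into a gain of $1$, corresponds to a deviation in $\mathcal{G}$ that lowers an infinite cost to a finite one, i.e.\ a profitable deviation in $\mathcal{G}$ as well. Consequently the absence of profitable deviations for $\sigma$ in $\mathcal{G}$ (in every subgame, in the SPE case) forbids profitable deviations in $\mathcal{G}'$, which is what we need, and no argument beyond invoking the lemma is required to close the proof.
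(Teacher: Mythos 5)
Your proof is correct and follows exactly the paper's own route: the paper likewise obtains Theorem~\ref{thm:SPEexistqual} by combining the existence of an SPE in the associated quantitative game (Theorem~2.1 of~\cite{DBLP:journals/corr/abs-1205-6346}) with the transfer Lemma~\ref{lem:QuantitativeToQualitative}. Your added justification of the lemma (gain $1$ in $\mathcal{G}'$ if and only if finite cost in $\mathcal{G}$, so a profitable deviation in $\mathcal{G}'$ is one in $\mathcal{G}$, in every subgame) is sound and in fact fills in a detail the paper leaves implicit.
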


\subsubsection{Studied problems}

 In case of qualitative reachability, as for quantitative reachability game we are interested in a solution that fulfills certain requirements. For example, we would like to know whether there exists a solution such that a maximum number of players visit their target sets.

Let $(\mathcal{G},v_0)$ be an initialized  qualitative reachability game with $\mathcal{G} =$ \\$ (\mathcal{A}, (\Gain_i)_{i\in \Pi}, (F_i)_{i\in \Pi})$. Given $\rho \in \Plays(v_0)$, we denote by $\Visit(\rho)$ the set of players $i$ such that $\rho$ visits $F_i$, that is, $\Visit(\rho) = \{ i \in \Pi \mid Gain_i(\rho) =  1 \}$. The \emph{social welfare} $\SW(\rho)$ of $\rho$ is the size of $\Visit(\rho)$. Let $P \subseteq \{0,1\}^{|\Pi|}$ be the set of all gain profiles $p = (\Gain_i(\rho))_{i\in\Pi}$, with $\rho \in \Plays(v_0)$. A cost profile $p \in P$ is called \emph{Pareto-optimal in $\Plays(v_0)$} if it is maximal in $P$ with respect to the componentwise ordering $\leq$ on $P$. Notice that if there exists $\rho$ with $\Visit(\rho) = \Pi$, then its social welfare is the largest possible and there exists a unique Pareto optimal gain profile equal to $(1,1,\ldots,1)$. Notice also that certain target sets $F_i$ might be empty or not reachable from the initial vertex $v_0$. Hence in this case, the best that we can hope is a (unique) Pareto optimal gain profile $p$ such that $p_i = 1$ if and only if $F_i$ is reachable\footnote{Notice that if $F_i$ is reachable from $v_0$, then it is necessarily not empty.} from $v_0$.

\begin{problem}[Threshold decision problem]
\label{prob:decisionThresholdQual}
Given an initialized qualitative reachability game $(\mathcal{G},v_0)$, given two thresholds $x,y \in \{0,1\}^{|\Pi|}$, decide whether there exists a solution $\sigma$ such that $x \leq (\Gain_i(\rho))_{i \in \Pi} \leq y$.
\end{problem}

Imposing a lower bound $x_i = 1$ means that player~$i$ has to visit his target set whereas imposing an upper bound $y_i = 0$ means that player~$i$ cannot visit his target set.

Unlike quantitative reachability, social welfare in qualitative reachability games only aims to maximize the number of players who visit their target set.

\begin{problem}[Social welfare decision problem]
\label{prob:decisionWelfareQual}
Given an initialized qualitative reachability game $(\mathcal{G},v_0)$, given a threshold $k \in \{0, \ldots, |\Pi| \}$, decide whether there exists a solution $\sigma$ such that $\SW(\outcome{\sigma}{v_0}) \geq k$.
\end{problem}

Let us now state the last studied problem for qualitative reachability games.

\begin{problem}[Pareto-optimal decision problem]
\label{prob:decisionParetoQual}
Given an initialized qualitative reachability game $(\mathcal{G},v_0)$ decide whether there exists a solution $\sigma$ in $(\mathcal{G},v_0)$ such that $(\Gain_i(\outcome{\sigma}{v_0}))_{i\in\Pi}$ is Pareto optimal in $\Plays(v_0)$.
\end{problem}

This problem has some connections with the two previous ones. For instance in case of qualitative reachability, suppose there exists a play in $\Plays(v_0)$ that visits all target sets. As already explained, there is only one Pareto-optimal gain $(1,\ldots,1)$. Asking for the existence of a solution $\sigma$ such that $(\Gain_i(\outcome{\sigma}{v_0}))_{i\in\Pi}$ is Pareto-optimal is equivalent to asking for the existence of a solution $\sigma$ such that $\Gain_i(\outcome{\sigma}{v_0}))_{i\in\Pi}  \geq (1,\ldots,1)$ (see Problem~\ref{prob:decisionThresholdQual}), or such that $\SW(\outcome{\sigma}{v_0}) \geq |\Pi|$ (see Problem~\ref{prob:decisionWelfareQual}).

%---EXISTENCE PROBLEM---

\subsection{Existence problem}

The following Theorem is a direct consequence of Theorem~\ref{thm:visitAll} and Lemma~\ref{lem:QuantitativeToQualitative}.

\begin{theorem} \label{thm:visitAllQual}
Let $(\mathcal{G},v_0)$ be an initialized qualitative reachability game such that its arena $\arena$ is strongly connected. Then there exists an SPE $\sigma$ (and thus an NE) such that its outcome $\outcome{\sigma}{v_0}$ visits all target sets $F_i$, $i \in \Pi$, that are non empty.
\end{theorem}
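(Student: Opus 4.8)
The plan is to transfer a suitable equilibrium from the quantitative setting using Lemma~\ref{lem:QuantitativeToQualitative}, whose direction (quantitative $\Rightarrow$ qualitative) is exactly the one we need here. First I would consider, alongside the given qualitative game $(\mathcal{G},v_0)$ with arena $\arena$ and target sets $(F_i)_{i\in\Pi}$, the quantitative reachability game $\mathcal{G}_q = (\arena, (\Cost_i)_{i\in\Pi}, (F_i)_{i\in\Pi})$ built on the same arena and the same target sets, but equipped with the cost functions of Definition~\ref{def:quantitativeGame}. By construction, $\mathcal{G}$ is precisely the qualitative game related to $\mathcal{G}_q$ in the sense of Lemma~\ref{lem:QuantitativeToQualitative}, and $\arena$ remains strongly connected.

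Next I would invoke Theorem~\ref{thm:visitAll} applied to $(\mathcal{G}_q,v_0)$: since $\arena$ is strongly connected, there exists an SPE $\sigma$ in $(\mathcal{G}_q,v_0)$ whose outcome $\outcome{\sigma}{v_0}$ visits every non-empty target set $F_i$, $i \in \Pi$. By Lemma~\ref{lem:QuantitativeToQualitative}, this same strategy profile $\sigma$ is then also an SPE in $(\mathcal{G},v_0)$, and hence in particular an NE.

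Finally I would observe that the condition ``the outcome visits all non-empty $F_i$'' is a property of the play $\outcome{\sigma}{v_0}$ together with the target sets alone --- it is encoded by the set $\Visit(\outcome{\sigma}{v_0})$ --- and is therefore insensitive to whether reachability is measured through $\Cost_i$ or through $\Gain_i$. Since $\mathcal{G}_q$ and $\mathcal{G}$ share the same arena and the same target sets, $\sigma$ induces the same outcome in both games, so the outcome of $\sigma$ in $(\mathcal{G},v_0)$ visits all non-empty $F_i$, as required. There is essentially no obstacle in this argument: the entire combinatorial content is already carried by Theorem~\ref{thm:visitAll}, and the only point deserving a word of care is the last one, namely that the ``visits all target sets'' requirement is an outcome-level statement and is thus preserved under the passage from the quantitative to the qualitative payoff.
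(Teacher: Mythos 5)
Your proposal is correct and follows exactly the paper's own argument: the paper states that Theorem~\ref{thm:visitAllQual} is a direct consequence of Theorem~\ref{thm:visitAll} and Lemma~\ref{lem:QuantitativeToQualitative}, which is precisely your transfer from the quantitative game on the same arena and target sets, including the (correct) observation that the outcome play, and hence the set $\Visit(\outcome{\sigma}{v_0})$, is unchanged when passing from costs to gains.
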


Let us comment this result. For this family of games, the answer to Problem~\ref{prob:decisionThresholdQual} is always positive for particular thresholds. Take thresholds $x,y$ such that $x_i = 1$ (and thus $y_i = 1$) if and only if $F_i \neq \emptyset$. The answer to Problem~\ref{prob:decisionWelfareQual} is also always positive for threshold $k = |\{ i \mid F_i \neq \emptyset \}|$. Finally, the answer to Problem~\ref{prob:decisionPareto} is also always positive since there exists an unique Pareto optimal gain profile $p$ such that $p_i = 1$ if and only if $F_i \neq \emptyset$.\\

Recall that we explained before why it was enough to prove Theorem~\ref{thm:visitAll} for SPEs and for quantitative reachability games only. Notice that in case of qualitative reachability games, there exists a simpler construction of the required NE or SPE. Indeed, as the arena is strongly connected, there exists a play $\rho \in \Plays(v_0)$ that visits all non empty target sets. \emph{(i)} Hence to get an NE, construct a strategy profile $\sigma$ in $(\mathcal{G},v_0)$ such that $\outcome{\sigma}{v_0} = \rho$. As the gain profile of $\sigma$ is the best that each player can hope, no player has an incentive to deviate and $\sigma$ is then an NE. \emph{(ii)} The construction is a little more complex to get an SPE. We again construct a strategy profile $\sigma$ in $(\mathcal{G},v_0)$ such that $\outcome{\sigma}{v_0} = \rho$, and inductively extend its construction to all subgames $(\rest{\mathcal{G}}{h},v)$ as follows. Assume that $\rest{\sigma}{h}$ is not yet constructed, then extend the construction of $\sigma$ such that $\rest{\sigma}{h} = g\rho$ for some $gv_0$ starting in $v$ and ending in $v_0$ (such a history $gv_0$ exists because the arena is strongly connected). In this way, the outcome of $\rest{\sigma}{h}$ in each subgame $(\mathcal{G},v)$ has gain profile $(1,\ldots, 1)$ and no player has an incentive to deviate. It follows that $\sigma$ is an SPE.

The next theorem states that Problem~\ref{prob:decisionPareto} has a positive answer for all \emph{qualitative} reachability games with a number of players \emph{limited to two}, and that this existence result cannot be extended to three players.

\begin{theorem} \label{thm:existenceParetoQual}
Let $(\mathcal{G},v_0)$ be an initialized qualitative reachability game,
\begin{itemize}
\item  Let $(\mathcal{G},v_0)$ be an initialized qualitative reachability game such that $|\Pi| = 2$, there exists an SPE $\sigma$ (and thus an NE) with a gain profile that is Pareto optimal in $\Plays(v_0)$.
\item There exists an initialized qualitative reachability games with $|\Pi| = 3$  that has no NE with a gain profile  that is Pareto optimal in $\Plays(v_0)$.
\end{itemize}
\end{theorem}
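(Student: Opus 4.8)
The plan is to handle the two bullets separately, reducing both to the family $\mathcal{F} = \{\Visit(\rho) \mid \rho \in \Plays(v_0)\} \subseteq 2^{\Pi}$ of achievable visit-sets. In the qualitative setting a gain profile is Pareto optimal in $\Plays(v_0)$ exactly when the corresponding set is maximal in $\mathcal{F}$ for inclusion, so everything becomes a statement about which maximal elements of $\mathcal{F}$ can be realised by an equilibrium.

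For the two-player positive result I would distinguish cases according to whether $\{1,2\} \in \mathcal{F}$. If some play $\pi$ visits both $F_1$ and $F_2$, then $\{1,2\}$ is the unique maximal element and $(1,1)$ the unique Pareto optimal profile; I build an SPE with outcome $\pi$ by the grafting construction already used in Lemma~\ref{lem:visitOne}: follow $\pi$ on-path, and after any deviation switch to an arbitrary SPE of the reached subgame, which exists by Theorem~\ref{thm:SPEexistqual}. Off-path subgames are NE by construction, and on every on-path subgame both players already obtain their maximal gain $1$ along the continuation of $\pi$, so no profitable deviation exists; this yields an SPE with Pareto optimal gain $(1,1)$. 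If instead $\{1,2\} \notin \mathcal{F}$ and no target is reachable, then $\mathcal{F} = \{\emptyset\}$ and any SPE has the unique Pareto optimal gain $(0,0)$. Otherwise $\{1,2\} \notin \mathcal{F}$ but some target is reachable, and I invoke Lemma~\ref{lem:visitOne}, transported to the qualitative game via Lemma~\ref{lem:QuantitativeToQualitative}, to obtain an SPE whose visit-set $S$ satisfies $|S| \geq 1$; since $\{1,2\} \notin \mathcal{F}$ forces $S$ to be a singleton, $S$ cannot be enlarged inside $\mathcal{F}$ and is therefore maximal, so this SPE is Pareto optimal.

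For the three-player negative result I would exhibit an explicit game. Take $\Pi = \{1,2,3\}$, decision vertices $u_i \in V_i$, self-looping sinks $t_1,t_2,t_3,w$, edges $u_1 \to t_1$, $u_1 \to u_2$, $u_2 \to t_2$, $u_2 \to u_3$, $u_3 \to t_3$, $u_3 \to w$, initial vertex $u_1$, and targets $F_1 = \{t_1,w\}$, $F_2 = \{t_2,w\}$, $F_3 = \{t_3\}$. The only plays from $u_1$ are $u_1 t_1^{\omega}$, $u_1 u_2 t_2^{\omega}$, $u_1 u_2 u_3 t_3^{\omega}$ and $u_1 u_2 u_3 w^{\omega}$, with gain profiles $(1,0,0)$, $(0,1,0)$, $(0,0,1)$ and $(1,1,0)$; hence the Pareto optimal profiles are exactly $(1,1,0)$ and $(0,0,1)$, each realised by a single play. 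Any strategy profile with outcome $u_1 u_2 u_3 w^{\omega}$ is not an NE: at $u_3 \in V_3$ player~$3$ is unsatisfied yet the deviation $u_3 \to t_3$ produces outcome $u_1 u_2 u_3 t_3^{\omega}$ and raises his gain from $0$ to $1$. Symmetrically, any profile with outcome $u_1 u_2 u_3 t_3^{\omega}$ is not an NE, since at $u_1$ player~$1$ is unsatisfied but the deviation $u_1 \to t_1$ gives him gain $1$. As each Pareto optimal profile has a unique realising play and neither play is an NE outcome, no NE of this game has a Pareto optimal gain profile, which also settles the claim for SPE since every SPE is an NE.

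The routine parts are the two-player case analysis and the verification of the four plays of the small game; the delicate step is the design of the three-player arena. The main obstacle is to force the Pareto frontier to consist solely of profiles whose (necessarily few) realising plays each route through an owned vertex of some \emph{unsatisfied} player lying in that player's forcing region, while simultaneously ruling out the all-ones profile $(1,1,1)$: as soon as a single play visits every target, the grafting argument of the first part manufactures a Pareto optimal SPE and the counterexample collapses. I therefore want each of the two maximal profiles witnessed by exactly one play, each containing a distinct unsatisfied player who owns a one-step forcing edge to his own target, which is precisely what the arena above achieves.
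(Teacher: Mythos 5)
Your proposal is correct and, for the two-player bullet, follows essentially the same route as the paper: the same three-way case analysis, with the grafting construction for a play visiting both targets (this is exactly the paper's Lemma~\ref{lem:pareto}, which the paper proves for the profiles $(0,\ldots,0)$ and $(1,\ldots,1)$), and Lemma~\ref{lem:visitOne} transported through Lemma~\ref{lem:QuantitativeToQualitative} when $(1,1)$ is unachievable; your observation that a singleton visit-set is automatically maximal once $\{1,2\}\notin\mathcal{F}$ is the same maximality argument the paper uses implicitly.

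For the three-player bullet you exhibit a different counterexample from the paper's: the paper (Figure~\ref{fig:3players}) uses a branching game where Player~$1$ owns the root and the two Pareto-optimal profiles are $(1,0,1)$ and $(1,1,0)$, while you use a chain $u_1 \to u_2 \to u_3$ with a shared sink $w \in F_1 \cap F_2$, giving Pareto-optimal profiles $(1,1,0)$ and $(0,0,1)$. Both games are verified the same way and your verification is sound: each Pareto-optimal profile is realized by a unique play, and along each such play some unsatisfied player owns a vertex from which a one-step deviation reaches his own target, so neither play can be an NE outcome. The two witnesses buy the same thing; yours is marginally smaller (seven vertices, four plays, with the deviating players being Player~$3$ and Player~$1$ rather than Player~$2$ and Player~$3$), and your closing remark about why the all-ones profile must be excluded from the frontier correctly identifies the design constraint that both constructions satisfy.
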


Let us focus on the proof of Theorem~\ref{thm:existenceParetoQual} which is based on the next lemma, which is interesting in its own right.

\begin{lemma} \label{lem:pareto}
Let $(\mathcal{G},v_0)$ be an initialized qualitative reachability game. Let $p$ be a gain profile equal to $(0,0,\ldots,0)$ or $(1,1,\ldots,1)$. If $p$ is Pareto-optimal\footnote{$(1,1,\ldots,1)$ is trivially Pareto-optimal.} in $\Plays(v_0)$, then there exists an SPE $\sigma$  with gain profile $p$.
\end{lemma}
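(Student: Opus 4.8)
The plan is to split the argument according to which of the two extreme profiles $p$ is, since the two cases are of very different difficulty. I would first dispose of $p=(0,0,\dots,0)$. Here Pareto-optimality means maximality for the componentwise order $\leq$, and $(0,\dots,0)$ is the least element of $\{0,1\}^{|\Pi|}$; hence it can be maximal in $P$ only if $P=\{(0,\dots,0)\}$, i.e.\ no play of $\Plays(v_0)$ visits any target. Consequently no target $F_i$ is reachable from $v_0$, and therefore none is reachable from any vertex $v$ appearing in a history $hv\in\Hist(v_0)$ (otherwise $F_i$ would be reachable from $v_0$). Thus in every subgame $(\rest{\mathcal{G}}{h},v)$ all players get gain $0$ whatever they play, so no player ever has a profitable deviation and \emph{every} strategy profile is an SPE; any of them has gain profile $p$.

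The substantive case is $p=(1,1,\dots,1)$. Since $p$ is the top element of $\{0,1\}^{|\Pi|}$, its Pareto-optimality forces $p\in P$, so there is a play $\rho\in\Plays(v_0)$ with $\Visit(\rho)=\Pi$, that is, $\rho$ visits every target set. My plan is to construct an SPE $\sigma$ whose outcome is $\rho$ and which, as soon as some player first leaves $\rho$ by moving to a vertex $w$, switches to a fixed SPE $\sigma^{w}$ of the subgame $(\mathcal{G},w)$; such SPEs exist by Theorem~\ref{thm:SPEexistqual}. This is exactly the punishment-by-subgame-SPE mechanism used in the proof of Lemma~\ref{lem:visitOne}, organized through a punishment function on histories that records the first deviator, and I would reuse that bookkeeping verbatim. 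The point of this case is that the single play $\rho$ reaching all targets plays the role that strong connectivity played in Theorem~\ref{thm:visitAll} and Theorem~\ref{thm:visitAllQual}.

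To verify that $\sigma$ is an SPE I would check that $\rest{\sigma}{h}$ is an NE for every $hv\in\Hist(v_0)$, distinguishing two kinds of histories. If $hv$ has already left $\rho$, then $\rest{\sigma}{h}$ coincides with a shift of the SPE $\sigma^{w}$; players who have already reached their target along $h$ have gain $1$ and cannot improve, while for the remaining players the subgame of $\mathcal{G}$ reduces to the corresponding subgame of $(\mathcal{G},w)$, where $\sigma^{w}$ is subgame perfect, so no profitable deviation exists. If instead $hv$ is still a prefix of $\rho$, then the outcome of $\rest{\sigma}{h}$ is the remaining suffix of $\rho$, and $h$ together with this suffix is $\rho$, which visits every $F_i$; hence every player has gain $1$ in this subgame, and since $1$ is the maximal possible gain nobody can profitably deviate. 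In both cases $\rest{\sigma}{h}$ is an NE, so $\sigma$ is an SPE, and $\outcome{\sigma}{v_0}=\rho$ yields gain profile $(1,\dots,1)=p$. I expect the only delicate point to be this off-path verification, which is wholly analogous to Lemma~\ref{lem:visitOne} and can therefore be presented compactly; the reduction $(1,\dots,1)\in P$ and the whole $(0,\dots,0)$ case are immediate.
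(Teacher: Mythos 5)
Your proof is correct, but its second case takes a different route from the paper's. For $p=(1,\dots,1)$ the paper does not use deviation-triggered punishment at all: it takes a \emph{single} SPE $\sigma$ of the whole game $(\mathcal{G},v_0)$ (from the general existence theorem) and defines $\tau$ by overriding $\sigma$ only on prefixes of $\rho$ (playing along $\rho$ there) and letting $\tau$ coincide with $\sigma$ everywhere else. The off-path verification is then immediate: for any history $hv$ that is not a prefix of $\rho$, one has $\rest{\tau}{h}=\rest{\sigma}{h}$, which is an NE because $\sigma$ is an SPE of the original game --- no further argument needed. Your construction instead switches, at the first deviation to a vertex $w$, to a fresh SPE $\sigma^{w}$ of the re-initialized game $(\mathcal{G},w)$, in the style of Lemma~\ref{lem:visitOne}. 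This is sound, but it obliges you to prove the additional reduction step you sketch: that an SPE of $(\mathcal{G},w)$ remains an NE in every subgame $(\rest{\mathcal{G}}{h},w)$ of the original game, which holds here because qualitative reachability gains are prefix-monotone (players who visited along $h$ are locked at gain $1$; the others see exactly the gains of $(\mathcal{G},w)$). So your approach buys conceptual uniformity with the paper's punishment machinery (and would adapt to settings where one only has SPEs of re-initialized games), at the price of this extra verification; the paper's approach is more economical precisely because subgames of a global SPE are already NEs of the right subgames, with no translation between $(\mathcal{G},w)$ and $(\rest{\mathcal{G}}{h},w)$ required. The $(0,\dots,0)$ case is the same in both, with your version spelling out the (needed, and correct) observation that emptiness of $P\setminus\{p\}$ propagates to all subgames.
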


\begin{proof}
The case $p = (0,0,\ldots,0)$ is easy to solve. By Pareto-optimality, all plays in $\Plays(v_0)$ have gain profile $p$. Hence every strategy profile $\sigma$ is trivially an SPE with gain profile $p$.  Let us turn to case $p = (1,1,\ldots,1)$ and let $\rho = \rho_0\rho_1 \ldots \in \Plays(v_0)$ with gain profile $p$. By Theorem 2.1 in \cite{DBLP:journals/corr/abs-1205-6346}\footnote{Notice that we cannot apply Theorem~\ref{thm:visitAll} since the arena is not necessarily strongly connected.}, there exists an SPE $\sigma$ in $(\mathcal{G},v_0)$. If $(\Gain_i(\outcome{\sigma}{v_0}))_{i \in \Pi} = p$, we are done. Otherwise let us show how to modify $\sigma$ into another SPE $\tau$ with outcome $\rho$ and thus with gain profile $p$. Let $h \in \Hist_i(v_0)$, $i \in \Pi$,
\begin{itemize}
\item if $h$ is a prefix of $\rho$, then $\tau_i(h) = \rho_{|h|+1}$,
\item otherwise, $\tau_i(h) = \sigma_i(h)$.
\end{itemize}
Let us prove that $\tau$ is an SPE. Clearly for each history $hv$ that is not a prefix of $\rho$, $\rest{\tau}{h} = \rest{\sigma}{h}$ is an NE in the subgame $(\rest{\mathcal{G}}{h},v)$. So let $hv = \rho_0 \ldots \rho_k$. As $\outcome{\rest{\tau}{h}}{v}$ has gain profile $(1,1,\ldots, 1)$ in $(\rest{\mathcal{G}}{h},v)$, player~$i$ such that $v \in V_i$ has no incentive to deviate, and then $\rest{\tau}{h}$ is also an NE in $(\rest{\mathcal{G}}{h},v)$.
\qed\end{proof}

\begin{proof}[of Theorem~\ref{thm:existenceParetoQual}]
We begin with the first item. There are three cases to study: either the unique Pareto-optimal gain profile of $\Plays(v_0)$ is equal to $(0,0)$, or it is equal to $(1,1)$, or there are one or two Pareto-optimal gain profiles that belong to $\{(0,1),(1,0)\}$. In the first two cases, we get the required SPE by Lemma~\ref{lem:pareto}. Hence it remains to treat the last case. From Lemma~\ref{lem:visitOne}, we know that there exists an SPE in $(\mathcal{G},v_0)$ whose outcome $\rho$ visits a least one target set $F_i$, $i \in \{1,2\}$. Therefore the gain profile of $\rho$ is either equal to $(0,1)$ or $(1,0)$ as required.

\begin{figure}
 \centering
 \scalebox{0.8}{\begin{tikzpicture}[-latex, auto, node distance = 1 cm and 1 cm, on grid, semithick, round/.style = {draw,circle}, sq/.style = {draw,rectangle}, di/.style = {draw,diamond}]
  \node[round] (0) at (0,0) {$v_0$};
  \node[sq] (1) at (-1.5,0) {$v_1$};
  \node[di] (2) at (1.5,0) {$v_2$};
  \node[round] (3) at (-2.5,-0.5) {$v_3$};
  \node[round] (4) at (-2.5,0.5) {$v_4$};
  \node[round] (5) at (2.5,-0.5) {$v_5$};
  \node[round] (6) at (2.5,0.5) {$v_6$};
  \node[text = black] (3payoff) [left = of 3,xshift=-0.5cm] {$(0,1,0)$};
  \node[text = black] (4payoff) [left = of 4,xshift=-0.5cm] {$(1,0,1)$};
  \node[text = black] (5payoff) [right = of 5,xshift=0.5cm] {$(1,1,0)$};
  \node[text = black] (6payoff) [right = of 6,xshift=0.5cm] {$(0,0,1)$};
  \draw[->] (0) -- (1);
  \draw[->] (0) -- (2);
  \draw[->] (1) -- (3);
  \draw[->] (1) -- (4);
  \draw[->] (2) -- (5);
  \draw[->] (2) -- (6);
  \draw[->] (3) to [loop left] (3);
  \draw[->] (4) to [loop left] (4);
  \draw[->] (5) to [loop right] (5);
  \draw[->] (6) to [loop right] (6);
 \end{tikzpicture}}
\caption{A qualitative reachability game that has no NE with a gain profile that is Pareto-optimal}
\label{fig:3players}
\end{figure}
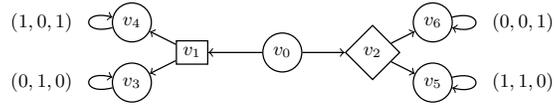

For the second item, consider the initialized qualitative reachability game $(\mathcal{G},v_0)$ of Figure~\ref{fig:3players}. We have three players such that player~$3$ owns diamond vertices. Moreover, $F_1 = \{v_4,v_5\}$, $F_2 = \{v_3,v_5\}$, and $F_3 = \{v_4,v_6 \}$. There are four plays in $\Plays(v_0)$ whose gain profile is indicated below each of them. The set of Pareto-optimal gain profiles in $\Plays(v_0)$ is equal to $\{(1,0,1),(1,1,0)\}$. Consider a strategy profile $\sigma$ with outcome $v_0v_1v_4^\omega$ and gain profile $(1,0,1)$. Then it is not an NE because player~$2$ has a profitable deviation by going from $v_1$ to $v_3$ (instead of $v_4$).  Similarly the strategy profile $\sigma$ with outcome $v_0v_2v_5^\omega$ and gain profile $(1,1,0)$ is not an NE. Therefore there is no NE in $(\mathcal{G},v_0)$ with a gain profile that is Pareto-optimal. 

%\begin{figure}
%\Vero{a faire}
%\caption{A quantitative reachability game that has no NE with a gain profile that is pareto-optimal}
%\label{fig:2players}
%\end{figure}
 \qed
\end{proof}

\subsection{Decision problems}

\subsubsection{$\lambda$-consistent play}

The notion of $\lambda$-consistent play explained in Section~\ref{section:consistentPlay} can easily be adapted for qualitative reachability games. In the qualitative setting, $\lambda(v)$ allows us to know if the player who owns $v$ can ensure to reach his target set from $v$ or not.

\subsubsection{Lassoes with polynomial size}

\begin{lemma}
	\label{lem:LassoesQual}
	Let $(\mathcal{G},v_0)$ be a qualitative reachability game and $\rho \in \Plays$ be a play.
	\begin{itemize}
		\item If $\rho'$ is the play obtained by applying (P1) or (P2) on $\rho$, then for all $i\in \Pi$, $\Gain_i(\rho') = \Gain_i(\rho)$.
		\item Applying (P1) until it is no longer possible and then (P2), leads to a lasso $\rho'$ with size at most $(|\Pi|+1)\cdot |V|$.			
	\end{itemize}
\end{lemma}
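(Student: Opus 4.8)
The plan is to exploit the fact that the procedures (P1) and (P2), as well as the set $\Visit$, are defined purely in terms of which players have \emph{already} visited their target set, with no reference to the number of steps taken. Consequently neither the two procedures nor the length estimate is sensitive to whether the players receive costs or gains, and both items will follow from the corresponding facts already established for quantitative games (Lemma~\ref{lem:Lassoes}) together with the single observation that, in the qualitative setting, $\Gain_i(\rho) = 1$ precisely when $i \in \Visit(\rho)$ and $\Gain_i(\rho) = 0$ otherwise.

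For the first item I would reduce everything to the equality $\Visit(\rho') = \Visit(\rho)$, which suffices by the above characterisation of $\Gain_i$. For (P2) this equality is immediate from its definition, since $\rho'$ copies $\rho$ at least until every player of $\Visit(\rho)$ has visited his target set and the appended cycle introduces only vertices already occurring in $\rho$; hence $\Visit(\rho') = \Visit(h) = \Visit(\rho)$. For (P1), write $\rho = \rho_0 \cdots \rho_k \cdots \rho_{k+\ell} \cdots$ with $\rho_k \cdots \rho_{k+\ell}$ an unnecessary cycle, so that $\rho' = \rho_0 \cdots \rho_k \rho_{k+\ell+1} \cdots$. The inclusion $\Visit(\rho') \subseteq \Visit(\rho)$ holds because every vertex of $\rho'$ occurs in $\rho$. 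For the reverse inclusion, take $i \in \Visit(\rho)$ and a position $m$ with $\rho_m \in F_i$: if $m \le k$ or $m > k+\ell$ the vertex $\rho_m$ survives in $\rho'$; if $k < m \le k+\ell$, then $i \in \Visit(\rho_0 \cdots \rho_{k+\ell}) = \Visit(\rho_0 \cdots \rho_k)$ by the very definition of an unnecessary cycle, so $i$ already visits his set inside the preserved prefix. Thus $\Visit(\rho) = \Visit(\rho')$, and therefore $\Gain_i(\rho') = \Gain_i(\rho)$ for every $i \in \Pi$.

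For the second item I would observe that the statement is \emph{identical} to the length part of the third bullet of Lemma~\ref{lem:Lassoes}, and that the proof there uses only $\Visit$ and the graph structure, never the numerical cost values; it therefore transfers verbatim. Concretely, after applying (P1) until no unnecessary cycle remains, any two positions carrying the same vertex must lie in segments with different values of $\Visit$ (otherwise they would bound an unnecessary cycle). Decomposing the relevant prefix into maximal segments on which $\Visit$ is constant, each such segment contains pairwise distinct vertices and so has at most $|V|$ of them; since $\Visit$ is monotone and ranges over at most $|\Pi|+1$ distinct subsets of $\Pi$, there are at most $|\Pi|+1$ such segments. Finally (P2) closes the last segment into a cycle without adding vertices beyond those of that segment, so the resulting lasso $h\ell^\omega$ satisfies $|h\ell| \le (|\Pi|+1)\cdot|V|$.

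The only genuinely delicate point is the length estimate: one must argue that (P1) can indeed be driven to termination on the finite portion of $\rho$ that precedes the stabilisation of $\Visit$ (the infinite tail being dealt with by (P2), rather than by iterating (P1) forever), and keep careful track of the boundary vertices shared between consecutive $\Visit$-levels and of the vertex at which the final cycle closes, so that the count stays within $(|\Pi|+1)\cdot|V|$. Since exactly this bookkeeping was already carried out in the proof of Lemma~\ref{lem:Lassoes}, I would simply invoke that argument here instead of redoing it.
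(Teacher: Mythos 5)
Your proof is correct and follows exactly the reasoning the paper intends: the paper states Lemma~\ref{lem:LassoesQual} without any explicit proof, treating it as the qualitative counterpart of Lemma~\ref{lem:Lassoes} combined with the facts asserted in Section~\ref{section:lassoes} that (P1) and (P2) preserve $\Visit$ (namely $\Visit(\rho)=\Visit(\rho')$, and $\Visit(h)=\Visit(\rho')$ for (P2)). Your write-up simply makes this explicit, and correctly identifies the two points the paper leaves implicit: that gain-equality must be derived from $\Visit(\rho')=\Visit(\rho)$ rather than from the cost inequalities of Lemma~\ref{lem:Lassoes} (which for (P1) would not suffice), and the segment-counting argument behind the $(|\Pi|+1)\cdot|V|$ bound.
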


And the property on $\lambda$-consistence after applying (P1) or (P2) also holds.

\begin{lemma}
	\label{lem:LassoesConsisQual}
	Let $(\mathcal{G},v_0)$ be a  qualitative reachability game and $\rho \in \Plays$ be a $\lambda$-consistent play for a given labeling function $\lambda$. If $\rho'$ is the play obtained by applying (P1) or (P2) on $\rho$, then $\rho'$ is $\lambda$-consistent.
\end{lemma}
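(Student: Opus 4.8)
The plan is to transport the argument of Lemma~\ref{lem:LassoesConsis} to the qualitative setting, replacing the cost functions $\Cost_i$ by the gain functions $\Gain_i$ and reversing the direction of the defining inequality: a play $\rho$ is $\lambda$-consistent here precisely when, for every $i \in \Pi$ and every $k$ with $i \notin \Visit(\rho_0\ldots\rho_k)$ and $\rho_k \in V_i$, one has $\Gain_i(\rho_{\geq k}) \geq \lambda(\rho_k)$ (the inequality flips because each player now maximizes his gain). Since gains are $\{0,1\}$-valued and are preserved by (P1) and (P2) (Lemma~\ref{lem:LassoesQual}), the proof is in fact simpler than its quantitative counterpart.

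First I would treat (P1). Let $\rho'$ be obtained from $\rho$ by deleting an unnecessary cycle $\rho_k\ldots\rho_{k+\ell}$, and let $\varphi$ be the increasing injection matching each vertex of $\rho'$ with the corresponding vertex of $\rho$, i.e. $\rho'_n=\rho_{\varphi(n)}$ (so $\varphi(n)=n$ for $n\le k$ and $\varphi(n)=n+\ell$ afterwards). The first step is to check that the deletion does not change which players have already reached their targets, namely
\[
\Visit(\rho'_0\ldots\rho'_n)=\Visit(\rho_0\ldots\rho_{\varphi(n)})\quad\text{for all }n,
\]
which follows directly from the defining property $\Visit(\rho_0\ldots\rho_k)=\Visit(\rho_0\ldots\rho_{k+\ell})$ of an unnecessary cycle. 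Consequently, whenever an index $n$ is relevant for $\rho'$ (that is $i\notin\Visit(\rho'_0\ldots\rho'_n)$ and $\rho'_n\in V_i$), the index $\varphi(n)$ is relevant for $\rho$ and carries the same vertex $\rho'_n=\rho_{\varphi(n)}$, hence the same label $\lambda(\rho'_n)=\lambda(\rho_{\varphi(n)})$. It then suffices to establish the suffix inequality $\Gain_i(\rho'_{\geq n})\ge \Gain_i(\rho_{\geq \varphi(n)})$ and to chain it with the $\lambda$-consistency of $\rho$ at position $\varphi(n)$.

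The main (and only delicate) point is exactly this suffix inequality: I must rule out that deleting the cycle destroys the unique visit of player $i$ occurring in the suffix $\rho_{\geq\varphi(n)}$. In the quantitative proof the analogous inequality $\Cost_i(\rho'_{\geq n})\le \Cost_i(\rho_{\geq\varphi(n)})$ comes for free from cost monotonicity, but in the qualitative case I argue it by hand. If player $i$ reaches $F_i$ in $\rho_{\geq\varphi(n)}$, let $r>\varphi(n)$ be the position of his first such visit; since $i\notin\Visit(\rho_0\ldots\rho_{\varphi(n)})$, there is in fact no visit of $i$ before $r$ at all. If $r$ lay inside the deleted cycle, the unnecessary-cycle condition would force $i\in\Visit(\rho_0\ldots\rho_k)$, contradicting the absence of any earlier visit; hence $r$ lies before or after the cycle, and in both cases this visit survives in $\rho'$, giving $\Gain_i(\rho'_{\geq n})=1$. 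This yields $\Gain_i(\rho'_{\geq n})\ge\Gain_i(\rho_{\geq\varphi(n)})$ and, via $\lambda$-consistency of $\rho$, the desired $\Gain_i(\rho'_{\geq n})\ge\lambda(\rho'_n)$.

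For (P2) the argument is shorter. The lasso $\rho'=h\ell^\omega$ copies $\rho$ until every player of $\Visit(\rho)$ has reached his target, so $\Visit(h)=\Visit(\rho)$; in particular any player $i$ owning a loop vertex of $\ell$ with $i\notin\Visit(h)$ satisfies $i\notin\Visit(\rho)$, whence $\Gain_i(\rho)=0$. Such a loop vertex is a copy of some $\rho_m$ occurring after all first visits, so applying the $\lambda$-consistency of $\rho$ at $m$ (where $i\notin\Visit(\rho_0\ldots\rho_m)$ and $\Gain_i(\rho_{\geq m})=0$) forces $\lambda(\rho_m)=\lambda(\rho'_n)=0$, which is matched by $\Gain_i(\rho'_{\geq n})=0$. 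On the prefix $h$ the vertices and the induced $\Visit$-sets coincide with those of $\rho$, and a first visit of a relevant player $i$ necessarily occurs later within $h$, so $\Gain_i(\rho'_{\geq n})=\Gain_i(\rho_{\geq n})$ and $\lambda$-consistency is inherited verbatim. This completes the plan; I expect the cycle-deletion step of (P1) to be the only place requiring genuine care.
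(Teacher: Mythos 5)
Your proof is correct and follows essentially the same route as the paper's proof of the quantitative Lemma~\ref{lem:LassoesConsis}, which the paper implicitly reuses for the qualitative case without writing a separate argument: the matching injection $\varphi$, a suffix inequality, and chaining with the $\lambda$-consistency of $\rho$ at $\varphi(n)$. Your explicit verification that the first visit of a still-relevant player cannot lie inside a deleted unnecessary cycle (so that $\Gain_i(\rho'_{\geq n}) \geq \Gain_i(\rho_{\geq \varphi(n)})$) correctly supplies the one step that is ``for free'' in the quantitative setting but needs a short argument in the qualitative one.
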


\subsubsection{Results on Nash equilibria}

\paragraph{Characterization of outcomes of NEs} The characterization provides in Section~\ref{subsection:charac} for NEs remains true for qualitative reachability games with some modifications: using the notion of qualitative coalitional game in place of quantitative coalitional game, replacing $\Cost$ by $\Gain$ in the characterization, reversing the inequalities, $\ldots$

Notions of coalitional game, value and optimal strategies can be rewritten if $\mathcal{G}$ is a qualitative reachability game and we obtain $|\Pi|$ (qualitative) coalitional games. Qualitative coalitional games are also determined, there exist positional optimal strategies for both players and the values can be computed in polynomial time~\cite{10.1007/3-540-59042-0_57}.

\paragraph{Complexity and memory results}

\begin{theorem}
	\label{thm:NEComplexityQual}
	Let $(\mathcal{G},v_0)$ be a qualitative reachability game, for NE: Problem~\ref{prob:decisionThresholdQual} and Problem~\ref{prob:decisionWelfareQual} are NP-complete while Problem~\ref{prob:decisionParetoQual} is NP-hard and belongs to $\Sigma^P_2$.	
\end{theorem}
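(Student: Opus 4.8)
The plan is to mirror the quantitative arguments of Theorem~\ref{thm:Complexity} restricted to NEs, replacing the cost functions by the gain functions and the quantitative coalitional games by their qualitative counterparts. All three upper bounds follow the same \emph{guess-and-check} scheme, so the first step is to assemble the two ingredients that make this scheme work. For the outcome characterization, I would state the exact analog of Theorem~\ref{thm:critOutcomeNE}: defining $\ConstNE(v) = \Val_i(v)$ for $v \in V_i$, where $\Val_i$ is now the value of the \emph{qualitative} coalitional game $\mathcal{G}_i$, a play $\rho \in \Plays(v_0)$ is the outcome of an NE if and only if it is $\ConstNE$-consistent. The proof is verbatim that of Theorem~\ref{thm:critOutcomeNE} after reversing the inequalities (players maximize their gain) and using that qualitative coalitional games are determined, admit positional optimal strategies, and have values computable in polynomial time~\cite{10.1007/3-540-59042-0_57}. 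The second ingredient is Lemma~\ref{lem:LassoesQual} together with Lemma~\ref{lem:LassoesConsisQual}: applying (P1) and then (P2) turns any NE outcome into a $\ConstNE$-consistent lasso $h\ell^\omega$ with $|h\ell| \leq (|\Pi|+1)\cdot|V|$ and the \emph{same} gain profile, so it always suffices to search among such polynomial-size lassoes.

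Given these ingredients, the NP-membership of Problem~\ref{prob:decisionThresholdQual} and Problem~\ref{prob:decisionWelfareQual} is immediate. The algorithm guesses a lasso $\rho = h\ell^\omega$ of length at most $(|\Pi|+1)\cdot|V|$ with $\Visit(\rho) = \Visit(h)$, computes its gain profile, checks the relevant requirement (that $x \leq (\Gain_i(\rho))_{i\in\Pi} \leq y$ for Problem~\ref{prob:decisionThresholdQual}, resp.\ that $\SW(\rho) \geq k$ for Problem~\ref{prob:decisionWelfareQual}), computes the values $\Val_i$ in each $\mathcal{G}_i$, and verifies $\ConstNE$-consistency of $\rho$ along $h\ell$ only, since no new player visits his target set along $\ell$. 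Every step runs in polynomial time, so both problems are NP-easy.

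For Problem~\ref{prob:decisionParetoQual} I would first establish the qualitative analog of Problem~\ref{prob:4}, namely that deciding whether a gain profile $p$ is Pareto optimal in $\Plays(v_0)$ lies in co-NP: a witness that $p$ is \emph{not} optimal is a dominating play, which by Lemma~\ref{lem:LassoesQual} may be taken to be a lasso of polynomial size, so one guesses such a lasso $\rho'$ and checks $(\Gain_i(\rho'))_{i\in\Pi} \geq p$ with a strict inequality in some coordinate. The $\Sigma^P_2$ algorithm then guesses a polynomial-size lasso $\rho$ (again sufficient, since (P1)/(P2) preserve the gain profile and hence Pareto optimality), verifies that $(\Gain_i(\rho))_{i\in\Pi}$ is Pareto optimal using this co-NP oracle, and verifies that $\rho$ is $\ConstNE$-consistent. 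This is an $\exists\,\forall$ computation, hence in $\Sigma^P_2$.

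It remains to prove the lower bounds. The NP-hardness of Problem~\ref{prob:decisionThresholdQual} is already known~\cite{condurache_et_al:LIPIcs:2016:6256}, and I would obtain NP-hardness of Problem~\ref{prob:decisionWelfareQual} and Problem~\ref{prob:decisionParetoQual} by polynomial reductions from SAT in the spirit of Proposition~\ref{prop:compConstNashHard}: from a CNF formula $\psi$ one builds a game with one player per clause plus a ``valuation'' player, so that a satisfying assignment corresponds exactly to an NE in which every clause player reaches his target. For the welfare variant this yields a formula satisfiable iff some NE has social welfare $|\Pi|$, and for the Pareto variant one adds the gadget of Proposition~\ref{prop:qualNEPareto} forcing the valuation player into his target, so that the unique candidate Pareto-optimal profile is realized by an NE exactly when $\psi$ is satisfiable. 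I expect this last reduction to be the main obstacle: the gadget must make Pareto optimality of the outcome equivalent to satisfiability while simultaneously keeping the profile an NE outcome, and the delicate point is ruling out a profitable deviation of the valuation player toward his own target (the qualitative counterpart of the high-weight edge to $\perp$ used in the quantitative reductions), which here must be controlled purely combinatorially through the target sets rather than through accumulated costs.
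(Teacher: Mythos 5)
Your proposal follows essentially the same route as the paper: the qualitative NE outcome characterization together with the (P1)/(P2) lasso lemmas gives the guess-and-check NP and $\Sigma^P_2$ upper bounds (the latter with a co-NP oracle for Pareto optimality of a profile, exactly the paper's Problem~\ref{prob:4qual}), and the lower bounds come from the known result of~\cite{condurache_et_al:LIPIcs:2016:6256} plus SAT reductions with clause players, a valuation player and a $\perp$-gadget, which is precisely what the paper does in Propositions~\ref{prop:qualNEThreshold}, \ref{prop:qualNEWelfare} and~\ref{prop:qualNEPareto}. The one ``delicate point'' you flag --- ruling out the valuation player's deviation to $\perp$ when the formula is satisfiable --- is in fact automatic in the qualitative setting, since the equilibrium outcome already reaches $T_w \in F_{n+1}$ and a gain of $1$ cannot be strictly improved, which is exactly how the paper's reduction (Figure~\ref{fig:redFormToGamePareto}) handles it.
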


\begin{theorem}
	\label{thm:NEMemoryQual}
		Let $(\mathcal{G},v_0)$ be a qualitative reachability game, for NE: for each decision problem, if its answer is positive, then there exists a strategy profile $\sigma$ with memory in  $\mathcal{O}((|\Pi|+1)\cdot |V|)$ which satisfies the conditions.
\end{theorem}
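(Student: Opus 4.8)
The plan is to treat all three decision problems uniformly, reducing each to the production of an NE whose outcome is a lasso of polynomial length and whose gain profile is unchanged. Suppose the answer to one of Problems~\ref{prob:decisionThresholdQual}, \ref{prob:decisionWelfareQual} or~\ref{prob:decisionParetoQual} is positive, witnessed by an NE $\sigma$ with outcome $\rho = \outcome{\sigma}{v_0}$. First I would invoke the qualitative analogue of the outcome characterization (Theorem~\ref{thm:critOutcomeNE}, adapted as described just above Theorem~\ref{thm:NEComplexityQual}) to conclude that $\rho$ is $\Val$-consistent for the labeling function $\Val(v) = \Val_i(v)$ when $v \in V_i$, where $\Val_i$ is now the value of the \emph{qualitative} coalitional game associated with Player~$i$.

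Next I would apply procedure (P1) repeatedly until no unnecessary cycle remains and then procedure (P2), obtaining a lasso $\rho' = h\ell^\omega$ with $|h\ell| \leq (|\Pi|+1)\cdot |V|$ by Lemma~\ref{lem:LassoesQual}. The key advantage of the qualitative setting is that, by the first item of Lemma~\ref{lem:LassoesQual}, these transformations preserve the gain profile \emph{exactly}: $\Gain_i(\rho') = \Gain_i(\rho)$ for every $i \in \Pi$. Consequently all three conditions are automatically inherited by $\rho'$: the threshold constraints $x \leq (\Gain_i(\rho'))_{i\in\Pi} \leq y$, the bound $\SW(\rho') = |\Visit(\rho')| = |\Visit(\rho)| \geq k$, and Pareto optimality of $(\Gain_i(\rho'))_{i\in\Pi}$ all follow from the corresponding properties of $\rho$. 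Moreover, by Lemma~\ref{lem:LassoesConsisQual}, $\rho'$ remains $\Val$-consistent, so the characterization again yields an NE whose outcome is $\rho'$.

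Finally I would appeal to the ``additionally'' part of the characterization: since $\rho' = h\ell^\omega$ is a lasso, the associated NE can be realized by the same finite-state machine construction as in the quantitative case, of size $\mathcal{O}(|h\ell| + |\Pi|) = \mathcal{O}((|\Pi|+1)\cdot |V|)$. Indeed, a player only needs to remember its position along $h\ell$ together with the identity of any deviator, in order to switch to the punishing optimal positional strategy of the relevant qualitative coalitional game. This gives the claimed memory bound, and the same $\rho'$ simultaneously discharges the ``which satisfies the conditions'' clause for each of the three problems.

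I expect the only real obstacle to be the bookkeeping needed to confirm that the outcome characterization and the finite-machine encoding transfer verbatim from the quantitative to the qualitative coalitional games; once that is in place the argument is actually cleaner than its quantitative counterpart. The subtlety that forced a separate treatment of lower thresholds there (Remark~\ref{rem:twoThresholds}, where (P1) could strictly decrease a cost) simply does not arise here: gains are preserved rather than merely non-increased, so a single uniform reduction covers all three problems without special cases.
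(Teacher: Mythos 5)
Your proposal is correct and follows essentially the same route as the paper: the paper derives Theorem~\ref{thm:NEMemoryQual} from Propositions~\ref{prop:qualNEThreshold}, \ref{prop:qualNEWelfare} and~\ref{prop:qualNEPareto}, each of which rests on exactly the ingredients you use — Lemma~\ref{lem:LassoesQual} (gains preserved exactly, lasso of length at most $(|\Pi|+1)\cdot |V|$), Lemma~\ref{lem:LassoesConsisQual} (preservation of $\lambda$-consistency), and the qualitative adaptation of the outcome characterization of Theorem~\ref{thm:critOutcomeNE} together with its $\mathcal{O}(|h\ell|+|\Pi|)$ finite-machine clause. Your only departure is organizational (one uniform argument instead of three per-problem propositions, the threshold case being cited from prior work in the paper), and your remark that exact gain preservation eliminates the lower-bound subtlety of Remark~\ref{rem:twoThresholds} is a correct reading of why the qualitative case needs no special treatment.
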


These two theorems are due to Propositions~\ref{prop:qualNEThreshold},\ref{prop:qualNEWelfare} and~\ref{prop:qualNEPareto}

\begin{proposition}[\cite{condurache_et_al:LIPIcs:2016:6256}]
	\label{prop:qualNEThreshold}
	Let $(\mathcal{G},v_0)$ be a qualitative reachability game, for NE: 
	\begin{itemize}
	\item Problem~\ref{prob:decisionThresholdQual} is NP-complete.
	\item If the answer to this decision problem is positive, then there exists a strategy profile with memory in $\mathcal{O}((|\Pi|+1)\cdot |V|)$ which satisfies the constraints.
\end{itemize}
\end{proposition}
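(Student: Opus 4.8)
The plan is to transpose the NP-algorithm of Proposition~\ref{prop:compConstNashEasy} to the qualitative setting, exploiting the fact that the outcome characterization of Theorem~\ref{thm:critOutcomeNE} and the lasso-reduction lemmas both survive the passage from costs to gains. First I would invoke the qualitative analogue of Theorem~\ref{thm:critOutcomeNE}: the outcomes of NEs are exactly the $\Val$-consistent plays, where now $\Val_i(v)\in\{0,1\}$ is the value of the qualitative coalitional game $\mathcal{G}_i$ (equal to $1$ iff player~$i$ can force a visit of $F_i$ from $v$ against the coalition $-i$), and $\Val$-consistency reads $\Gain_i(\rho_{\ge k})\ge \Val_i(\rho_k)$ for every $i$ and every $k$ with $\rho_k\in V_i$ and $i\notin\Visit(\rho_0\ldots\rho_k)$. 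The algorithm then guesses a lasso $\rho=h\ell^\omega$ with $|h\ell|\le(|\Pi|+1)\cdot|V|$, checks that it is a genuine lasso, computes its gain profile and verifies $x\le(\Gain_i(\rho))_{i\in\Pi}\le y$, computes each $\Val_i$ (in polynomial time by the determinacy and positional-optimality results cited for qualitative coalitional games), and finally checks $\Val$-consistency of $\rho$ along $h\ell$ only --- this suffices because no new player reaches his target inside the repeated block $\ell$.

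The correctness of restricting to short lassoes is where the lasso machinery enters. If some NE $\sigma$ satisfies $x\le(\Gain_i(\outcome{\sigma}{v_0}))_{i\in\Pi}\le y$, I would apply (P1) repeatedly and then (P2) to its outcome; by Lemma~\ref{lem:LassoesQual} this yields a lasso $h\ell^\omega$ of length at most $(|\Pi|+1)\cdot|V|$, and crucially $\Gain_i$ is \emph{preserved exactly} for every player, so the full double inequality $x\le\cdot\le y$ still holds. Unlike the quantitative two-threshold situation of Remark~\ref{rem:twoThresholds}, here there is no loss because gains are binary and cannot be strictly lowered by deleting unnecessary cycles. By Lemma~\ref{lem:LassoesConsisQual} the reduced lasso remains $\Val$-consistent, hence by the characterization it is again an NE outcome. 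This shows the guessed object exists whenever the instance is positive, establishing NP-membership.

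For the memory bound I would reuse the finite-state-machine construction of Appendix~\ref{app:memoryCritNE}: a machine tracking the current edge along $h\ell$ together with the identity of a possible deviator has size $\mathcal{O}(|h\ell|+|\Pi|)=\mathcal{O}((|\Pi|+1)\cdot|V|)$, and it realizes the punishment-based NE whose existence is guaranteed by the characterization. NP-hardness is inherited from \cite{condurache_et_al:LIPIcs:2016:6256}; alternatively one adapts the SAT-reduction of Proposition~\ref{prop:compConstNashHard}, replacing the cost thresholds by the gain requirement that every clause-player and the additional player visit their target set, which is achievable by an NE exactly when the formula is satisfiable.

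The delicate point I expect is the soundness of the third step: one must argue that checking $\Val$-consistency along the finite word $h\ell$, rather than along the infinite play $h\ell^\omega$, loses nothing, i.e.\ that the periodic tail introduces no new constraint. This follows because after applying (P2) one has $\Visit(h)=\Visit(h\ell^\omega)$, so every player who could still profitably deviate must do so at a vertex already occurring in $h\ell$; the verification is therefore purely finite. Beyond this bookkeeping, the proposition is essentially a repackaging of the already-granted characterization and lasso lemmas, so the remaining work is routine.
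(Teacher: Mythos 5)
Your proposal is correct, but note that the paper does not actually prove Proposition~\ref{prop:qualNEThreshold}: it imports both the NP-completeness and the polynomial memory bound from \cite{condurache_et_al:LIPIcs:2016:6256}, which is why the statement carries that citation and no proof of it appears in the appendix. What you have done instead is reconstruct a self-contained proof by transposing the paper's own quantitative machinery --- the outcome characterization of Theorem~\ref{thm:critOutcomeNE} with inequalities reversed and values of qualitative coalitional games (computable in polynomial time, with positional optimal strategies, by \cite{10.1007/3-540-59042-0_57}), the lasso procedures (P1)/(P2) via Lemmas~\ref{lem:LassoesQual} and~\ref{lem:LassoesConsisQual}, and the finite-state machine of Appendix~\ref{app:memoryCritNE} --- which is exactly the scheme the paper itself uses for the quantitative analogue in Propositions~\ref{prop:compConstNashEasy} and~\ref{prop:compConstNashHard}. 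Your argument is sound precisely on the points the paper leaves implicit: the observation that $\Gain_i$ is preserved \emph{exactly} by (P1)/(P2) (since $\Visit(\rho)=\Visit(\rho')$) is what makes the two-sided constraint $x\le\cdot\le y$ harmless here, in contrast with the quantitative two-threshold situation of Remark~\ref{rem:twoThresholds} where (P1) can destroy lower bounds and forces a weaker, threshold-dependent lasso length; and your justification that checking $\Val$-consistency along $h\ell$ alone suffices (no new visits occur in $\ell$, so the constraints arising in later repetitions are duplicates of ones already checked) matches Step~4 of the algorithm in Proposition~\ref{prop:compConstNashEasy}. The trade-off between the two routes: the paper's citation is shorter and defers to the original source, whereas your reconstruction is uniform with the paper's quantitative treatment, yields the $\mathcal{O}((|\Pi|+1)\cdot|V|)$ memory bound as a direct by-product of the characterization's punishment-machine construction, and makes explicit why the qualitative case is in fact simpler than the quantitative one rather than merely analogous to it.
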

	
\begin{proposition}
	\label{prop:qualNEWelfare}
	Let $(\mathcal{G},v_0)$ be a qualitative reachability game. 
	\begin{itemize}
		\item For NE, Problem~\ref{prob:decisionWelfareQual} is NP-complete.
		\item If the answer to this decision problem is positive, then there exists a strategy profile with memory in  $\mathcal{O}((|\Pi|+1)\cdot |V|)$ which satisfies the constraints. 
		%\item The strategy profile which satisfies constraints of~\eqref{item:compConstExistProb3} need at most $ \max_{i\in\Pi} \{ \overline{y_i} \mid \overline{y_i} < +\infty \} + |V| + |\Pi|$ memory.
	\end{itemize}
\end{proposition}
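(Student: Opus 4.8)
The plan is to follow exactly the strategy used in the quantitative case (Proposition~\ref{prop:quantNEWelfare}), reducing the social welfare problem to the threshold problem for NEs, whose complexity and memory are already settled by Proposition~\ref{prop:qualNEThreshold}. The bridging fact is the qualitative analogue of Lemma~\ref{lem:NEwelfare}, which here is immediate: if $p \in \{0,1\}^{|\Pi|}$ is a gain profile with $|\{i \in \Pi \mid p_i = 1\}| \geq k$ and $\sigma$ is a strategy profile satisfying $(\Gain_i(\outcome{\sigma}{v_0}))_{i \in \Pi} \geq p$, then $\SW(\outcome{\sigma}{v_0}) = |\Visit(\outcome{\sigma}{v_0})| \geq |\{i \mid p_i = 1\}| \geq k$. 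Thus it suffices to look for an NE meeting a suitable componentwise lower bound.

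For NP-membership I would give the following guess-and-check procedure. \textbf{Step 1:} guess a gain profile $p \in \{0,1\}^{|\Pi|}$. \textbf{Step 2:} check that $|\{i \in \Pi \mid p_i = 1\}| \geq k$. \textbf{Step 3:} invoking the NP procedure of Proposition~\ref{prop:qualNEThreshold}, verify that there exists an NE $\sigma$ in $(\mathcal{G},v_0)$ with $p \leq (\Gain_i(\outcome{\sigma}{v_0}))_{i\in\Pi}$. By the observation above, any such $\sigma$ witnesses $\SW(\outcome{\sigma}{v_0}) \geq k$; conversely, if some NE $\tau$ has $\SW(\outcome{\tau}{v_0}) \geq k$, then taking $p$ to be the indicator of a size-$k$ subset of $\Visit(\outcome{\tau}{v_0})$ makes all three steps succeed. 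Since the guessed $p$ has polynomial size and Step~3 is itself an NP check (it rests on the qualitative NE outcome characterization and the sufficiency of polynomial-length lassoes, both of which underlie Proposition~\ref{prop:qualNEThreshold}), the whole procedure runs in NP. The memory bound of the second bullet is then inherited directly: whenever the answer is positive, Step~3 already produces an NE of memory $\mathcal{O}((|\Pi|+1)\cdot|V|)$ with $(\Gain_i(\outcome{\sigma}{v_0}))_{i\in\Pi} \geq p$, and this same $\sigma$ satisfies the social welfare constraint.

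For NP-hardness I would give a polynomial reduction from SAT in the same spirit as the reduction for the quantitative threshold problem (Proposition~\ref{prop:compConstNashHard}, Figure~\ref{fig:redFormToGame}), adapted to the qualitative setting. For a CNF formula $\psi$ with $n$ clauses over $m$ variables, one builds a game with $|\Pi| = n+1$ players---one per clause together with one ``assignment'' player---such that $\psi$ is satisfiable if and only if there is an NE whose outcome visits \emph{all} $n+1$ target sets, i.e.\ with $\SW(\outcome{\sigma}{v_0}) \geq |\Pi| = n+1$. Setting the threshold $k = |\Pi|$ then transfers NP-hardness to Problem~\ref{prob:decisionWelfareQual}. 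Combined with the NP-membership argument, this yields NP-completeness.

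The argument is essentially mechanical once the earlier results are in place, so I expect no genuine conceptual obstacle; the only place demanding care is the hardness reduction, where one must check that the gadget forces every player to reach his target \emph{exactly} when $\psi$ is satisfied (in particular that in every NE no player can strictly improve by diverting to a ``losing'' sink such as $T_\ell$), so that the equivalence between satisfiability and ``all $n+1$ players visit'' holds in both directions.
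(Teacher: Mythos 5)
Your proposal matches the paper's proof essentially step for step: the same three-step NP algorithm (guess a gain profile $p$, check $|\{i \mid p_i = 1\}| \geq k$, then invoke the NP procedure for the qualitative threshold problem of Proposition~\ref{prop:qualNEThreshold}), the same observation that any NE with gain profile at least $p$ has social welfare at least $k$, the same inheritance of the $\mathcal{O}((|\Pi|+1)\cdot|V|)$ memory bound from the threshold result, and the same SAT reduction in the spirit of Proposition~\ref{prop:compConstNashHard} with threshold $k = |\Pi|$. The extra care you flag for the hardness gadget (and the explicit converse direction of the NP check) only makes explicit what the paper leaves implicit; there is no substantive difference.
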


\begin{proof}
	
	Let $(\mathcal{G},v_0)$ be a qualitative reachability game and let $k \in \{0, \ldots, |\Pi| \}$ be a threshold.
	
	 The NP-algorithm works as follows:\\	
		\textbf{Step 1:} Guess $p \in \{0,1\}^{|\Pi|}$ a gain profile (sufficient thanks to Lemma~\ref{lem:LassoesQual});\\
		\textbf{Step 2:} Let $\mathcal{R} = \{ i \in \Pi \mid p_i = 1 \}$ be the set of players who have a gain equal to 1 in the gain profile $p$, verify if $|\mathcal{R}| \geq k$;\\
		\textbf{Step 3:} Verify that there exists an NE $\sigma$ in $(\mathcal{G},v_0)$ such that $(\Gain_i(\outcome{\sigma}{v_0}))_{i\in\Pi} \geq p.$\\
	
	As if $(\Gain_i(\outcome{\sigma}{v_0}))_{i\in\Pi} \geq p$ then $|\Visit(\outcome{\sigma}{v_0})| \geq |\mathcal{R}| \geq k$, and thanks to Proposition~\ref{prop:qualNEThreshold}, this algorithm is an NP-algorithm to solve Problem~\ref{prob:decisionWelfareQual} for Nash equilibrium. Moreover, if there exists an NE which fulfills the constraints there exists one with a polynomial memory which fulfills the constraints too.\\
	
The NP-hardness is due to a polynomial reduction from the SAT problem in the same philosophy than the one for Problem~\ref{prob:decisionThreshold} (see Proposition~\ref{prop:compConstNashHard}) for NE in quantitative reachability games: the SAT-formula is satisfiable if and only if there exists an NE with a social welfare $\geq |\Pi|$.  \qed
	
\end{proof}

\begin{proposition}
	\label{prop:qualNEPareto}
		Let $(\mathcal{G},v_0)$ be a qualitative reachability game, for NE:
		
		\begin{itemize}
			\item Problem~\ref{prob:decisionParetoQual} belongs to $\Sigma^P_2$ and is NP-hard.
			\item If the answer of this decision problem is positive, then there exists a strategy profile with memory in  $\mathcal{O}((|\Pi|+1)\cdot |V|)$ which satisfies the constraints. 
		\end{itemize}
\end{proposition}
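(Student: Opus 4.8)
The plan is to mirror the quantitative treatment of Problem~\ref{prob:decisionPareto} for NEs, replacing cost profiles by gain profiles throughout. For the $\Sigma^P_2$ upper bound I would design a guess-and-check procedure: first guess a lasso $\rho$ of length at most $(|\Pi|+1)\cdot|V|$. This guess is without loss of generality because, by Lemma~\ref{lem:LassoesQual} and Lemma~\ref{lem:LassoesConsisQual}, applying (P1) repeatedly and then (P2) to the outcome of any NE with a Pareto-optimal gain profile yields a lasso of this length with the \emph{same} gain profile (hence still Pareto-optimal) that is still $\Val$-consistent, i.e.\ still the outcome of an NE by the qualitative analogue of Theorem~\ref{thm:critOutcomeNE}. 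The procedure then (i) verifies in polynomial time that $\rho$ is $\Val$-consistent, which is possible since the coalitional values are computable in polynomial time, and (ii) calls an oracle for the qualitative version of Problem~\ref{prob:4} to check that $(\Gain_i(\rho))_{i\in\Pi}$ is Pareto optimal in $\Plays(v_0)$. As in the quantitative setting this latter problem lies in co-NP, its negation guessing a dominating gain profile witnessed by a short lasso via Lemma~\ref{lem:LassoesQual}, so the whole algorithm is an NP computation with a co-NP oracle and therefore runs in $\Sigma^P_2$. The memory bound is an immediate by-product: the witnessing lasso has length at most $(|\Pi|+1)\cdot|V|$, and feeding it into the finite-state machine construction underlying the qualitative analogue of Theorem~\ref{thm:critOutcomeNE} produces an NE of memory $\mathcal{O}((|\Pi|+1)\cdot|V|)$ with the desired (Pareto-optimal) gain profile.

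For NP-hardness I would reduce from SAT, reusing the arena of Figure~\ref{fig:redFormToGame} (one clause-player $P_1,\dots,P_n$, plus a chooser Player $n+1$ owning the square vertices, with $F_{n+1}=\{T_w\}$ and $F_i=\{0_x\mid\neg x\in C_i\}\cup\{1_x\mid x\in C_i\}\cup\{T_\ell\}$), augmented with a fresh sink vertex $\perp$, an edge from the initial vertex $x_1$ to $\perp$, and $\perp\in F_{n+1}$. The pivotal observation is that, since Player $n+1$ owns $x_1$ and can move to $\perp\in F_{n+1}$ in one step, $\Val_{n+1}(x_1)=1$; hence by the outcome characterization every NE gives Player $n+1$ gain $1$. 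For the forward direction, a satisfying assignment induces the play reaching $T_w$ with gain profile $(1,\dots,1,1)$, which is the unique Pareto-optimal profile and is readily checked to be $\Val$-consistent, hence an NE. For the converse, suppose some NE has a Pareto-optimal profile $p$; since $p_{n+1}=1$, its outcome either goes to $\perp$, giving $(0,\dots,0,1)$, which is dominated by the achievable profile of any play satisfying a single clause and so is not Pareto optimal, or reaches $T_w$ along some assignment $\alpha$. In the latter case any clause-player left unsatisfied by $\alpha$ would profitably deviate to $T_\ell\in F_i$, so being an NE forces $\alpha$ to satisfy every clause, i.e.\ $\psi$ is satisfiable. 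The reduction is clearly polynomial.

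The delicate point, and the reason the bare Figure~\ref{fig:redFormToGame} does not suffice here in contrast with the threshold problem of Proposition~\ref{prop:compConstNashHard}, is the play reaching $T_\ell$: since $T_\ell$ lies in every clause-player's target set, it realises the profile $(1,\dots,1,0)$, which is Pareto optimal and, absent $\perp$, is the outcome of an NE even when $\psi$ is unsatisfiable. Indeed without $\perp$ the coalition can force $T_\ell$ (via $P_1\to T_\ell$), so $\Val_{n+1}(x_1)=0$ and Player $n+1$ has no punishable profitable deviation, while no clause-player wants to leave $T_\ell$. Introducing $\perp$ raises $\Val_{n+1}(x_1)$ to $1$ and thereby excludes this profile from every NE, which is precisely what makes Pareto-optimal NE existence coincide with satisfiability. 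I expect verifying this interaction, and in particular that $(0,\dots,0,1)$ and the $T_\ell$-profile are handled correctly on both sides of the equivalence, to be the main obstacle. For completeness I note that the quantitative analogue uses the same gadget with the extra precaution of a large weight on the edge $(x_1,\perp)$, so that the $\perp$ escape is strictly more costly than following a satisfying assignment; this is unnecessary in the qualitative setting, where both routes yield gain $1$.
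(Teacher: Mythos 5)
Your proposal is correct and follows essentially the same route as the paper: the same $\Sigma^P_2$ guess-and-check algorithm (short lasso via Lemma~\ref{lem:LassoesQual}/\ref{lem:LassoesConsisQual}, co-NP oracle for Problem~\ref{prob:4qual}, NE-outcome characterization), the same memory bound from the lasso length, and the same SAT reduction on the arena of Figure~\ref{fig:redFormToGame} augmented with $\perp \in F_{n+1}$. The only cosmetic difference is that you argue the converse direction directly via $\Val_{n+1}(x_1)=1$ rather than by contrapositive as the paper does, and your closing discussion of why $\perp$ is needed (to kill the $T_\ell$ profile $(1,\dots,1,0)$) and why the quantitative variant needs a large weight matches the paper's remarks exactly.
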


The proof of Proposition~\ref{prop:qualNEPareto} relies on the complexity of the variant of Problem~\ref{prob:4} for qualitative reachability games. 

\begin{problem}
	\label{prob:4qual}
	Given a qualitative reachability game $(\mathcal{G},v_0)$ and a lasso $\rho \in \Plays$, we want to decide if $(\Gain_i(\rho))_{i\in\Pi}$ is  Pareto optimal  in $\Plays(v_0)$.
\end{problem} 

This decision problem belongs to co-NP.

\begin{lemma}
	\label{lemma:prob4qual}
	Problem~\ref{prob:4qual} belongs to co-NP for qualitative reachability games.
\end{lemma}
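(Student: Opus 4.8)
The plan is to show that the \emph{complement} of Problem~\ref{prob:4qual} lies in NP, which is equivalent to the claim. Recall that in the qualitative setting a gain profile is Pareto optimal in $\Plays(v_0)$ when it is maximal for the componentwise ordering $\leq$ (players maximize their gains). Hence the gain profile $(\Gain_i(\rho))_{i\in\Pi}$ fails to be Pareto optimal precisely when there exists some play $\rho' \in \Plays(v_0)$ that \emph{dominates} it, that is, $(\Gain_i(\rho))_{i\in\Pi} \leq (\Gain_i(\rho'))_{i\in\Pi}$ while $(\Gain_i(\rho))_{i\in\Pi} \neq (\Gain_i(\rho'))_{i\in\Pi}$. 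The certificate for non-optimality will be such a dominating play $\rho'$.

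The key point is that this witness can be taken to be a lasso of polynomial length. First I would invoke Lemma~\ref{lem:LassoesQual}: applying (P1) as long as possible and then (P2) to any dominating play $\rho'$ yields a lasso $\rho''$ with $|\rho''| \leq (|\Pi|+1)\cdot |V|$ and, crucially, $\Gain_i(\rho'') = \Gain_i(\rho')$ for every $i \in \Pi$. Since the gain profile is preserved, $\rho''$ is still a dominating play. Therefore, if $(\Gain_i(\rho))_{i\in\Pi}$ is not Pareto optimal, there already exists a dominating lasso of length at most $(|\Pi|+1)\cdot |V|$.

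The NP algorithm for the complement then proceeds as follows: guess a lasso $\rho' = h\ell^\omega$ with $|h\ell| \leq (|\Pi|+1)\cdot |V|$ and check that it is a genuine play of $(\mathcal{G},v_0)$; compute its gain profile $(\Gain_i(\rho'))_{i\in\Pi}$ (a player's gain is $1$ if and only if his target set is visited among the polynomially many vertices of $h\ell$); and finally verify both $(\Gain_i(\rho))_{i\in\Pi} \leq (\Gain_i(\rho'))_{i\in\Pi}$ and $(\Gain_i(\rho))_{i\in\Pi} \neq (\Gain_i(\rho'))_{i\in\Pi}$. Each of these steps runs in polynomial time in the size of the game, so the complement of Problem~\ref{prob:4qual} is in NP, and thus Problem~\ref{prob:4qual} itself is in co-NP.

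I expect no serious obstacle here: the argument is the exact qualitative mirror of the co-NP proof already given for Problem~\ref{prob:4}, the only substantive ingredient being the length bound and the gain-preservation of Lemma~\ref{lem:LassoesQual}. The one small point to keep in mind is that gains lie in $\{0,1\}$ and players maximize, so the domination inequality points in the opposite direction from the quantitative case; otherwise the reasoning is identical.
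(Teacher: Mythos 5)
Your proof is correct and follows essentially the same route as the paper, which simply transfers the co-NP argument for Problem~\ref{prob:4} to the qualitative setting by replacing $\Cost$ with $\Gain$ and reversing the inequalities: guess a dominating lasso of length at most $(|\Pi|+1)\cdot|V|$ (justified by Lemma~\ref{lem:LassoesQual}) and verify domination in polynomial time. Your explicit observation that gain preservation under (P1)/(P2) keeps the witness dominating is exactly the ingredient the paper leaves implicit.
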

\begin{proof}
	
	The same proof as for quantitative reachability games holds by using $\Gain$ in place of $\Cost$ and reversing the inequalities. \qed
\end{proof}
\begin{proof}[of Proposition~\ref{prop:qualNEPareto}]
	We can provide the following $\Sigma^P_2$-algorithm for Problem~\ref{prob:decisionPareto}, given an oracle for Problem~\ref{prob:4qual}:\\
	 \textbf{Step 1:} Guess a play $\rho$ as a lasso of size $(|\Pi|+1)\cdot|V|$ (sufficient thanks to Lemma~\ref{lem:LassoesQual});\\
	 \textbf{Step 2:} Check that $(\Gain_i(\rho))_{i\in \Pi}$ is Pareto optimal in $\Plays(v_0)$ using the oracle for Problem~\ref{prob:4qual}.\\
	 \textbf{Step 3:} Check that $\rho$ is the outcome of a Nash equilibrium using the characterization.\\
	
	The NP-hardness is due to a polynomial reduction from the SAT problem that is NP-complete. Let $X = \{x_1, \ldots, x_m \}$ be the set of variables and $\psi = C_1\wedge\ldots \wedge C_n$ is a Boolean formula in CNF over $X$ and equals to the conjunction of the clauses $C_1, \ldots, C_n$. This problem is to decide if the formula $\psi$ is true. Such a formula is true if there exists a valuation $\mathcal{I} : X \rightarrow \{0,1\}$ such that the valuation of $X$ with respect to $\mathcal{I}$ evaluates $\psi$ to true.

	We build the following qualitative reachability game $\mathcal{G}_{\psi}= (\mathcal{A}, (\Gain_i)_{i \in \Pi}, (F_i)_{i\in \Pi})$ where $\mathcal{A} = (\Pi, V, E, (V_i)_{i \in \Pi})$:
	\begin{itemize}
		\item the arena $\arena$ is depicted in Figure~\ref{fig:redFormToGamePareto} where the set of players $\Pi = \{1, \ldots, n, n+1 \}$ has $n+1$ players: one by clause (players $1$ to $n$) and an additional one. $V_{n+1}$ is depicted by squared vertices and for all $1 \leq i \leq n$, $V_i = \{P_i \}$;
		\item $F_{n+1} = \{ T_w, \perp \}$;
		\item for all $1 \leq i \leq n$, $F_i = \{0_x \mid \neg x \in C_i \} \cup \{ 1_x \mid x \in C_i \}\cup \{ T_\ell \}.$
	\end{itemize}

	The game $\mathcal{G}_{\psi}$ can be build from $\psi$ in polynomial time. Let us show that $\psi$ is true if and only if there exists an NE $ \sigma $ in $(\mathcal{G}_\psi,x_1)$ such that $\Gain(\outcome{\sigma}{x_1})$ is Pareto optimal in $\Plays(x_1)$.\\

	$(\Rightarrow)$ Suppose that $\psi$ is true. Then there exists $\mathcal{I}: X \rightarrow \{0,1\}$ such that the valuation of $\psi$ with respect to $\mathcal{I}$ evaluates $\psi$ to true.
	
	Let us consider $\sigma$ defined in this way:
	
	\begin{itemize}
		\item for all $hv \in \Hist_{n+1}(x_1)$, $\sigma_{n+1}(hv) = \begin{cases} \mathcal{I}(v) & \text{ if } v = x \text{ with } x \in X \\
		 																		x_{k+1} & \text{ if } v = 0_{x_k} \text{ or } 1_{x_k} \text{with } 0 \leq k \leq m-1 \\
		P_1 & \text{if } v = 0_{x_m} \text{ or } 1_{x_m} \\
		T_\ell & \text{if } v = T_\ell \\
		T_w & \text{if } v = T_w\\
		\perp & \text{if } v = \perp\end{cases}$.
		\item for all $1 \leq i \leq n-1 $ (resp. for $i = n$), for all $hv \in \Hist_i(x_1)$ $\sigma_i(hv) = P_{i+1} $ (resp. $T_w$) if $hv$ is consistent with $\sigma_{n+1}$ and $\sigma_i(hv)= T_\ell$ otherwise. 	
		
	\end{itemize}
	
	We now prove that $\sigma$ is an NE. It is clear that $\outcome{\sigma}{x_1}$ is of the forme $hP_1P_2\ldots P_n T_w^\omega$ and as $\sigma_{n+1}$ corresponds to the valuation $\mathcal{I}$ which evaluates $\psi$ to true, we have that  for all $ i\in \Pi$, $\Gain_{i}(\outcome{\sigma}{x_1})= 1$. Obviously no player has a profitable deviation and $(\Gain_i(\outcome{\sigma}{x_1}))_{i\in\Pi}$.\\
	
	$(\Leftarrow)$ We will prove its contrapositive. Assume that $\psi$ is not true. As $\psi$ is not true, there exists a player $i \in \Pi$ such that if he wants to win (obtain a gain of 1) he has to go to $T_{\ell}$. However, in this way Player $n+1$ does not reach his target set and his only possibility to do so is to go to $\perp$.  It means that the only possible outcome for an NE is $x_1\perp^\omega$ with gain profile $(0, \ldots,0, 1)$ which is not Pareto optimal in $\Plays(x_1)$. Indeed, as at least one clause can be evaluated to true, assume it is clause $C_i$ (with $1\leq i \leq n$), there exists a play in $\Plays(x_1)$ such that Player $i$ visits his target set and which ends in $T_w$.

	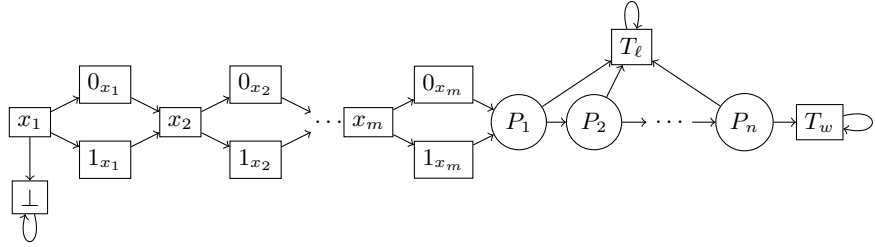
\begin{figure}

		\centering
		\begin{tikzpicture}
			\node[draw] (x1) at (0,0){$x_1$};
			\node[draw] (x10) at (1, 0.5){$0_{x_1}$};
			\node[draw] (x11) at (1, -0.5){$1_{x_1}$};
			
			\node[draw] (p) at (0,-1){$\perp$};
			
			\draw[->] (x1) to (x10);
			\draw[->] (x1) to (x11);
			\draw[->] (x1) to (p);
			\draw[->] (p) to [loop below] (p);
						
			\node[draw] (x2) at (2,0){$x_2$};
			\node[draw] (x20) at (3,0.5){$0_{x_2}$};
			\node[draw] (x21) at (3,-0.5){$1_{x_2}$};
			
			\draw[->] (x10) to (x2);
			\draw[->] (x11) to (x2);
			\draw[->] (x2) to (x20);
			\draw[->] (x2) to (x21);
			
			\node (int1) at (4,0){\ldots};
			\draw[->] (x20) to (int1);
			\draw[->] (x21) to (int1);
			
			\node[draw] (xm) at (4.5,0){$x_m$};
			\node[draw] (xm0) at (5.5,0.5){$0_{x_m}$};
			\node[draw] (xm1) at (5.5,-0.5){$1_{x_m}$};
			
			\draw[->] (xm) to (xm0);
			\draw[->] (xm) to (xm1);
			
			\node[draw,circle] (P1) at (6.5,0){$P_1$};
			
			\draw[->] (xm1) to (P1);
			\draw[->] (xm0) to (P1);
			
			\node[draw,circle] (P2) at (7.5,0){$P_2$};
			
			\draw[->] (P1) to (P2);
			
			\node (int2) at (8.5,0){$\ldots$};
			\draw[->] (P2) to (int2);
			
			\node[draw,circle] (Pn) at (9.5,0){$P_n$};
			\draw[->] (int2) to (Pn);
			
			\node[draw] (Tw) at (10.5,0){$T_{w}$};
			\draw[->] (Pn) to (Tw);
			
			\draw[->] (Tw) to [loop right] (Tw);
			
			\node[draw] (Tl) at (8,1){$T_{\ell}$};

			\draw[->] (Tl) to [loop above] (Tl);
			
			\draw[->] (P1) to (Tl);
			\draw[->] (P2) to (Tl);
			\draw[->] (Pn) to (Tl);			
		\end{tikzpicture}
			\caption{Reduction from the formula $\psi$ to the quantitative reachability game $\mathcal{G}_{\psi}$}
			\label{fig:redFormToGamePareto}
	\end{figure}

	\qed
\end{proof}

\subsubsection{Results on subgame perfect equilibria}

\paragraph{Complexity and memory results}

\begin{theorem}
	\label{thm:SPEComplexityQual}
	Let $(\mathcal{G},v_0)$ be a qualitative reachability game, for SPE:
	Problem~\ref{prob:decisionThresholdQual}, Problem~\ref{prob:decisionWelfareQual} and Problem~\ref{prob:decisionParetoQual} are PSPACE-complete.
\end{theorem}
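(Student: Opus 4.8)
The plan is to prove PSPACE-membership and PSPACE-hardness separately, reducing the welfare and Pareto problems to the threshold problem so that a single core routine suffices for membership. The threshold problem for SPEs, Problem~\ref{prob:decisionThresholdQual}, is already known to be PSPACE-complete in the qualitative setting by~\cite{DBLP:journals/corr/abs-1809-03888}, so I would take both its PSPACE algorithm and its hardness as the starting point. Recall that this membership rests on the qualitative adaptation of the SPE outcome characterization (Theorem~\ref{thm:critOutcomeSPE}, with $\Gain$ in place of $\Cost$ and the inequalities reversed): one works in the extended game $(\mathcal{X},x_0)$, guesses a $\lambda^*$-consistent lasso of polynomial length (sufficient by Lemma~\ref{lem:LassoesQual} and Lemma~\ref{lem:LassoesConsisQual}) meeting the threshold, and verifies $\lambda^*$-consistency; although $\mathcal{X}$ is of exponential size, this verification is carried out in polynomial space.

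For PSPACE-membership of the remaining two problems, I would reuse this threshold routine exactly as in the quantitative proofs of Theorem~\ref{thm:quantSPEWelfare} and of Problem~\ref{prob:decisionPareto} for SPEs. For Problem~\ref{prob:decisionWelfareQual}, the algorithm guesses a gain profile $p\in\{0,1\}^{|\Pi|}$, checks that $|\{i\mid p_i=1\}|\geq k$, and then calls the threshold routine to decide whether there is an SPE with gain profile $\geq p$; correctness follows from the qualitative analogue of Lemma~\ref{lem:NEwelfare}, namely that $(\Gain_i(\outcome{\sigma}{v_0}))_{i\in\Pi}\geq p$ implies $\SW(\outcome{\sigma}{v_0})\geq|\{i\mid p_i=1\}|$. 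For Problem~\ref{prob:decisionParetoQual}, the algorithm guesses $p$, verifies that $p$ is Pareto optimal using the co-NP oracle of Lemma~\ref{lemma:prob4qual} for Problem~\ref{prob:4qual}, and calls the threshold routine to decide whether there is an SPE with gain profile exactly $p$ (that is, both $\geq p$ and $\leq p$). Since co-NP $\subseteq$ PSPACE and the threshold routine is in PSPACE, both algorithms run in polynomial space.

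For PSPACE-hardness, Problem~\ref{prob:decisionThresholdQual} is PSPACE-hard by~\cite{DBLP:journals/corr/abs-1809-03888}. For Problems~\ref{prob:decisionWelfareQual} and~\ref{prob:decisionParetoQual} I would give polynomial reductions from the PSPACE-complete problem QBF, following the same philosophy as the quantitative reductions (Theorem~\ref{thm:quantSPEWelfare} and the SPE Pareto proposition, cf. Figure~\ref{figure:reachPSPACEh}). Given $\psi=Q_1x_1\ldots Q_mx_m\,\phi(X)$, one builds a game with one player per clause plus an existential player, whose vertex choices encode the assignment of the $\exists$-variables; a vertex $\perp$ is added to the existential player's target set and is reachable from the initial vertex, so that in any SPE the existential player is forced either to certify a satisfying assignment or to escape to $\perp$. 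One then argues that $\psi$ is true if and only if there is an SPE whose outcome makes all clause players win, i.e.\ of maximal social welfare, respectively of Pareto-optimal gain profile. The weight (path length) of the edge to $\perp$ used in the quantitative gadget is irrelevant here, which in fact simplifies the construction in the qualitative setting.

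The main obstacle I expect is the PSPACE-membership of verifying that a guessed lasso is an SPE outcome, namely checking $\lambda^*$-consistency in the exponentially large extended game $\mathcal{X}$: one must recompute, in polynomial space, the labeling $\lambda^*$ (a value-iteration fixpoint) at the vertices visited by the lasso rather than storing all of $\mathcal{X}$. A secondary obstacle is the correctness of the QBF reductions in both directions: as in the quantitative case, the delicate point is the ``$\Leftarrow$'' direction, where one must use the presence of $\perp$ in the existential player's target set to rule out SPEs that fail to certify satisfiability, so that maximal social welfare (respectively a Pareto-optimal profile) can be realized only when $\psi$ is true.
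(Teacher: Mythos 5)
Your proposal follows essentially the same route as the paper's proof (Propositions~\ref{prop:qualSPEThreshold}, \ref{prop:qualSPEWelfare} and \ref{prop:qualSPEPareto}): membership by guessing a gain profile $p\in\{0,1\}^{|\Pi|}$ and invoking the known PSPACE threshold routine of \cite{DBLP:journals/corr/abs-1809-03888} (together with the co-NP oracle of Lemma~\ref{lemma:prob4qual} for Pareto optimality), and hardness by QBF reductions adapted from the quantitative SPE gadget of Figure~\ref{figure:reachPSPACEh} with the unweighted edge to $\perp$. The approach and all key steps coincide with the paper's, so the proposal is correct.
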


\begin{theorem}
	\label{thm:SPEMemoryQual}	
	Let $(\mathcal{G},v_0)$ be a qualitative reachability game, for SPE: for each decision problem, if the answer is positive, there exists a strategy profile $ \sigma$ with memory in $\mathcal{O}(|V|^3\cdot |\Pi| \cdot 2^{3 \cdot |\Pi|})$ which satisfies the conditions.
\end{theorem}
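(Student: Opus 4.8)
The plan is to establish Theorem~\ref{thm:SPEMemoryQual} by a \emph{good symbolic witness} argument, the qualitative counterpart of Proposition~\ref{prop:SPEmemory}, carried out in the extended game $(\mathcal{X},x_0)$ whose vertex set has size $|V^X| = |V| \cdot 2^{|\Pi|}$. First I would observe that all three decision problems reduce, exactly as in the quantitative case (Section~\ref{section:algo}), to deciding the existence of an SPE whose gain profile satisfies a constraint: Problem~\ref{prob:decisionThresholdQual} directly, and Problems~\ref{prob:decisionWelfareQual} and~\ref{prob:decisionParetoQual} by first guessing a gain profile $p \in \{0,1\}^{|\Pi|}$, checking that it meets the required bound (resp.\ is Pareto optimal, using the co-NP oracle of Lemma~\ref{lemma:prob4qual}), and then asking for an SPE with gain profile $\geq p$ (resp.\ $= p$). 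Hence it suffices to prove a single memory bound: whenever an SPE with a prescribed gain profile exists, one with memory $\mathcal{O}(|V|^3 \cdot |\Pi| \cdot 2^{3|\Pi|})$ exists and realizes the same profile.

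Second I would invoke the qualitative version of the outcome characterization (Theorem~\ref{thm:critOutcomeSPE} read with $\Gain$ in place of $\Cost$ and reversed inequalities): the SPE outcomes in $(\mathcal{X},x_0)$ are precisely the $\lambda^*$-consistent plays. I then adapt the notion of good symbolic witness (Definition~\ref{def:Good}) to gains: a family $\mathcal{P} = \{\rho_{i,v,I}\}$ of $\lambda^*$-consistent lassoes, one per index in $\mathcal{I}$, is \emph{good} if along any suffix $\rho$ of some $\rho_{j,u,J}$ branching at a vertex $(v,I) \in V^X_i$ with $i \notin I$ one has $\Gain_i(\rho) \geq \Gain_i(\rho_{i,v',I'})$; because reachability in the extended game is prefix-independent, the ``$1+$'' correction of the quantitative case disappears. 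To build such a witness I would, for each $(i,v,I) \in \mathcal{I}$, pick a $\lambda^*$-consistent play from $(v,I)$ giving player~$i$ his \emph{best achievable} gain and then shrink it with procedures (P1) and (P2): by Lemma~\ref{lem:LassoesConsisQual} $\lambda^*$-consistency is preserved, and by Lemma~\ref{lem:LassoesQual} the gain of every player is preserved while the resulting lasso has length at most $(|\Pi|+1)\cdot|V^X|$. This is the decisive simplification over the quantitative setting: there the lassoes had to record the exact (possibly exponential, cf.\ Lemma~\ref{lemma:maxCost}) cost, whereas here only the \emph{bit} ``reaches $F_i$ or not'' matters, so short lassoes suffice. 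Goodness then follows from the definition of $\lambda^*$ exactly as in the proof of Proposition~\ref{prop:SPEmemory}.

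Third, from a good symbolic witness I would assemble a finite-memory SPE by the standard punishment construction: the players follow $\rho_{0,v_0,I_0}$, and after the most recent one-shot deviation of some player~$j$ into a vertex $(u,J)$ they switch to $\rho_{j,u,J}$; the one-shot deviation principle together with goodness shows this is a very weak SPE, hence an SPE, with the prescribed gain profile. The memory of the resulting machine is the number of lassoes in $\mathcal{P}$ times their maximal length, plus the overhead of storing the current ``active'' index, all polynomial in $|V^X|$; tallying the factors $|V^X| = |V|\cdot 2^{|\Pi|}$ gives the announced $\mathcal{O}(|V|^3 \cdot |\Pi| \cdot 2^{3|\Pi|})$. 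I expect the main obstacle to be the bookkeeping in this last count---verifying that the number of witness lassoes and their common length bound multiply to exactly the stated order, and confirming that the goodness inequality in its reversed, gain-based form correctly rules out every one-shot profitable deviation in every subgame. The remainder is a faithful transcription of the quantitative arguments, simplified throughout by the prefix-independence of the gain.
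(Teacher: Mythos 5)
Your first step coincides with the paper's own proof: Theorem~\ref{thm:SPEMemoryQual} is obtained there by reducing Problems~\ref{prob:decisionWelfareQual} and~\ref{prob:decisionParetoQual} to the threshold problem~\ref{prob:decisionThresholdQual} (Propositions~\ref{prop:qualSPEWelfare} and~\ref{prop:qualSPEPareto}), and the memory bound for the threshold problem is simply imported from \cite{DBLP:journals/corr/abs-1809-03888} (Proposition~\ref{prop:qualSPEThreshold}). You instead re-derive that core bound by transposing the good-symbolic-witness machinery of Proposition~\ref{prop:SPEmemory} to the qualitative extended game. That self-contained route is viable in principle, and your length and counting estimates do fit inside $\mathcal{O}(|V|^3 \cdot |\Pi| \cdot 2^{3|\Pi|})$, but your construction of the witness contains a polarity error that breaks the argument.

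The gap is in the choice of the punishment lassoes. You pick each $\rho_{i,v',I'}$ as a $\lambda^*$-consistent play giving player~$i$ his \emph{best} achievable gain; it must be one giving him his \emph{worst} gain, i.e.\ a play minimizing $\Gain_i$ over $\Lambda^*(v',I')$. This is the correct dual of the quantitative choice $\argmax_{\rho' \in \Lambda^*(v,I)}\{\Cost_i(\rho')\}$: in both settings the witness must be the harshest $\lambda^*$-consistent punishment of the deviator, because the punishing coalition, not the deviator, selects the continuation. Accordingly the qualitative labeling satisfies $\lambda^*(v,I) = \max_{(v',I')} \min \{ \Gain_i(\rho') \mid \rho' \in \Lambda^*(v',I') \}$ (max over successors, min over consistent continuations), so $\lambda^*$-consistency of the followed play only yields $\Gain_i(\rho) \geq \min\{\cdot\}$, never $\Gain_i(\rho) \geq \max\{\cdot\}$, and with your choice the goodness inequality $\Gain_i(\rho) \geq \Gain_i(\rho_{i,v',I'})$ can fail exactly when punishment is needed. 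Concretely, take $v_0 \in V_1$ with edges to an absorbing vertex $a' \notin F_1$ and to $u \in V_2$, and from $u$ edges to absorbing vertices $a \notin F_1$ and $b \in F_1$, with $F_2 = \emptyset$. The play $v_0 {a'}^{\omega}$ is an SPE outcome (player~$2$ punishes a deviation to $u$ by moving to $a$), and both $u a^{\omega}$ and $u b^{\omega}$ are $\lambda^*$-consistent; your rule picks $\rho_{1,u,\emptyset} = u b^{\omega}$, goodness fails ($0 \not\geq 1$), and in the assembled profile player~$1$ profits by a one-shot deviation to $u$, so it is not an SPE. Replacing ``best'' by ``worst'' repairs the proof, since then $\Gain_i(\rho) \geq \lambda^*(v,I) \geq \min\{\Gain_i(\rho') \mid \rho' \in \Lambda^*(v',I')\} = \Gain_i(\rho_{i,v',I'})$. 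A secondary caveat: the qualitative $\lambda^*$-characterization of SPE outcomes that you invoke by ``reversing inequalities'' is nowhere stated in the paper and must itself be taken from \cite{DBLP:journals/corr/abs-1809-03888}, with $\lambda^*$ redefined as above --- which is precisely where the min/max polarity you got wrong is fixed.
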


These two theorems are due to Propositions~\ref{prop:qualSPEThreshold},\ref{prop:qualSPEWelfare} and~\ref{prop:qualSPEPareto}.

\begin{proposition}[\cite{DBLP:journals/corr/abs-1809-03888}]
	\label{prop:qualSPEThreshold}
	Let $(\mathcal{G},v_0)$ be a qualitative reachability game.
	\begin{itemize}
		\item For SPE, Problem~\ref{prob:decisionThresholdQual} is PSPACE-complete.
	\item If the answer of the decision problem is positive, there exists a strategy profile with memory in  $\mathcal{O}(|V|^3\cdot |\Pi| \cdot 2^{3 \cdot |\Pi|})$ which satisfies the constraints.
		%\item The strategy profile which satisfies constraints of~\eqref{item:compConstExistProb3} need at most $ \max_{i\in\Pi} \{ \overline{y_i} \mid \overline{y_i} < +\infty \} + |V| + |\Pi|$ memory.
	\end{itemize}
\end{proposition}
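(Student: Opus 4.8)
The plan is to mirror, in the qualitative setting, the symbolic-witness machinery that the paper already develops for the quantitative case in Proposition~\ref{prop:SPEmemory}. The backbone is the qualitative analogue of the SPE outcome characterization (Theorem~\ref{thm:critOutcomeSPE}): a play in the extended game $(\mathcal{X},x_0)$ is the outcome of an SPE if and only if it is $\lambda^*$-consistent, where $\lambda^*$ is computed over the vertices of $\mathcal{X}$ using the qualitative coalitional games. Since the requirement of Problem~\ref{prob:decisionThresholdQual} constrains only the gain profile, and since applying (P1) then (P2) preserves both the gain profile (Lemma~\ref{lem:LassoesQual}) and $\lambda^*$-consistency (Lemma~\ref{lem:LassoesConsisQual}), it suffices to search for SPEs whose outcome and whose punishing plays are lassoes of bounded length in $\mathcal{X}$.

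For PSPACE-membership, I would first define a good symbolic witness exactly as in Definition~\ref{def:Good} and Figure~\ref{fig:Good}, replacing $\Cost$ by $\Gain$ and reversing the relevant inequalities, and then prove the qualitative analogue of Proposition~\ref{prop:SPEmemory}: an SPE with gain profile in $[x,y]$ exists iff there is a good symbolic witness $\mathcal{P}=\{\rho_{i,v,I}\}$ indexed by $\mathcal{I}\subseteq(\Pi\cup\{0\})\times V\times 2^\Pi$ whose distinguished lasso $\rho_{0,v_0,I_0}$ has a gain profile in $[x,y]$, and in which every lasso is $\lambda^*$-consistent. The point for the complexity bound is that in the qualitative setting these $\lambda^*$-consistent lassoes can be taken of length at most $|V|^2\cdot 2^{2|\Pi|}$, i.e.\ bounded by the square of the number of vertices of $\mathcal{X}$. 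The extended game $\mathcal{X}$ has $\le |V|\cdot 2^{|\Pi|}$ vertices, each describable in polynomial space; hence $\lambda^*$ can be obtained by a fixpoint (value) iteration carried out in space polynomial in $|V|$ and $|\Pi|$, and the existence of the required witness is then checked by a polynomial-space search over succinctly represented lassoes, enumerating and verifying the goodness condition one index at a time rather than materialising the whole (exponentially large) witness.

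For the second bullet I would read off the memory bound from the witness construction, again following the $3\Rightarrow 4$ step of Proposition~\ref{prop:SPEmemory}. The SPE built from a good symbolic witness is realised by a finite-state machine whose memory stores (a) the active deviation scenario, namely the last deviator and the extended-game vertex $(v,I)$ to which he moved, contributing $|\Pi|\cdot|V|\cdot 2^{|\Pi|}$ states, and (b) the current position inside the corresponding lasso, of length $\le |V|^2\cdot 2^{2|\Pi|}$. Multiplying the two factors yields memory in $\mathcal{O}(|V|^3\cdot|\Pi|\cdot 2^{3|\Pi|})$, as claimed. For PSPACE-hardness I would give a reduction from QBF, adapting the SPE threshold construction used for the quantitative setting in \cite{DBLP:journals/corr/abs-1905.00784} and recalled in the proof of Theorem~\ref{thm:quantSPEWelfare}: the quantifier alternation $\exists x_1\forall x_2\cdots$ is encoded by alternating ownership of the variable-assignment vertices between an existential player and the universal players, subgame-perfection forces the universal choices to be worst-case, and the gain-profile threshold is met iff the formula is true.

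The main obstacle is the membership direction, specifically taming the exponential blow-up of $\mathcal{X}$ within polynomial space: one must argue that the goodness condition, which relates each lasso's suffixes to the lassoes attached to its one-step successors, can be certified without holding the entire witness in memory, and one must establish the $|V|^2\cdot 2^{2|\Pi|}$ length bound on $\lambda^*$-consistent lassoes in the qualitative case that makes both the PSPACE procedure and the memory bound of the second bullet go through. The hardness reduction is comparatively routine given the existing QBF constructions in the paper.
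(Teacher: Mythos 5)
First, a point of comparison: the paper does not prove Proposition~\ref{prop:qualSPEThreshold} at all --- it is imported as-is from \cite{DBLP:journals/corr/abs-1809-03888}, the very work from which the paper also borrows the notion of (good) symbolic witness, so there is no in-paper proof to measure your attempt against; what you are really doing is reconstructing the cited paper's argument. Your skeleton is the right one: transposing the quantitative machinery of Proposition~\ref{prop:SPEmemory} to the qualitative setting (replace $\Cost$ by $\Gain$ and reverse the inequality in Definition~\ref{def:Good}, use the qualitative analogue of Theorem~\ref{thm:critOutcomeSPE} on the extended game, and invoke Lemmas~\ref{lem:LassoesQual} and~\ref{lem:LassoesConsisQual} to pass to lassoes) is exactly how this family of results is obtained, your memory accounting (number of deviation scenarios $|\Pi|\cdot|V|\cdot 2^{|\Pi|}$ times an exponential lasso length) reproduces the stated bound $\mathcal{O}(|V|^3\cdot|\Pi|\cdot 2^{3\cdot|\Pi|})$, and the QBF reduction for hardness is indeed routine given the constructions the paper itself sketches for Problems~\ref{prob:decisionWelfareQual} and~\ref{prob:decisionParetoQual}.

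The genuine gap is PSPACE membership, and you essentially concede it yourself by listing it as ``the main obstacle'' rather than resolving it. Two claims are asserted but not established, and neither is routine. \emph{(i)} You claim $\lambda^*$ ``can be obtained by a fixpoint (value) iteration carried out in space polynomial in $|V|$ and $|\Pi|$''. But $\lambda^*$ lives on the extended game $\mathcal{X}$, which has $|V|\cdot 2^{|\Pi|}$ vertices: a value iteration stores one value per extended vertex and may need exponentially many rounds, so the naive computation uses exponential space, while evaluating $\lambda^*(v,I)$ on demand by unfolding the fixpoint recursion yields exponential recursion depth. Circumventing this blow-up is precisely the technical heart of \cite{DBLP:journals/corr/abs-1809-03888} (and of its quantitative counterpart \cite{DBLP:journals/corr/abs-1905.00784}); it cannot be taken for granted. \emph{(ii)} The goodness condition of Definition~\ref{def:Good} is a global constraint coupling each lasso of the witness with the lassoes attached to the successors of every suffix of every other lasso, over a family of up to $|\Pi|\cdot|V|\cdot 2^{|\Pi|}$ lassoes of possibly exponential length. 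Your plan to verify it ``one index at a time'' without materialising the witness is unsound as stated: a nondeterministic polynomial-space machine that re-guesses a lasso each time it is needed may guess different lassoes at different moments, so the universally quantified goodness check is not being performed on a single fixed family. Repairing this requires the witness lassoes to be canonically determined by data of polynomial size (e.g., by $\lambda^*$ via an explicit extremal-play selection, as in the step $2\Rightarrow 3$ of Proposition~\ref{prop:SPEmemory}), which again presupposes point \emph{(i)}. Until both points are carried out, neither the upper bound of the first bullet nor the lasso-length bound $|V|^2\cdot 2^{2|\Pi|}$ underlying the second bullet is actually proved.
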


\begin{proposition}
	\label{prop:qualSPEWelfare}
	Let $(\mathcal{G},v_0)$ be a qualitative reachability game.
	\begin{itemize}
		\item For SPE, Problem~\ref{prob:decisionWelfareQual} is PSPACE-complete.
	\item If the answer of the decision problem is positive, there exists a strategy profile with memory in  $\mathcal{O}(|V|^3\cdot |\Pi| \cdot 2^{3 \cdot |\Pi|})$ which satisfies the constraints.
		%\item The strategy profile which satisfies constraints of~\eqref{item:compConstExistProb3} need at most $ \max_{i\in\Pi} \{ \overline{y_i} \mid \overline{y_i} < +\infty \} + |V| + |\Pi|$ memory.
	\end{itemize}
\end{proposition}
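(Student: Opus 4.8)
The plan is to establish the two claims—PSPACE-completeness and the memory bound—by mirroring the strategy used for the quantitative social welfare problem (Theorem~\ref{thm:quantSPEWelfare}) together with the qualitative threshold result (Proposition~\ref{prop:qualSPEThreshold}). For membership I would reduce Problem~\ref{prob:decisionWelfareQual} to the already-known PSPACE threshold problem for SPEs; for hardness I would reuse a QBF reduction in the same spirit as the quantitative case; and the memory bound will be inherited directly from the threshold procedure of Proposition~\ref{prop:qualSPEThreshold}.

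For PSPACE membership, the algorithm I would give is: (Step 1) guess a gain profile $p \in \{0,1\}^{|\Pi|}$; (Step 2) set $\mathcal{R} = \{i \in \Pi \mid p_i = 1\}$ and check $|\mathcal{R}| \geq k$; (Step 3) invoke the PSPACE procedure of Proposition~\ref{prop:qualSPEThreshold} to decide whether there is an SPE $\sigma$ with $(\Gain_i(\outcome{\sigma}{v_0}))_{i\in\Pi} \geq p$. Correctness rests on the qualitative analogue of Lemma~\ref{lem:NEwelfare}, which is immediate here: if the gain profile of $\sigma$ dominates $p$, then $\SW(\outcome{\sigma}{v_0}) = |\Visit(\outcome{\sigma}{v_0})| \geq |\mathcal{R}| \geq k$; conversely an SPE of welfare at least $k$ supplies a witnessing profile $p$ for the guess. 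Since guessing $p$ and computing $|\mathcal{R}|$ use polynomial space and Step~3 is already in PSPACE, the whole procedure runs in PSPACE (using $\mathrm{NPSPACE} = \mathrm{PSPACE}$).

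For PSPACE-hardness, I would give a polynomial reduction from QBF following the philosophy of the hardness direction of Theorem~\ref{thm:quantSPEWelfare}: starting from the arena used for the SPE threshold reduction, add a sink vertex $\perp$ belonging to $F_{n+1}$ and reachable from the initial existential vertex $q_1$, so that whenever a clause player would be forced toward the losing vertex $T_\ell$, the existential player can still secure his own target by moving to $\perp$. One then argues that the fully quantified formula is true if and only if there exists an SPE whose social welfare is at least $|\Pi|-1$. In the qualitative setting the weight that the quantitative reduction placed on the edge $(q_1,\perp)$ becomes irrelevant, so the construction is in fact a simplification of the quantitative one, and the equivalence can be read off directly from the gain profiles of the finitely many lassoes (Lemma~\ref{lem:LassoesQual}).

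Finally, the memory bound follows for free, since the only SPE produced by the membership algorithm is the one returned by the threshold procedure of Proposition~\ref{prop:qualSPEThreshold}, which already guarantees a strategy profile with memory in $\mathcal{O}(|V|^3 \cdot |\Pi| \cdot 2^{3|\Pi|})$ meeting the dominance constraint $\geq p$, hence meeting $\SW \geq k$. I expect the main obstacle to be the hardness direction: one must check carefully that the $\perp$-gadget makes the existential player's incentive to reach $\perp$ exactly offset the gains of the clause players \emph{in every subgame}, so that an SPE of welfare $|\Pi|-1$ exists precisely when the formula is valid—the delicate point being that the equivalence must survive the subgame-perfection requirement and not merely the Nash requirement.
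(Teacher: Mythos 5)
Your proposal is correct and follows essentially the same route as the paper: the same guess-profile/check-cardinality/invoke-threshold-procedure algorithm for PSPACE membership (justified by the same dominance argument $(\Gain_i(\outcome{\sigma}{v_0}))_{i\in\Pi} \geq p \Rightarrow |\Visit(\outcome{\sigma}{v_0})| \geq |\mathcal{R}| \geq k$), the same QBF reduction with the $\perp$-gadget and welfare threshold $|\Pi|-1$ for hardness (including the observation that the weight on $(q_1,\perp)$ disappears in the qualitative setting), and the same inheritance of the memory bound from Proposition~\ref{prop:qualSPEThreshold}.
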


\begin{proof}
	
	Let $(\mathcal{G},v_0)$ be a qualitative reachability game and let $k \in \{0, \ldots, |\Pi| \}$ be a threshold.
	
	 The PSPACE-algorithm works as follows:\\	
		\textbf{Step 1:} Guess $p \in \{0,1\}^{|\Pi|}$ a gain profile;\\
		\textbf{Step 2:} Let $\mathcal{R} = \{ i \in \Pi \mid p_i = 1 \}$ be the set of players who have a cost equal to 1 in the cost profile $p$, verify if $|\mathcal{R}| \geq k$;\\
		\textbf{Step 3:} Verify that there exists an SPE $\sigma$ in $(\mathcal{G},v_0)$ such that $(\Gain_i(\outcome{\sigma}{v_0}))_{i\in\Pi} \geq p.$\\
	
	As if $(\Gain_i(\outcome{\sigma}{v_0}))_{i\in\Pi} \geq p$ then $|\Visit(\outcome{\sigma}{v_0})| \geq |\mathcal{R}| \geq k$, and thanks to Proposition~\ref{prop:qualSPEThreshold}, this algorithm is a PSPACE-algorithm to solve Problem~\ref{prob:decisionWelfareQual} for SPE. Moreover, if there exists an SPE which fulfills the constraints there exists one with a finite memory which fulfills the constraints too.\\
	
The PSPACE-hardness is due to a polynomial reduction from the QBF problem (which is PSPACE-complete) in the same philosophy than the one for Problem~\ref{prob:decisionWelfare} for SPE in quantitative reachability games: the fully quantified formula is satisfiable if and only if the exists an SPE with a social welfare $\geq |\Pi|-1$. Notice that in the qualitative setting, there is no weight on the edge $(q_1,\perp)$ in the arena depicted in Figure~\ref{figure:reachPSPACEh}.  \qed
\end{proof}

\begin{proposition}
		\label{prop:qualSPEPareto}
	Let $(\mathcal{G},v_0)$ be a qualitative reachability game,
		\begin{itemize}
			\item For SPE, Problem~\ref{prob:decisionParetoQual} is PSPACE-complete.
			\item If the answer of the decision problem is positive, there exists a strategy profile with memory in  $\mathcal{O}(|V|^3\cdot |\Pi| \cdot 2^{3 \cdot |\Pi|})$ which satisfies the constraints.
		\end{itemize}
\end{proposition}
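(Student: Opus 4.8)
The plan is to establish the PSPACE upper bound by reusing the threshold procedure as a subroutine, and then to prove a matching lower bound by a reduction from QBF in the spirit of Proposition~\ref{prop:qualNEPareto} and Proposition~\ref{prop:qualSPEWelfare}. For membership, I would enumerate every candidate gain profile $p \in \{0,1\}^{|\Pi|}$ (each described by $|\Pi|$ bits, hence polynomial space), and for each $p$ perform two checks: first, that $p$ is Pareto optimal in $\Plays(v_0)$, which is a co-NP test by Lemma~\ref{lemma:prob4qual} and thus runs in PSPACE; and second, that there is an SPE $\sigma$ with $(\Gain_i(\outcome{\sigma}{v_0}))_{i\in\Pi}=p$, which I obtain from the PSPACE procedure for Problem~\ref{prob:decisionThresholdQual} (Proposition~\ref{prop:qualSPEThreshold}) by setting both the lower and the upper threshold to $p$. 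Since PSPACE is closed under composition of such queries, the whole algorithm lies in PSPACE, and the memory bound $\mathcal{O}(|V|^3\cdot|\Pi|\cdot 2^{3|\Pi|})$ for the witnessing strategy profile is inherited directly from the SPE produced by Proposition~\ref{prop:qualSPEThreshold}.

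For PSPACE-hardness I would reduce from QBF, reusing the arena of Figure~\ref{figure:reachPSPACEh} in its qualitative form (with no weight on the edge $(q_1,\perp)$, as noted in the proof of Proposition~\ref{prop:qualSPEWelfare}) and placing $\perp$ in the target set of the existential player $n+1$, exactly as in the NE reduction of Proposition~\ref{prop:qualNEPareto}. Write $\psi = Q_1x_1 \ldots Q_m x_m\,\phi(X)$ with $Q_k=\exists$ for odd $k$. The gadget guarantees two structural facts: $\perp$ is reachable only from the initial vertex $q_1$ owned by player $n+1$, and players $n+1$ and $n+2$ are complementary (one wins iff the other loses). Consequently, in any SPE player $n+1$ must win, since otherwise he would profitably deviate at $q_1$ to $\perp$; hence player $n+2$ loses in every SPE.

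I would then argue the equivalence. If $\psi$ is true, the existential player can steer the odd variables so that, against every choice of the universal player on the even variables, the selected valuation satisfies $\phi$; every clause player $1,\ldots,n$ then reaches his target during the variable phase, the clause chain completes to $T_w$, and player $n+1$ wins there. The resulting profile has all of players $1,\ldots,n+1$ winning and only $n+2$ losing, which is Pareto optimal (it cannot be improved without making $n+2$ win, contradicting complementarity with $n+1$), and it is realized by an SPE by the same backward-induction construction used in the threshold reduction of~\cite{DBLP:journals/corr/abs-1905.00784}. Conversely, if $\psi$ is false, then for every existential strategy the universal player forces a falsifying valuation, whence some clause player $i$ must divert at $c_i$ to secure his own target, breaking the chain and denying $T_w$ to player $n+1$; since $\perp$ is no longer reachable after leaving $q_1$, backward induction forces player $n+1$ to take $\perp$ immediately, so the unique SPE outcome is $q_1\perp^\omega$ with profile $(0,\ldots,0,1,0)$. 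This profile is not Pareto optimal, because any non-empty clause $C_i$ admits a satisfying valuation whose play completes the chain to $T_w$, yielding a profile in which both player $n+1$ and clause player $i$ win and which therefore strictly dominates it.

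The main obstacle will be the soundness (``$\Leftarrow$'') direction of the hardness argument, namely showing that falsity of $\psi$ forces every SPE to collapse to the $\perp$-outcome. This requires combining the subgame-perfection constraint, through which the alternation of the quantifiers $Q_k$ simulates the QBF alternation between the existential player $n+1$ and the universal player $n+2$, with the one-shot $\perp$ escape available only at $q_1$; once this is in place, the Pareto layer follows from the complementarity gadget as in Proposition~\ref{prop:qualNEPareto}, so the remaining work is to transport the QBF simulation of~\cite{DBLP:journals/corr/abs-1905.00784} verbatim and to verify the Pareto (non-)optimality of the two extremal profiles.
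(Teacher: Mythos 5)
Your proposal matches the paper's proof in both parts: the upper bound is obtained exactly as in the paper by iterating over gain profiles $p\in\{0,1\}^{|\Pi|}$, testing Pareto optimality with the co-NP oracle of Lemma~\ref{lemma:prob4qual}, and invoking the PSPACE threshold procedure of Proposition~\ref{prop:qualSPEThreshold} with lower and upper thresholds both equal to $p$ (which also yields the memory bound); the lower bound is the same QBF reduction on the unweighted qualitative version of the arena of Figure~\ref{figure:reachPSPACEh}, using the $\perp$-escape at $q_1$ and the complementarity of players $n+1$ and $n+2$, with the core QBF simulation deferred to~\cite{DBLP:journals/corr/abs-1905.00784}. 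Your ``$\Leftarrow$'' direction is phrased contrapositively (falsity of $\psi$ collapses every SPE to $q_1\perp^\omega$, which is not Pareto optimal) while the paper rules out the Pareto optimal profile $(g_1,\ldots,g_n,0,1)$ as an SPE profile and then argues on $(g_1,\ldots,g_n,1,0)$, but these are the same argument at the same level of detail.
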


\begin{proof}
	We can provide the following PSPACE-algorithm for Problem~\ref{prob:decisionParetoQual}, given an oracle for Problem~\ref{prob:4qual}:\\
	 \textbf{Step 1:} Guess a payoff $p \in \{0,1\}^{|\Pi|}$.\\
	 \textbf{Step 2:} Check that $p$ is Pareto optimal in $\Plays(v_0)$ using the oracle for Problem~\ref{prob:4qual}.\\
	 \textbf{Step 3:} Check if there exists an SPE $\sigma$, such that $p \leq (\Gain_i(\outcome{\sigma}{v_0}))_{i\in\Pi} \leq p$.\\
	
	By Lemma~\ref{lemma:prob4qual} and Proposition~\ref{prop:qualSPEThreshold}, Step 2 is done in co-NP  and Step 3 is done in PSPACE. It leads to a PSPACE-algorithm.

The PSPACE-hardness is due to a polynomial reduction from the QBF problem (which is PSPACE-complete) in the same philosophy than the one for Problem~\ref{prob:decisionWelfare} for SPE in quantitative reachability games: the fully quantified formula is satisfiable if and only if the exists an SPE with a gain profile which is Pareto optimal. Notice that in the qualitative setting, there is no weight on the edge $(q_1,\perp)$ in the arena depicted in Figure~\ref{figure:reachPSPACEh}.

 For the implication ``$\Leftarrow$'', we assume that we have an SPE $\sigma$ such that $(\Gain_i(\outcome{\sigma}{q_1}))_{i\in \Pi}$ is Pareto optimal in $\Plays(q_1)$  and we want to prove that the fully quantified formula $\psi$ is satisfied. In the game $\mathcal{G}_{\psi}$, there is only two Pareto optimal gain profile: $(g_1,\ldots,g_n,1,0)$ and $(g_1,\ldots,g_n,0,1)$ where $g_i = 1$ if the clause $C_i$ is satisfiable in $\psi$ and $g_i=0$ otherwise. One can prove that the gain profile $(g_1,\ldots,g_n,0,1)$ cannot be the gain profile of an SPE as in this gain profile Player $n+1$ has a gain of $0$ and Player $n+1$ can chose to go in $\perp$ to ensure a gain of $1$. Thus, we know that $\sigma$ is an SPE with gain profile $(g_1,\ldots,g_n,1,0)$ and in particular such that Player $n+1$ visits his target set. Then, the same kind of arguments used in the proof of Problem~\ref{prob:decisionThreshold} (see~\cite{DBLP:journals/corr/abs-1905.00784}) can be used. \qed
\end{proof}

\end{document}